\documentclass[a4paper, 12pt]{article}

\usepackage{fullpage}
\usepackage{amsthm}
\usepackage{amsmath}
\usepackage{amssymb}

\usepackage{algorithm}
\usepackage{algorithmic}

\theoremstyle{plain}
\newtheorem{theorem}{Theorem}
\newtheorem{lemma}[theorem]{Lemma}
\newtheorem{corollary}			[theorem]	{Corollary}
\newtheorem{proposition}		[theorem]	{Proposition}
\theoremstyle{definition}
\newtheorem{definition}[theorem]{Definition}
\newtheorem{remark}[theorem]{Remark}

\DeclareMathOperator{\ISOTYPE}{ISOTYPE}

	\usepackage[utf8]{inputenc}
	\usepackage[T1]{fontenc}
	\usepackage[ngerman,american]{babel}
	\usepackage[usenames,dvipsnames,svgnames,table]{xcolor}

	\usepackage{amsmath,enumitem,leftidx,framed,graphicx,clrscode,parskip,changepage,tikz,float}

	\usepackage{hyperref}
	\usepackage{graphicx}
		\graphicspath{ {./pictures/} }
		
		\hypersetup{colorlinks, linkcolor=teal, citecolor=teal, urlcolor=teal}
	\usepackage{amssymb}

		\makeatletter
			\def\thm@space@setup{%
			  \thm@preskip=\parskip \thm@postskip=0pt
			}
		\makeatother

\renewenvironment{proof}[1][\proofname]{{\gdef\QED{\qed\gdef\QED{}}\noindent\bfseries #1:} }{\QED}

\usepackage{xspace}

\usepackage{mathtools}
\usepackage{subfig}
\usepackage{enumitem}

\usetikzlibrary{decorations.markings}
\usetikzlibrary{shapes.geometric}
\usetikzlibrary{calc}

\pgfdeclarelayer{edgelayer}
\pgfdeclarelayer{nodelayer}
\pgfsetlayers{edgelayer,nodelayer,main}
\tikzstyle{rn}=[circle,fill=white,draw=black,line width=0.6 pt]
\tikzstyle{standardedge}=[-,draw=black,line width=2.000]
\definecolor{lightlightgray}{gray}{0.95}

\usepackage{etex,etoolbox}

\usepackage[normalem]{ulem}

\newcommand{\uend}{\hfill$\lrcorner$}

\newcounter{claimcounter}
\newenvironment{claim}[1][]{
  \renewcommand{\proof}{\smallskip\par\noindent\textit{Proof of the claim. }}
  \medskip\par\noindent%
  \ifthenelse{\equal{#1}{}}{%
    \setcounter{claimcounter}{0}\refstepcounter{claimcounter}\textit{Claim~\arabic{claimcounter}.}
  }{%
    \ifthenelse{\equal{#1}{resume}}{%
      \refstepcounter{claimcounter}\textit{Claim~\arabic{claimcounter}.}
    }{%
      \textit{Claim~#1.}
    }
  }
}{
  \par\medskip
}

\newlist{proofcases}{enumerate}{2}
\SetEnumitemKey{noindent}{leftmargin=0pt,itemindent=0em,labelwidth=\itemindent-\labelsep,align=left}
\SetEnumitemKey{nospace}{topsep=5pt,itemsep=5pt}
\SetEnumitemValue{before}{linebreak}{\needspace{1\baselineskip}\leavevmode}
\newcommand\outerCase{\arabic} 
\newcommand\innerCase{\arabic} 
\setlist[proofcases]{noindent,nospace}
\setlist[proofcases,1]{label=\textit{(Case \outerCase*\protect\thiscase) \ },ref=\outerCase*}
\setlist[proofcases,2]{label=\textit{(Case \outerCase{proofcasesi}.\innerCase*\protect\thiscase) \ },ref=\outerCase{proofcasesi}.\innerCase*}

\newcommand{\proofcase}[1][]{%
  \if\relax\detokenize{#1}\relax
    \def\thiscase{.}%
  \else
    \def\thiscase{: \ #1.}%
  \fi
  \item}

\begin{document}

\title{The Weisfeiler-Leman Dimension of Planar Graphs is at most~3}

\author{Sandra Kiefer\\
RWTH Aachen University\\
kiefer@informatik.rwth-aachen.de
\and
Ilia Ponomarenko \\
Petersburg Department of V.~A.~Steklov 
Institute of Mathematics\\
inp@pdmi.ras.ru
\and Pascal Schweitzer\\
RWTH Aachen University \\
schweitzer@informatik.rwth-aachen.de}

\maketitle

\insert\footins{\noindent \footnotesize This is an extended version of the paper with the same title published in the Proceedings of the 32nd Annual ACM/IEEE Symposium on Logic in Computer Science \cite{8005107}.}

\begin{abstract}
We prove that the Weisfeiler-Leman (WL) dimension of the class of all finite planar graphs is at most~$3$. In particular, every finite planar graph is definable in first-order logic with counting using at most~$4$ variables. The previously best known upper bounds for the dimension and number of variables were~$14$ and~$15$, respectively.

First we show that, for dimension~3 and higher, the WL-algorithm correctly tests isomorphism of graphs in a minor-closed class whenever it determines the orbits of the automorphism group of any arc-colored~$3$-connected graph belonging to this class.

Then we prove that, apart from several exceptional graphs (which have WL-dimension at most~2), the individualization of two correctly chosen vertices of a colored~3-connected planar graph followed by the~1-dimensional WL-algorithm produces the discrete vertex partition. This implies that the~3-dimensional WL-algorithm determines the orbits of a colored~$3$-connected planar graph.

As a byproduct of the proof, we get a classification of the~3-connected planar graphs with fixing number~$3$.
\end{abstract}
\bigskip


\section{Introduction}

The Weisfeiler-Leman algorithm (WL-algorithm) is a fundamental algorithm used as a subroutine in graph isomorphism testing. More precisely, it constitutes a family of algorithms. For every positive integer~$k$ there is a~$k$-dimensional version of the algorithm that colors all~$k$-tuples of vertices in two given undirected input graphs and iteratively refines the color classes based on information of previously obtained colors. 

The algorithm has surprisingly strong links to notions that seem unrelated at first sight. For example, there is a precise correspondence to Sherali-Adams relaxations of certain linear programs
\cite{DBLP:journals/siamcomp/AtseriasM13,DBLP:conf/csl/GroheO12}, there are duplicator-spoiler games capturing the same information as the algorithm~\cite{DBLP:journals/combinatorica/CaiFI92}, it is related to separability of coherent configurations~\cite{DBLP:journals/combinatorics/EvdokimovP00}, and there is a close correspondence between the algorithm and first-order logic with counting ($C^k$). More precisely, for two graphs~$G$ and~$G'$, if the integer~$k$ is the smallest dimension such that the~$k$-dimensional WL-algorithm distinguishes two graphs, then~$k+1$ is the smallest number of variables of a sentence in first-order logic with counting distinguishing the two graphs~\cite{DBLP:journals/combinatorica/CaiFI92}. 

Exploiting these correspondences, the seminal construction of Cai, F\"{u}rer and Immerman~\cite{DBLP:journals/combinatorica/CaiFI92} shows that there are examples of pairs of graphs on~$n$ vertices for which a dimension of~$\Omega(n)$ is required for the WL-algorithm to distinguish the two graphs.

However, for various graph classes a bounded dimension suffices to distinguish every two non-isomorphic graphs from each other. While typically not very practical due to large memory consumption, this yields a polynomial-time algorithm to test isomorphism of graphs from such a class.
 In a tour de force, Grohe shows that for all graph classes with excluded minors a bounded dimension of the WL-algorithm suffices to decide graph isomorphism~\cite{DBLP:journals/jacm/Grohe12,grohe_2017}.
 
An ingredient in Grohe's proof deals with the special case of the class of planar graphs, for which a bound on the necessary dimension of the WL-algorithm had been proven earlier by Grohe separately~\cite{DBLP:conf/lics/Grohe98}. Thus, there is a~$k$ such that the~$k$-dimensional WL-algorithm distinguishes every two non-isomorphic planar graphs from each other. In his Master's thesis~\cite{Redies} (see~\cite[Subsection~18.4.4]{grohe_2017}), Redies analyses Grohe's proof showing that this~$k$ can be chosen to be~14. For~3-connected planar graphs it was also shown earlier by Verbitsky~\cite{DBLP:conf/stacs/Verbitsky07} that one can additionally require the quantifier rank to be logarithmic. Feeling that this is far from optimal, Grohe asked in his book~\cite[Subsection~18.4.4]{grohe_2017} and also at the~2015 Dagstuhl meeting on the graph isomorphism problem~\cite{DBLP:journals/dagstuhl-reports/BabaiDST15} for a tight~$k$.
In this paper we show that~$k=3$ is sufficient.

We say that a graph is \emph{identified} by the~$k$-dimensional WL-algorithm if the algorithm distinguishes the graph from all other graphs.
Following Grohe~\cite[Definition~18.4.3]{grohe_2017}, we say that a graph class~$\mathcal{C}$ has \emph{WL-dimension}~$k$ if~$k$ is the smallest integer such that all graphs in~$\mathcal{C}$ are identified by the~$k$-dimensional WL-algorithm. With this terminology our main theorem reads as follows.

\begin{theorem}\label{thm:main}
 The Weisfeiler-Leman dimension of the class of planar graphs is at most~3.\\
(Equivalently, every planar graph is definable by a sentence in first-order logic with counting that uses only~4 variables.)
 \end{theorem}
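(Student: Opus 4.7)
The plan is to follow the two-step strategy outlined in the abstract. First I would invoke the reduction principle stated there as the first ingredient: for $k \geq 3$, the $k$-WL algorithm correctly tests isomorphism on a minor-closed class whenever it determines the orbits of every arc-colored 3-connected graph in that class. Since planar graphs are minor-closed, this reduces Theorem~\ref{thm:main} to showing that 3-WL determines the orbits of the automorphism group of every arc-colored 3-connected planar graph. The arc-coloring here is the means by which 3-WL transports, via the decomposition of a planar graph into its 3-connected components (e.g.\ an SPQR-tree), the information gathered inductively along the tree of such components.

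For the 3-connected case I would aim at a stronger statement: apart from a finite list of exceptional graphs, every vertex- and arc-colored 3-connected planar graph admits two vertices whose individualization followed by 1-dimensional WL refinement (ordinary color refinement) yields the discrete partition. Since 3-WL can simulate the individualization of two vertices plus color refinement, this immediately implies that 3-WL determines the orbits of the graph, which is what is required by the first step. The exceptional graphs---those on which this strategy fails because of excessive symmetry---must be treated separately, and the claim to aim for is that they are all identified already by 2-WL, so that they pose no problem at dimension~$3$.

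The main engine for the non-exceptional case is rigidity of the planar embedding. By Whitney's theorem, a 3-connected planar graph has an essentially unique embedding in the sphere, so the rotation system and the set of facial walks are combinatorial invariants that color refinement can exploit once a small amount of local information is fixed. After individualizing two appropriately chosen vertices $u, v$, the plan is to propagate uniqueness of colors face by face: once two adjacent vertices on a face have distinct colors, the remaining vertices of that face are distinguished by the colors already propagated along neighboring faces, and this propagation can be carried around the sphere via the dual graph until the whole vertex set becomes discretely colored. Choosing $u$ and $v$ well---typically a pair that breaks all nontrivial embedding-preserving symmetries---is the crux of this step.

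The hard part will be the exhaustive case analysis needed to pin down the exceptional 3-connected planar graphs, i.e.\ those whose stabilizer of any pair of vertices remains nontrivial. These are exactly the 3-connected planar graphs with fixing number at least $3$, and the classification of them is announced in the abstract as a byproduct. Using Whitney's theorem, automorphisms are identified with sphere symmetries of the embedding, which restricts the candidates to graphs with polyhedral symmetry groups---Platonic solids, prisms, antiprisms, and a handful of small highly symmetric graphs. For each such graph one verifies directly that 2-WL already identifies it, which handles the exceptions. Combining the classification with the face-propagation argument above then closes the reduction and establishes Theorem~\ref{thm:main}.
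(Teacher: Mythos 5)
Your high-level plan matches the paper's: reduce to orbits of arc-colored 3-connected planar graphs, show that individualizing two vertices plus color refinement discretizes all but a list of exceptions, and handle those exceptions separately. However, the crucial technical engine you propose does not work as stated, and the exception classification you sketch is wrong in its details.

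The key missing idea is how to actually prove that color refinement after fixing a few vertices produces the discrete partition. Your ``face-propagation via the dual graph'' argument is not a proof: 1-dimensional WL has no direct access to faces, facial walks, or the dual graph, and there is no reason that ``once two adjacent vertices on a face have distinct colors, the remaining vertices of that face are distinguished.'' A $3$-connected planar graph can have long faces on which two individualized vertices leave the remainder symmetric; more importantly, a propagation argument phrased in terms of the embedding does not translate into the purely combinatorial refinement that 1-WL performs. The paper instead proves first (Lemma~\ref{lem:3:sing:on:face:implies:discrete}) that individualizing \emph{three} vertices on a common face yields discreteness, and the engine there is Tutte's Spring Embedding Theorem: the iterated barycentric averaging $\mu_{i+1}(v)=\tfrac{1}{d(v)}\sum_{w\in N(v)}\mu_i(w)$ converges to an injective embedding, and one shows by induction that $\mu_i(v)\neq\mu_i(v')$ forces the WL-colors to differ. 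This is a genuinely different, non-obvious idea and is the heart of the argument. Reducing from three individualized vertices down to two is then a long, degree-by-degree case analysis (Lemmas~\ref{tri:con:2:individ:degree:5}--\ref{tri:con:2:individ:degree:4}), not a consequence of Whitney's theorem or spherical symmetry.

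Your characterization of the exceptions is also off: they are not ``a finite list'' and are not the Platonic solids, prisms, and antiprisms. The actual exceptions (Figure~\ref{fig:3:con:plan:gra:fix:3}, Table~\ref{t1}) are the $n$-bipyramids for all $n\ge3$ (an infinite family), the tetrahedron, cube, icosahedron, octahedron, triakis tetrahedron, triakis octahedron, tetrakis hexahedron, and rhombic dodecahedron; the dodecahedron is not on the list, and prisms and antiprisms other than the cube and octahedron are also not on it. Whitney's theorem restricts automorphisms to combinatorial map automorphisms but does not by itself narrow the candidates to a manageable family; the paper's enumeration proceeds by local degree analysis and saturation arguments (Lemma~\ref{lem:saturation:implies:face:cycle}, Lemma~\ref{lem:consecutive:neighbors:have:common:neighbor}), not by classifying polyhedral symmetry groups. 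Finally, note that the paper handles the exceptions by direct verification that 3-WL determines their orbits, not by invoking a 2-WL identification claim; while the abstract remarks that the exceptions have WL-dimension at most~2, what the proof actually needs and checks (Corollary~\ref{cor:3:con:planar:have:WL:dim:3}) is the orbit-determination property at dimension~3 for the arc-colored versions.
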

 
Our proof is separated into two parts. The first part (Sections~\ref{sec:decomps}--\ref{sec:reduct:to:3:con}) constitutes a reduction from general graphs to~$3$-connected graphs, while the second part (Section~\ref{sec:3:con}) handles~3-connected planar graphs.

In the first part, which does not only concern planar graphs, we start by showing that for a hereditary graph class~$\mathcal{G}$, 
for~$k \geq 2$, if the~$k$-dimensional WL-algorithm distinguishes every two vertex-colored non-isomorphic~2-connected graphs from each other, then it distinguishes every two non-isomorphic graphs in~$\mathcal{G}$
(Theorem~\ref{thm:reduction2}).

While it is tempting to believe that when requiring~$k\geq 3$ a similar statement can be made about~3-connected graphs, we need the additional assumption that~$\mathcal{G}$ is minor-closed. In fact, our proof also needs a more technical requirement that the WL-algorithm correctly determines the vertex orbits.

We can then argue that if~$\mathcal{G}$ is additionally minor-closed, for~$k \geq 3$, if the~$k$-dimensional WL-algorithm correctly determines orbits on all arc-colored~$3$-connected graphs in~$\mathcal{G}$, then it distinguishes every two non-isomorphic graphs in~$\mathcal{G}$ from each other (Theorem~\ref{thm:reduction3}).

To prove these two reductions, we employ several structural observations on decomposition trees. These allow us to cut off isomorphism-invariantly the leaves of the decomposition trees of~2- and~3-connected components, respectively. This can be done implicitly without having to explicitly construct the corresponding decomposition trees (see Section~\ref{sec:decomps}).

In the second part we show that the~3-dimensional WL-algorithm identifies all (arc-colored)~3-connected planar graphs. More precisely, we argue that orbits are determined, as required by our reduction. 
In fact, we show a stronger statement in that we do not need the full power of the~3-dimensional WL-algorithm. Using Tutte's Spring Embedding Theorem \cite{MR0158387}, we argue that if in an arc-colored~3-connected planar graph there are three vertices each with a unique color (so-called singletons) that share a common face then applying the~1-dimensional WL-algorithm (usually called color refinement) yields a coloring of the graph in which all vertices are singletons. Since this would only give us a bound of~$k=4$, we show then that in most~3-connected planar graphs it suffices to individualize~2 vertices to get the same result.
Our proof actually characterizes the exceptions, the graphs in which we need to individualize~3 vertices. We can handle these graphs separately to finish our proof. 

The fixing number of a graph~$G$ is the minimum size of a set of vertices~$S$ such that the only automorphism that fixes~$S$ pointwise is the identity. It follows from generally known facts that the fixing number of a~3-connected planar graph is at most~3. Our proof however shows that the only~3-connected planar graphs with fixing number~3 are those depicted in Figure~\ref{fig:3:con:plan:gra:fix:3} (see also~Corollary~\ref{cor:fix:3:char}). The properties of these graphs are summarized in Table~\ref{t1}.

\newcommand{\innerscalar}{0.35}
\newcommand{\outerscalar}{1.1}
\newcommand{\middlescalar}{0.7}
\newcommand{\outvertexh}{1.7}
\newcommand{\outvertexfirstbez}{1.5}
\tikzstyle{normalvertex}=[circle,fill=white,draw=black, inner sep = 2.5]

\begin{figure}[htb]
\centering
\captionsetup[subfloat]{labelformat=empty, justification=Centering}

\newcommand{\wheelsascalar}{1.2}
\subfloat[bipyramid\label{subfig-1}]{%
 \begin{tikzpicture}
 
 \node[style=normalvertex,label=right:{}] (v-1) at (-40:\wheelsascalar cm) {};
  \node[style=normalvertex,label=right:{}] (v0) at (0:\wheelsascalar cm) {};
  \node[style=normalvertex,label=right:{}] (v1) at (40:\wheelsascalar cm) {};
  \node[style=normalvertex,label=right:{}] (v2) at (80:\wheelsascalar cm) {};
  \node[style=normalvertex,label=right:{}] (v3) at (120:\wheelsascalar cm) {};
  \node[style=normalvertex,label=right:{}] (v4) at (160:\wheelsascalar cm) {};
  \node[style=normalvertex,label=right:{}] (v5) at (200:\wheelsascalar cm) {};
  \node[style=normalvertex,label=right:{}] (v6) at (240:\wheelsascalar cm) {};
 
  \node[style=normalvertex,label=right:{}] (cent) at (0,0) {};
  \node[style=normalvertex,label=right:{}] (out) at (-0.1,2.25) {};
	 	 
  \foreach \x  in {-1,0,...,6}
  {
   \path (cent) edge (v\x);
  }
  \foreach \x/\y in {-1/0,0/1,1/2,2/3,3/4,4/5,5/6}
  {
   \path (v\x) edge (v\y);
  }

\draw[fill]  (267:1.0 cm) ellipse (0.06 and 0.06);
\draw[fill]  (280:1.0 cm) ellipse (0.06 and 0.06);
\draw[fill]  (293:1.0 cm) ellipse (0.06 and 0.06);

\draw (out) .. controls (1.5,1.5) and (1.75,1) .. (v0);
	 \path  (out) edge (v1) (out) edge (v2) (out) edge (v3);
	 \draw (out) .. controls (-1,1.5) and (-1.5,1) .. (v4);	
	 \draw (out) .. controls (-2,1.5) and (-2,0) .. (v5);	 
	  \draw (out) .. controls (-3,2) and (-3,-1) .. (v6);	 
	   \draw (out) .. controls (2.5,2) and (2.5,-0.5) .. (v-1);	 
\end{tikzpicture}
}
\newcommand{\tetrasascalar}{1.5}
\newcommand{\tetrasascalartwo}{0.4}
\subfloat[tetrahedron\label{subfig-2}]{%
 \begin{tikzpicture}
 
  \node[style=normalvertex,label=right:{}] (r) at (-30:\tetrasascalar cm) {};
  \node[style=normalvertex,label=right:{}] (u) at (90:\tetrasascalar cm) {};
  \node[style=normalvertex,label=right:{}] (l) at (-150:\tetrasascalar cm) {};
 
  \node[style=normalvertex,label=right:{}] (cent) at (0,0) {};

  \path
   (r) edge (u) (u) edge (l) (l) edge (r)
   (r) edge (cent) (l) edge (cent) (u) edge (cent);
\end{tikzpicture}
}
\subfloat[triakis tetrahedron\label{subfig-3}]{%
 \begin{tikzpicture}

  \node[style=normalvertex,label=right:{}] (r) at (-30:\tetrasascalar cm) {};
  \node[style=normalvertex,label=right:{}] (u) at (90:\tetrasascalar cm) {};
  \node[style=normalvertex,label=right:{}] (l) at (-150:\tetrasascalar cm) {};
  
  \node[style=normalvertex,label=right:{}] (d2) at (-90:\tetrasascalartwo cm) {};
  \node[style=normalvertex,label=right:{}] (ru2) at (30:\tetrasascalartwo  cm) {};
  \node[style=normalvertex,label=right:{}] (lu2) at (150:\tetrasascalartwo cm) {};
 
  \node[style=normalvertex,label=right:{}] (cent) at (0,0) {};
  \node[style=normalvertex,label=right:{}] (out) at (0,2) {};

  \path
   (r) edge (u) (u) edge (l) (l) edge (r)
   (r) edge (cent) (l) edge (cent) (u) edge (cent)
      
   (lu2) edge (u) (lu2) edge (l) (lu2) edge (cent)
   (ru2) edge (u) (ru2) edge (r) (ru2) edge (cent)
   (d2) edge (l) (d2) edge (r) (d2) edge (cent)
  ;
	 \draw (out) .. controls (-1.5,1.75) and (-1.5,-0.3) .. (l);
	 \draw (out) .. controls (1.5,1.75) and (1.5,-0.3) .. (r);	 
	 \path (out) edge (u);
\end{tikzpicture}
}

\subfloat[cube (hexahedron)\label{subfig-4}]{%
	\begin{tikzpicture}
	  \node[style=normalvertex,label=right:{}] (ld) at (-\innerscalar,-\innerscalar) {};
	  \node[style=normalvertex,label=right:{}] (lu) at (-\innerscalar,\innerscalar) {};
	  \node[style=normalvertex,label=right:{}] (rd) at (\innerscalar,-\innerscalar) {};
	  \node[style=normalvertex,label=right:{}] (ru) at (\innerscalar,\innerscalar) {};
	
	 \node[style=normalvertex,label=right:{}] (ld2) at (-\outerscalar,-\outerscalar) {};
	  \node[style=normalvertex,label=right:{}] (lu2) at (-\outerscalar,\outerscalar) {};
	  \node[style=normalvertex,label=right:{}] (rd2) at (\outerscalar,-\outerscalar) {};
	  \node[style=normalvertex,label=right:{}] (ru2) at (\outerscalar,\outerscalar) {};
	 
	  \path
	   (lu) edge (ld) (ld) edge (rd) (rd) edge (ru)  (ru) edge (lu)
	   
	   (lu2) edge (ld2) (ld2) edge (rd2) (rd2) edge (ru2) (ru2) edge (lu2)
	  
	   (lu) edge (lu2) (ld) edge (ld2) (ru) edge (ru2) (rd) edge (rd2)  
	   ;

	\end{tikzpicture}
}
\subfloat[tetrakis hexahedron\label{subfig-5}]{%
	\begin{tikzpicture}
	  \node[style=normalvertex,label=right:{}] (ld) at (-\innerscalar,-\innerscalar) {};
	  \node[style=normalvertex,label=right:{}] (lu) at (-\innerscalar,\innerscalar) {};
	  \node[style=normalvertex,label=right:{}] (rd) at (\innerscalar,-\innerscalar) {};
	  \node[style=normalvertex,label=right:{}] (ru) at (\innerscalar,\innerscalar) {};
	
	 \node[style=normalvertex,label=right:{}] (ld2) at (-\outerscalar,-\outerscalar) {};
	  \node[style=normalvertex,label=right:{}] (lu2) at (-\outerscalar,\outerscalar) {};
	  \node[style=normalvertex,label=right:{}] (rd2) at (\outerscalar,-\outerscalar) {};
	  \node[style=normalvertex,label=right:{}] (ru2) at (\outerscalar,\outerscalar) {};
	 
	 \node[style=normalvertex,label=right:{}] (ju) at (0,\middlescalar) {};
	 \node[style=normalvertex,label=right:{}] (jl) at (-\middlescalar,0) {};
	 \node[style=normalvertex,label=right:{}] (jd) at (0,-\middlescalar) {};
	 \node[style=normalvertex,label=right:{}] (jr) at (\middlescalar,0) {};
	 
	 \node[style=normalvertex,label=right:{}] (out) at (0,\outvertexh) {};
	 
	 \node[style=normalvertex,label=right:{}] (cent) at (0,0) {};
	 
	  \path
	   (lu) edge (ld) (ld) edge (rd) (rd) edge (ru)  (ru) edge (lu)
	   
	   (lu2) edge (ld2) (ld2) edge (rd2) (rd2) edge (ru2) (ru2) edge (lu2)
	  
	   (lu) edge (lu2) (ld) edge (ld2) (ru) edge (ru2) (rd) edge (rd2) 
	   
	   (lu) edge (ju) (ru) edge (ju) (lu2) edge (ju) (ru2) edge (ju)
	   (ld) edge (jd) (rd) edge (jd) (ld2) edge (jd) (rd2) edge (jd)  
	   (ld) edge (jl) (lu) edge (jl) (ld2) edge (jl) (lu2) edge (jl)  
	   (rd) edge (jr) (ru) edge (jr) (rd2) edge (jr) (ru2) edge (jr)  
	   
	   (ru) edge (cent) (rd) edge (cent) (lu) edge (cent) (ld) edge (cent)  
	   ;
	   
 \draw (out) .. controls (-1,1.5) and (-1.1,1.2) .. (lu2);
 \draw (out) .. controls (1,1.5) and (1.1,1.2) .. (ru2);
 \draw (out) .. controls (-2,2.5) and (-2,-1.1) .. (ld2);
 \draw (out) .. controls (2,2.5) and (2,-1.1) .. (rd2);	   
	   
	\end{tikzpicture}
}
\subfloat[rhombic dodecahedron\label{subfig-6}]{%
	\begin{tikzpicture}
	  \node[style=normalvertex,label=right:{}] (ld) at (-\innerscalar,-\innerscalar) {};
	  \node[style=normalvertex,label=right:{}] (lu) at (-\innerscalar,\innerscalar) {};
	  \node[style=normalvertex,label=right:{}] (rd) at (\innerscalar,-\innerscalar) {};
	  \node[style=normalvertex,label=right:{}] (ru) at (\innerscalar,\innerscalar) {};
	
	 \node[style=normalvertex,label=right:{}] (ld2) at (-\outerscalar,-\outerscalar) {};
	  \node[style=normalvertex,label=right:{}] (lu2) at (-\outerscalar,\outerscalar) {};
	  \node[style=normalvertex,label=right:{}] (rd2) at (\outerscalar,-\outerscalar) {};
	  \node[style=normalvertex,label=right:{}] (ru2) at (\outerscalar,\outerscalar) {};
	 
	 \node[style=normalvertex,label=right:{}] (ju) at (0,\middlescalar) {};
	 \node[style=normalvertex,label=right:{}] (jl) at (-\middlescalar,0) {};
	 \node[style=normalvertex,label=right:{}] (jd) at (0,-\middlescalar) {};
	 \node[style=normalvertex,label=right:{}] (jr) at (\middlescalar,0) {};
	 
	 \node[style=normalvertex,label=right:{}] (out) at (0,\outvertexh) {};
	 
	 \node[style=normalvertex,label=right:{}] (cent) at (0,0) {};
	 
	  \path
	   
	   (lu) edge (ju) (ru) edge (ju) (lu2) edge (ju) (ru2) edge (ju)
	   (ld) edge (jd) (rd) edge (jd) (ld2) edge (jd) (rd2) edge (jd)  
	   (ld) edge (jl) (lu) edge (jl) (ld2) edge (jl) (lu2) edge (jl)  
	   (rd) edge (jr) (ru) edge (jr) (rd2) edge (jr) (ru2) edge (jr)  
	   
	   (ru) edge (cent) (rd) edge (cent) (lu) edge (cent) (ld) edge (cent)  
	   ;
	   
 \draw (out) .. controls (-1,1.5) and (-1.1,1.2) .. (lu2);
 \draw (out) .. controls (1,1.5) and (1.1,1.2) .. (ru2);
 \draw (out) .. controls (-2,2.5) and (-2,-1.1) .. (ld2);
 \draw (out) .. controls (2,2.5) and (2,-1.1) .. (rd2);	 		 
	\end{tikzpicture}
}
\newcommand{\icosascalar}{2.1}
\newcommand{\icosascalartwo}{0.7}
\newcommand{\icosascalarthree}{0.25}
\subfloat[icosahedron\label{subfig-7}]{%
 \begin{tikzpicture}

  \node[style=normalvertex,label=right:{}] (r) at (-30:\icosascalar cm) {};
  \node[style=normalvertex,label=right:{}] (u) at (90:\icosascalar cm) {};
  \node[style=normalvertex,label=right:{}] (l) at (-150:\icosascalar cm) {};
  
  \node[style=normalvertex,label=right:{}] (d2) at (-90:\icosascalartwo cm) {};
  \node[style=normalvertex,label=right:{}] (u2) at (90:\icosascalartwo cm) {};
  \node[style=normalvertex,label=right:{}] (ru2) at (30:\icosascalartwo cm) {};
  \node[style=normalvertex,label=right:{}] (rd2) at (-30:\icosascalartwo cm) {};
  \node[style=normalvertex,label=right:{}] (lu2) at (150:\icosascalartwo cm) {};
  \node[style=normalvertex,label=right:{}] (ld2) at (-150:\icosascalartwo cm) {};
  
  \node[style=normalvertex,label=right:{}] (r3) at (30:\icosascalarthree cm) {};
  \node[style=normalvertex,label=right:{}] (d3) at (-90:\icosascalarthree cm) {};
  \node[style=normalvertex,label=right:{}] (l3) at (150:\icosascalarthree cm) {};
  
  \path
   (r) edge (u) (u) edge (l) (l) edge (r)
   (r3) edge (d3) (d3) edge (l3) (l3) edge (r3)

(d2) edge (ld2) (ld2) edge (lu2) (lu2) edge (u2) (u2) edge (ru2) (ru2) edge (rd2) (rd2) edge (d2)
   
   (r) edge (rd2) (r) edge (d2) (r) edge (ru2)
   (l) edge (ld2) (l) edge (d2) (l) edge (lu2)
   (u) edge (lu2) (u) edge (ru2) (u) edge (u2)
   
   (r3) edge (ru2) (r3) edge (u2) (r3) edge (rd2)
   (l3) edge (ld2) (l3) edge (u2) (l3) edge (lu2)
   (d3) edge (ld2) (d3) edge (rd2) (d3) edge (d2)   
  ;
\end{tikzpicture}
}

\newcommand{\octasascalar}{2.7}
\newcommand{\octasascalartwo}{0.73}
\newcommand{\octasascalartwopoint}{0.95}
\newcommand{\octasascalarthree}{0.5}
\subfloat[octahedron\label{subfig-8}]{%
 \begin{tikzpicture}
 
  \node[style=normalvertex,label=right:{}] (r) at (-30:\octasascalar cm) {};
  \node[style=normalvertex,label=right:{}] (u) at (90:\octasascalar cm) {};
  \node[style=normalvertex,label=right:{}] (l) at (-150:\octasascalar cm) {};
  
  \node[style=normalvertex,label=right:{}] (ru3) at (30:\octasascalarthree cm) {};
  \node[style=normalvertex,label=right:{}] (lu3) at (150:\octasascalarthree cm) {};
  \node[style=normalvertex,label=right:{}] (d3) at (-90:\octasascalarthree cm) {};
  \path
   (r) edge (u) (u) edge (l) (l) edge (r)
   (ru3) edge (lu3) (lu3) edge (d3) (d3) edge (ru3)
   
   (r) edge (ru3) (r) edge (d3) (l) edge (lu3) (l) edge (d3) (u) edge (ru3) (u) edge (lu3) 
  ;
\end{tikzpicture}
}
\subfloat[triakis octahedron\label{subfig-9}]{%
 \begin{tikzpicture}
 
  \node[style=normalvertex,label=right:{}] (r) at (-30:\octasascalar cm) {};
  \node[style=normalvertex,label=right:{}] (u) at (90:\octasascalar cm) {};
  \node[style=normalvertex,label=right:{}] (l) at (-150:\octasascalar cm) {};
  
  \node[style=normalvertex,label=right:{}] (d2) at (-90:\octasascalartwopoint cm) {};
  \node[style=normalvertex,label=right:{}] (u2) at (90:\octasascalartwo cm) {};
  \node[style=normalvertex,label=right:{}] (ru2) at (30:\octasascalartwopoint cm) {};
  \node[style=normalvertex,label=right:{}] (rd2) at (-30:\octasascalartwo cm) {};
  \node[style=normalvertex,label=right:{}] (lu2) at (150:\octasascalartwopoint cm) {};
  \node[style=normalvertex,label=right:{}] (ld2) at (-150:\octasascalartwo cm) {};
  
  \node[style=normalvertex,label=right:{}] (ru3) at (30:\octasascalarthree cm) {};
  \node[style=normalvertex,label=right:{}] (lu3) at (150:\octasascalarthree cm) {};
  \node[style=normalvertex,label=right:{}] (d3) at (-90:\octasascalarthree cm) {};

	 \node[style=normalvertex,label=right:{}] (cent) at (0,0) {};
	 
	 	 \node[style=normalvertex,label=right:{}] (out) at (0,3.25) {};

  \path
   (r) edge (u) (u) edge (l) (l) edge (r)
   (ru3) edge (lu3) (lu3) edge (d3) (d3) edge (ru3)
   
   (r) edge (ru3) (r) edge (d3) (l) edge (lu3) (l) edge (d3) (u) edge (ru3) (u) edge (lu3) 
   
   (lu2) edge (u) (lu2) edge (l) (lu2) edge (lu3)
   (ru2) edge (u) (ru2) edge (r) (ru2) edge (ru3)
   (d2) edge (l) (d2) edge (r) (d2) edge (d3)

   (ld2) edge (d3) (ld2) edge (l) (ld2) edge (lu3)
   (rd2) edge (d3) (rd2) edge (r) (rd2) edge (ru3)
   (u2) edge (u) (u2) edge (ru3) (u2) edge (lu3)
   
   (cent) edge (lu3) (cent) edge (ru3) (cent) edge (d3)
   
  ;
	 \draw (out) .. controls (-3,3) and (-3,-1) .. (l);
	 \draw (out) .. controls (3,3) and (3,-1) .. (r);	 
	 \path (out) edge (u);
\end{tikzpicture}
}
\caption{The~3-connected planar graphs with fixing number~3.}
\label{fig:3:con:plan:gra:fix:3}
\end{figure}
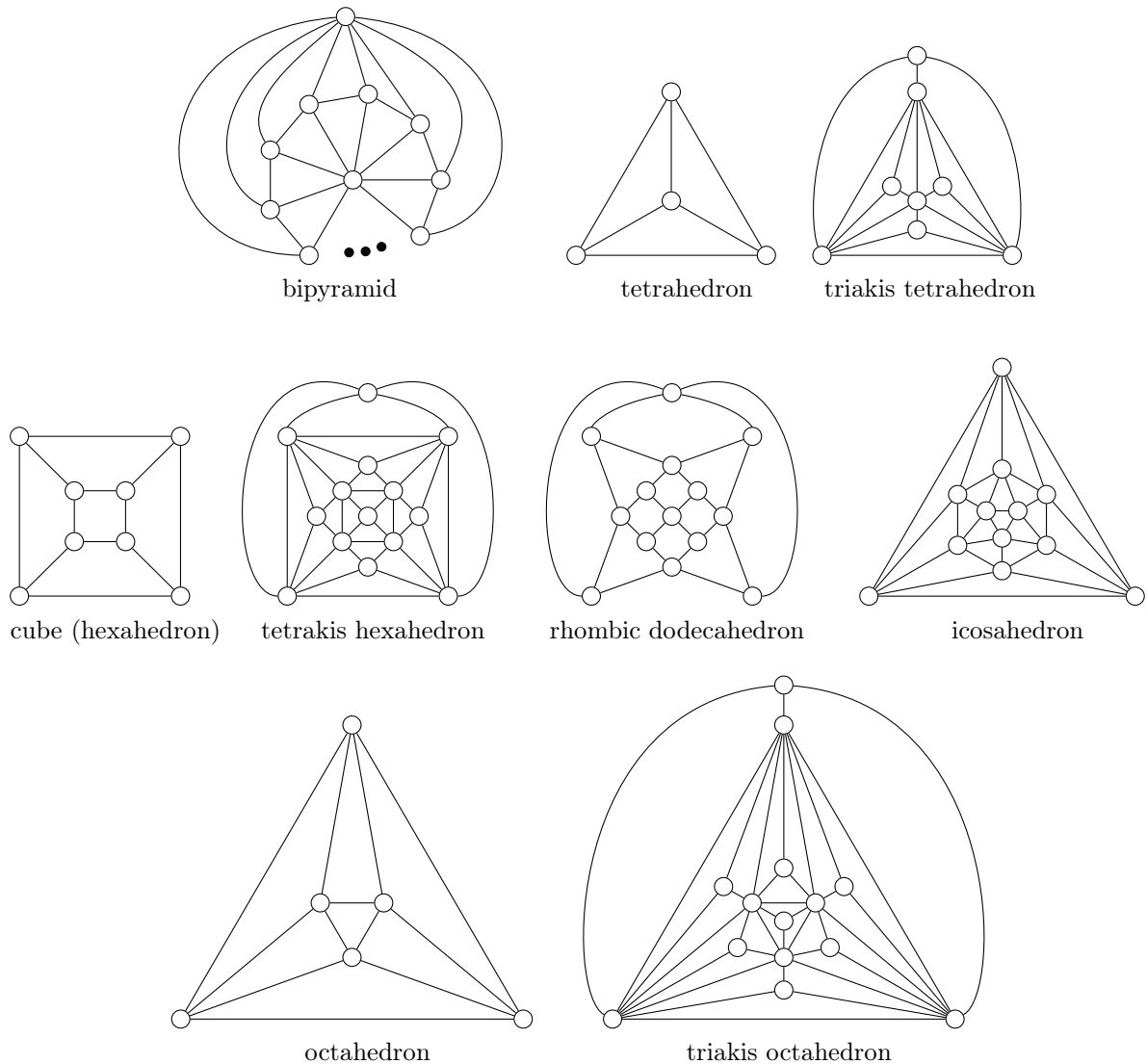

\begin{table}[ht]
	\caption{Properties of the planar~3-connected graphs with fixing number~3.}\label{t1}
	\begin{center}
		\begin{tabular}{|c|c|c|c|c|c|}
			\hline 
			\rule[-1ex]{0pt}{2.5ex} Name & |V| & |E| & |F| & V-type & F-type  \\ 
			\hline 
			\rule[-1ex]{0pt}{2.5ex}~$n$-bipyramid ($n\ge 3$)&~$n+2$ &~$3n$ &~$2n$ &~$2\{n\}+n\{4\}$ &~$2n\{3\}$  \\ 
			\hline 
			\rule[-1ex]{0pt}{2.5ex} tetrahedron &~$4$ &~$6$ &~$4$ &~$4\{3\}$ &~$4\{3\}$    \\ 
			\hline 
			\rule[-1ex]{0pt}{2.5ex} cube &~$8$ &~$12$ &~$6$ &~$8\{3\}$ &~$6\{4\}$    \\ 
			\hline 
			\rule[-1ex]{0pt}{2.5ex} triakis tetrahedron &~$8$ &~$18$ &~$12$ &~$4\{3\}+4\{6\}$ &~$12\{3\}$   \\ 
			\hline 
			\rule[-1ex]{0pt}{2.5ex} icosahedron &~$12$ &~$30$ &~$20$ &~$12\{5\}$ &~$20\{3\}$   \\ 
			\hline 
			\rule[-1ex]{0pt}{2.5ex} rhombic dodecahedron &~$14$ &~$24$ &~$12$ &~$8\{3\}+6\{4\}$ &~$12\{4\}$ \\ 
			\hline 
			\rule[-1ex]{0pt}{2.5ex} triakis octahedron &~$14$ &~$36$ &~$24$ &~$8\{3\}+6\{8\}$ &~$24\{3\}$  \\ 
			\hline 
			\rule[-1ex]{0pt}{2.5ex} tetrakis hexahedron &~$14$ &~$36$ &~$24$ &~$6\{4\}+8\{6\}$ &~$24\{3\}$ \\ 
			\hline 
		\end{tabular}
	\end{center}
\end{table}

On the algorithmic side we obtain a very easy algorithm to check isomorphism of~3-connected planar graphs. In fact the arguments show that with the right cell selection strategy, individualization refinement algorithms (such as nauty and traces~\cite{mckay} or bliss \cite{bliss}), which constitute to-date the fastest isomorphism algorithms in practice, have polynomial running time on~3-connected planar graphs.

Concerning lower bounds on the WL-dimension of planar graphs, it is not difficult to see that there are planar graphs with WL-dimension~2 (for example the~6-cycle). However, the question whether the maximum dimension of the class of planar graphs is~2 or~3 remains open.

\paragraph*{Related work.}

There is an extensive body of work on isomorphism testing of planar graphs. Most notably Hopcroft and Tarjan first exploited the decomposition of a graph into its~3-connected components to obtain an algorithm with quasi-linear running time~\cite{HOPCROFT1973323,DBLP:journals/ipl/HopcroftT71,DBLP:conf/coco/HopcroftT72}, which led to a linear-time algorithm by Hopcroft and Wong~\cite{DBLP:conf/stoc/HopcroftW74}. 
More recent results show that isomorphism of planar graphs can be decided in logarithmic space~\cite{DBLP:conf/coco/DattaLNTW09}.

There are also various results on the descriptive complexity of planar graph isomorphism. In this direction, Grohe shows that fixed-point logic with counting FPC captures polynomial time on planar graphs~\cite{DBLP:conf/lics/Grohe98} and more generally on graphs of bounded genus~\cite{DBLP:conf/stoc/Grohe00}. This was also known for graphs of bounded tree width~\cite{DBLP:conf/icdt/GroheM99}.

Subsequent work shows that for~3-connected planar graphs and for graphs of bounded tree width it is possible to restrict the quantifier depth (or equivalently, the number of iterations that the WL-algorithm performs until it terminates) to a polylogarithmic number, which translates to parallel isomorphism tests~\cite{DBLP:conf/icalp/GroheV06,DBLP:conf/stacs/Verbitsky07}. For general graphs, recent results give new upper and lower bounds on the quantifier depth (translating into bounds on the maximum number of iterations of the WL-algorithm) 
\cite{DBLP:conf/lics/KieferS16a,DBLP:conf/lics/BerkholzN16}.
Extending the results on planar graphs in the direction of dynamic complexity, Mehta shows that isomorphism of~3-connected planar graphs is in DynFO+~\cite{DBLP:conf/csr/Mehta14}, where in fact no counting quantifiers are required.

While it is possible to describe precisely the graphs of WL-dimension~1~\cite{DBLP:conf/mfcs/KieferSS15,DBLP:conf/fct/ArvindKRV15} (i.e., graphs definable with a~2-variable sentence in first order logic with counting), it appears difficult to make such statements for higher dimensions. However, for various graph classes for which the isomorphism problem is known to be polynomial time solvable, one can give upper bounds on the dimension. 
E.g., for cographs, interval graphs, and, more generally, for rooted directed-path graphs, it suffices to apply the~2-dimensional WL-algorithm in order to decide isomorphism~\cite{DBLP:journals/combinatorics/EvdokimovP00}. In general, isomorphism of graph classes with an excluded minor can be solved in polynomial time \cite{Ponomarenko} and in fact a sufficiently high-dimensional WL-algorithm will decide isomorphism on such a class~\cite{DBLP:journals/jacm/Grohe12,grohe_2017}. More strongly, FPC captures polynomial time on graph classes with an excluded minor. 
In the proof of this result, structural graph theory and in particular decompositions play a central role. While our paper  uses very basic parts of these techniques and concepts, they are only implicit and we refer the reader to~\cite{DBLP:journals/corr/Grohe16b} for a more systematic treatment.

\section{Preliminaries}

All graphs in this paper are finite simple graphs, that is, undirected graphs without loops. The vertex and the edge set of a graph~$G$ are denoted by~$V(G)$ and~$E(G)$, respectively. The neighborhood~$N(X)$ of a subset of the vertices~$X \subseteq V(G)$ is the set~$\{ u\in V(G)\setminus X \mid \exists v \in X \text{ s.t.\ } \{u,v\} \in E(G)\}$. 
For~$X \subseteq V(G)$, we denote by~$G[X]$ the \emph{subgraph of~$G$ induced by~$X$}, i.e., the graph with vertex set~$X$ and edge set~$E(G) \cap \{ \{ u,v \} \mid u,v \in X\}$.
The graph~$G-X \coloneqq G[V(G)\setminus X]$ is obtained from~$G$ by removing~$X$. We write~$G \cong H$ to indicate that~$G$ is \emph{isomorphic} to~$H$.
A \emph{minor} of~$G$ is a graph obtained by repeated vertex deletions, edge deletions and edge contractions.

For a positive integer~$k$, a graph~$G$ is \emph{$k$-connected} if~$G$ has more than~$k$ vertices and for all~$X \subseteq V(G)$ with~$|X| < k$, the graph~$G - X$ is connected.
A \emph{separator}~$S\subseteq V(G)$ is a subset of the vertices such that~$G-S$ is not connected. A vertex~$v$ is a \emph{cut vertex} if~$\{v\}$ is a separator, and a \emph{2-separator} is a separator of size~2. 
A \emph{2-connected component} of~$G$ is a subset~$S'$ of~$V(G)$ such that the graph~$G[S']$ is~$2$-connected and such that~$S'$ is maximal with respect to inclusion. We refer the reader to~\cite{MR1844449} for more basic information on graphs, in particular on \emph{planar graphs}, which are graphs that can be drawn in the plane without crossings.

A \emph{vertex-colored graph}~$(G, \lambda)$ is a graph~$G$ with a function~$\lambda \colon V \rightarrow \mathcal{C}$, where~$\mathcal{C}$ is an arbitrary set. We call~$\lambda$ a \emph{vertex coloring} of~$G$. Similarly, an \emph{arc-colored graph} is a graph~$G$ with a function~$\lambda \colon \{(u,u)\mid u\in V(G)\} \cup \{ (u,v) \mid \{u,v\} \in E(G)\}\rightarrow \mathcal{C}$. In this case, we call~$\lambda$ an \emph{arc coloring}.
We interpret~$\lambda(u,u)$ as the vertex color of~$u$ and for~$\{u,v\}\in E(G)$ we interpret~$\lambda(u,v)$ as the color of the arc from~$u$ to~$v$. In particular it may be the case that~$\lambda(u,v) \neq \lambda(v,u)$. However, while we allow such colorings, all graphs in this paper are undirected. Furthermore, we treat every uncolored graph as a monochromatic colored graph.

\paragraph*{The Weisfeiler-Leman algorithm (see \cite{DBLP:journals/combinatorica/CaiFI92}).} For~$k \in \mathbb{N}$, a graph~$G$ and a coloring~$\lambda$ of~$V^k(G)$, let~$(v_1, \dots, v_k)$ be a vertex~$k$-tuple of~$G$. We define~$\prescript{}{0}\chi_G^k (v_1, \dots, v_k)$ to be a tuple consisting of an encoding of~$\lambda(v_1, \dots, v_k)$ and an encoding of the isomorphism class of the colored graph obtained from~$G[\{v_1, \dots, v_k\}]$ by coloring for~$i \in \{1, \dots, k\}$ vertex~$v_i$ with color~$i$.

That is, for a second graph~$G'$, possibly equal to~$G$, with coloring~$\lambda'$ and for a vertex~$k$-tuple~$(v_1',\dots,v_k')$ of~$G'$ we have
\[\prescript{}{0} \chi^k_G (v_1, \dots, v_k) = \prescript{}{0} \chi^k_{G'} (v_1', \dots, v_k')\]
if and only if~$\lambda (v_1, \dots, v_k) = \lambda'  (v_1', \dots, v_k')$ and there is an isomorphism from~$G[v_1,\dots, v_k]$ to~$G'[v'_1,\dots, v'_k]$ mapping~$v_j$ to~$v'_j$ for all~$j\in \{1,\dots,k\}$. 

We recursively define the color~$\prescript{}{i+1} \chi^k_G (v_1, \dots, v_k)$ by setting
\[\prescript{}{i+1} \chi^k_G (v_1, \dots, v_k) \coloneqq (\prescript{}{i} \chi^k_G (v_1, \dots, v_k); \mathcal{M}),
\]
where~$\mathcal{M}$ is the multiset defined as
\[{\mathcal{M}}:= \big\{\!\!\big\{\big(\prescript{}{i} \chi^k_G(w,v_2, \dots, v_k), \dots, \prescript{}{i} \chi^k_G(v_1, \dots, v_{k-1}, w)
\big) \ \big\vert \ w\in V     \big\}\!\!\big\}\]
if~$k \geq 2$ and as~${\mathcal{M}}:=  \{\!\! \{ (\prescript{}{i} \chi^1_G(w) \mid w\in N(v_1) \}\!\!\}$ if~$k = 1$. That is, if~$k = 1$, the iteration is only over neighbors of~$v_1$.

There is a slight technical issue about the initial coloring. Suppose for a fixed~$k$ we are given a graph~$G$ with a coloring~$\lambda'$ of~$V^\ell(G)$ for~$\ell < k$. To turn it into a correct input for the~$k$-dimensional WL-algorithm, we replace~$\lambda'$ by an appropriate coloring~$\lambda$. For a vertex tuple~$(u_1, \dots, u_k)$, we define~$\lambda (u_1, \dots, u_k) \coloneqq  \lambda'(u_1, \dots ,u_{\ell})$. Note that~$\lambda$ preserves all information from~$\lambda'$. If~$\lambda'$ is an arc coloring we define~$\lambda (u_1, \dots, u_k)$ to be~$(\lambda'(u_1, u_2),0)$ if~$(u_1,u_2)$ is in the domain of~$\lambda'$ and to be~$(1,1)$ otherwise.

By definition, the coloring~$\prescript{}{i+1} \chi^k_G$ induces a refinement of the partition of the~$k$-tuples of the vertices of the graph~$G$ with coloring~$\prescript{}{i}\chi^k_G$. Thus, there is some minimal~$i$ such that the partition induced by the coloring~$\prescript{}{i+1} \chi^k_G$ is not strictly finer than the one induced by the coloring~$\prescript{}{i} \chi^k_G$ on~$G$. For this minimal~$i$, we call~$\prescript{}{i} \chi^k_G$ the \emph{stable} coloring of~$G$ and denote it by~$\chi^k_G$. 

For~$k \in \mathbb{N}$, the \emph{$k$-dimensional WL-algorithm} takes as input a vertex coloring or an arc coloring~$\lambda$ of a graph~$G$ and returns the coloring~$\chi^k_G$. For two graphs~$G$ and~$G'$, we say that the~$k$-dimensional WL-algorithm \emph{distinguishes}~$G$ and~$G'$ if its application to each of them results in colorings with differing color class sizes. More precisely, the graphs~$G$ and~$G'$ are distinguished if there is a color~$C$ in the range of~$\chi^k_G$ such that the sets
$\{\bar{v} \mid \bar{v} \in V^k(G), \chi^k_G(\bar{v}) = C\}$ and~$\{\bar{w} \mid \bar{w} \in V^k(G'), \chi^k_{G'}(\bar{w}) = C\}$ have different cardinalities.

If two graphs are distinguished by the~$k$-dimensional WL-algorithm for some~$k$, then they are not isomorphic. However, if~$k$ is fixed, the converse is not always true. There is a close connection between the WL-algorithm and first-order logic with counting (as well as fixed-point logic with counting). We refer the reader to existing literature (for example~\cite{DBLP:journals/combinatorica/CaiFI92,grohe_2017}) for more information.

For improved readability, we will use the letter~$\lambda$ to denote arbitrary colorings that do not necessarily result from applications of the WL-algorithm.

\section{Decompositions}\label{sec:decomps}

For a graph~$G$, define the set~$P(G)$ to consist of
the pairs~$(S,K)$, where~$S$ is a separator of~$G$ of minimum cardinality and~$K \subseteq V(G)\setminus S$ is the vertex set of a connected component of~$G-S$.

We observe that if~$G$ is a connected graph that is not~$2$-connected, then~$P(G)$ is the set of pairs~$(\{s\},K)$ where~$s$ is a cut vertex and~$G[K]$ a connected component of~$G- \{s\}$. In this case we also write~$(s,K)$ instead of~$(\{s\},K)$. If~$G$ is~2-connected but not~3-connected, all separators in~$P(G)$ have size~2.

There is a natural partial order on~$P(G)$ with respect to inclusion in the second component, i.e., we can define:
\[
(S,K) \leq (S',K') \iff K \subseteq K'.  
\]

We define~$P_0(G)$ to be the set of minimal elements of~$P(G)$ with respect to this partial order.

\begin{remark}\label{remark:isomorphism_invariant}
It immediately follows from the definitions that the sets~$P(G)$ and~$P_0(G)$ (and the corresponding partial orders) are isomorphism-invariant (i.e., preserved under isomorphisms).
\end{remark}

 Note that~$P_0(G)$ is non-empty whenever~$G$ is not a complete graph. Also note that if~$G$ is not~2-connected, then for two distinct minimal elements~$(S,K)$ and~$(S',K')$ in~$P_0(G)$ we have~$K \cap K' = \varnothing$. Furthermore, in the case that~$G$ is connected but not~$2$-connected, the set~$P_0(G)$ contains exactly the pairs~$(s,K)$ for which~$s$ is a cut vertex and~$G[K]$ a connected component of~$G-\{s\}$ that does not contain a cut vertex of~$G$. 
These two observations can be generalized to graphs with a higher connectivity, but for this we need an additional requirement on the minimum degree as follows.

\begin{lemma}\label{lem:minimum:degree:implies:disjoin:min:seps}
Let~$G$ be a graph that is not~$(k+1)$-connected and has minimum degree at least~$\frac{3k-1}{2}$. 
\begin{enumerate}
\item \label{item:one:min:comp:trivial:int:contained}
If~$(S,K)\in P_0(G)$ and~$(S',K')\in P(G)$ are distinct, then~$K\subseteq K'$ or~$(K\cup S)\cap (K'\cup S') = S\cap S'$.
\item If~$(S,K),(S',K') \in P_0(G)$ are distinct, then~$(K\cup S)\cap (K'\cup S') = S\cap S'$.\label{item:min:comp:trivial:int}
\item A pair~$(S,K)\in P(G)$ is contained in~$P_0(G)$ if and only if there is no separator~$S'$ of~$G$ of minimum cardinality with~$S'\cap K \neq \varnothing$.\label{item:char:min:comp}
\end{enumerate}
\end{lemma}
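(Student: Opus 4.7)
The plan is to establish the three claims in the order (3), (1), (2). The easy direction $\Leftarrow$ of (3) is immediate: if $(\tilde S,\tilde K)\in P(G)$ with $\tilde K\subsetneq K$, then connectivity of $G[K]$ yields an edge from $\tilde K$ to $K\setminus\tilde K$ whose far endpoint must lie in $\tilde S$, so $\tilde S\cap K\ne\varnothing$.

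For the hard direction $\Rightarrow$ of (3), I assume $(S,K)\in P_0(G)$ and for contradiction let $S'$ be a minimum separator with $v\in S'\cap K$. Since $S'$ is minimum, $v$ has neighbours in at least two components of $G-S'$, and I would pick one such component $K'$ that additionally meets $K$ and apply the standard uncrossing argument. Writing $L=V(G)\setminus(S\cup K)$ and $L'=V(G)\setminus(S'\cup K')$, the set $T=(K\cap S')\cup(S\cap K')\cup(S\cap S')$ is the complete neighbourhood boundary of $K\cap K'$ in $G$, hence a separator of $G$; a disjoint-parts count gives $|T|+|T'|=|S|+|S'|=2\kappa$, where $T'$ is the analogous boundary of $L\cap L'$. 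Whenever $T'$ is also a separator (for instance whenever $L\cap L'\ne\varnothing$), $T$ is a \emph{minimum} separator, and any component of $G[K\cap K']$ together with $T$ is an element of $P(G)$ strictly below $(S,K)$ in the partial order (strictly because $v\in K\setminus K'$), contradicting $(S,K)\in P_0(G)$. The main obstacle is to rule out the degenerate configurations in which no candidate $K'$ yields $L\cap L'\ne\varnothing$; this is where the minimum-degree hypothesis $(3k-1)/2$ enters. Since any vertex of $K$ has all its neighbours in $K\cup S$ and $|S|\le k$, one gets $|K|\ge(k+1)/2$ and symmetrically $|L|\ge(k+1)/2$; combined with $|S|,|S'|\le k$ and the freedom to choose $K'$ among the various components in which $v$ has neighbours, these bounds suffice to produce a balanced uncrossing.

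Given (3), part (1) is short. Applying (3) to $(S,K)$ gives $S'\cap K=\varnothing$, so the connected set $K$ lies in a single component $K'_{j_0}$ of $G-S'$. Either $K'_{j_0}=K'$, giving $K\subseteq K'$, or $K\cap(K'\cup S')=\varnothing$. In the latter case, minimality of $|S|$ forces every $s\in S$ to have at least one neighbour in $K$ (otherwise $S\setminus\{s\}$ would still separate $K$ from $L$), so $S=N(K)\setminus K$. For any $v\in K\subseteq K'_{j_0}$ we have $N(v)\subseteq(K\cup S)\cap(K'_{j_0}\cup S')=K\cup(S\cap K'_{j_0})\cup(S\cap S')$, hence $S\subseteq(S\cap K'_{j_0})\cup(S\cap S')$, which gives $S\cap K'=\varnothing$ for any component $K'\ne K'_{j_0}$ and therefore the claimed equality. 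Finally, (2) follows by applying (1) to both $((S,K),(S',K'))$ and $((S',K'),(S,K))$: if both applications gave the inclusion alternative then $K=K'$, whence $S=N(K)\setminus K=N(K')\setminus K'=S'$, contradicting distinctness, so at least one application yields the disjointness, which is precisely (2).
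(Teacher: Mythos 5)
Your logical organization differs from the paper's: you prove~(3) directly and then derive~(1) from~(3) and~(2) from~(1), whereas the paper proves~(1) first and obtains~(3) as a corollary. Your derivations of~(1) from~(3) (using $S=N(K)$ and the fact that $K$ sits inside a single component $K'_{j_0}$ of $G-S'$) and of~(2) from~(1) are correct. The substantive difference is in the crux: the paper disposes of the hard direction by a black-box citation of Corollary~1 of Karpov, which says that under the degree hypothesis $\delta(G)\ge\frac{3k-1}{2}$, deleting any vertex $u$ of a minimal fragment $K$ (i.e., $(S,K)\in P_0(G)$) leaves the graph $k$-connected; taking $u\in S'\cap K$ then immediately contradicts $|S'|=k$. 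You instead attempt to prove the statement from scratch via a Menger-style uncrossing, which is a genuinely different (and more self-contained) route.

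However, the uncrossing argument as written has a real gap. The submodularity count $|T|+|T'|=|S|+|S'|=2\kappa$ and the conclusion that $T$ is a minimum separator below $(S,K)$ only go through when you can choose a component $K'$ of $G-S'$ with both $K\cap K'\ne\varnothing$ \emph{and} $L\cap L'\ne\varnothing$; you acknowledge this but then assert that the bounds $|K|,|L|\ge(k+1)/2$ ``suffice to produce a balanced uncrossing'' without actually exhibiting such a $K'$. This is exactly where the content lies: e.g., if every component of $G-S'$ that meets $K$ contains all of $L$ (forcing $L\subseteq S'$ when two of them meet $K$), or if $K\subseteq S'$ so that no component meets $K$ at all, neither $|K|\ge(k+1)/2$ nor $|L|\ge(k+1)/2$ alone contradicts $|S'|\le k$ for general $k$ (for instance, $(k+1)/2\le k$ holds for all $k\ge1$). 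Ruling out these degenerate fragment configurations under the sharp threshold $(3k-1)/2$ is precisely what Karpov's result does, and that analysis still needs to be supplied. As it stands, the hard direction of~(3) — and hence, in your ordering, parts~(1) and~(2) as well — is not proved.
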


\begin{proof} (Part~\ref{item:one:min:comp:trivial:int:contained})
Assume that~$(S,K)\in P_0(G)$ and~$(S',K')\in P(G)$ are distinct. Note that~$S = N(K)$ and~$S' = N(K')$ since~$S$ and~$S'$ are minimal separators.

Suppose~$K \not\subseteq K'$ and that there exists~$v \in K \cap K'$. Then~$v$ has a neighbor~$u \in K$ which does not belong to~$K'$. Since~$v \in K'$, this implies that~$u$ belongs to~$S'$. Therefore, the graph~$G-u$ is at most~$(k-1)$-connected. On the other hand,~$u$ lies in~$K$. By Corollary~$1$ in \cite{Karpov}, the graph~$G-u$ is~$k$-connected, yielding a contradiction.

(Part~\ref{item:min:comp:trivial:int}) This follows by applying Part~\ref{item:one:min:comp:trivial:int:contained} twice.

(Part~\ref{item:char:min:comp}) If~$(S,K)\in P(G)$ is not minimal then there is~$(S',K')\in P(G)$ with~$K'\subsetneq K$. Then~$S'\subseteq K\cup S$ but~$S\neq S'$, which shows that~$S'\cap K\neq \varnothing$. Conversely, suppose~$(S,K)\in P_0(G)$ and that there is a minimum size separator~$S'$ of~$G$ with~$S'\cap K\neq \varnothing$. Let~$K' \subseteq V(G-S')$ be a vertex set such that~$G[K']$ is a connected component of~$G-S'$.
Then~$(S,K)$ and~$(S',K')$ violate Part~\ref{item:one:min:comp:trivial:int:contained} of the lemma.
\end{proof}

We remark that for a connected graph~$G$ which is not~$3$-connected, the elements of~$P_0(G)$ correspond to the leaves in a suitable decomposition tree (i.e., the decomposition into~2- or~3-connected components) in the sense of Tutte. 
However, we will not require this fact.

In the following we present a method to remove the vertices appearing in the second component of pairs in~$P_0$ from graphs in such a way that the property whether two graphs are isomorphic is preserved. This will allow us to devise an inductive isomorphism test. In the next sections we will then show that a sufficiently high-dimensional WL-algorithm in some sense implicitly performs this induction.

For a graph~$G$ and a set~$S \subseteq V(G)$, we define the graph~$G^S_{\top}$ as consisting of the vertices of~$S$ and those appearing together with~$S$ in~$P_0(G)$. More precisely,
$G^S_{\top}$ is the graph on the vertex set~\[V'= S \cup \bigcup\limits_{(S,K)\in P_0(G)} K\]
 with edge set~$E' = E(G[V']) \cup \{\{s,s'\}\mid s,s'\in S \text{ and }s\neq s'\}$, see Figure~\ref{fig:decomposed:graphs} left and bottom right.\footnote{For the reader familiar with tree decompositions we remark that this graph corresponds to the torso of the bag~$S\cup K$ in a suitable tree decomposition. However, we will not require this point of view in the paper.} Note that if~$S$ is not a separator or does not appear as a separator in~$P_0(G)$, then~$V'=S$.

\tikzset{invclip/.style={clip,insert path={{[reset cm]
      (-16383.99999pt,-16383.99999pt) rectangle (16383.99999pt,16383.99999pt)
    }}}}
\begin{figure}
\centering
\begin{tikzpicture}[scale = 0.8]

\draw[rotate=30]  (1,0) ellipse (2 and 1.3);

\draw[rotate=-30]  (3,2.5) ellipse (2 and 1.1);
\node[circle, draw, fill, minimum size = 5pt, inner sep = 0, outer sep = 0] (v5) at (2.34,1.4) {};
\node[circle, draw, fill, minimum size = 5pt, inner sep = 0, outer sep = 0] (v6) at (2.41,0.8) {};

\draw[rotate=-50]  (0,-1.4) ellipse (0.8 and 1.4);
\node[circle, draw, fill, minimum size = 5pt, inner sep = 0, outer sep = 0] (v3) at (-0.6,-0.2) {};
\node[circle, draw, fill, minimum size = 5pt, inner sep = 0, outer sep = 0] (v4) at (-0.3,-0.6) {};

\draw[rotate=50]  (0.5,-2.0) ellipse (0.8 and 1.4);
\node[circle, draw, fill, minimum size = 5pt, inner sep = 0, outer sep = 0] (v2) at (1.6,-0.2) {};
\node[circle, draw, fill, minimum size = 5pt, inner sep = 0, outer sep = 0] (v1) at (1.2,-0.5) {};

\node[circle, draw, fill, minimum size = 5pt, inner sep = 0, outer sep = 0] (v7) at (5.2,-0.1) {};
\node[circle, draw, fill, minimum size = 5pt, inner sep = 0, outer sep = 0] (v8) at (4.6,-0.3) {};

\node at (-0.2,-0.2) {$S_1$};
\node at (1.1,-0.1) {$S_2$};
\node at (4.8,0.03) {$S_3$};

\begin{scope}
\begin{pgfinterruptboundingbox} 
  \path[invclip, scale =0.8]   (4.22,0.5) rectangle (5.46,-0.3);
  \end{pgfinterruptboundingbox}
\begin{scope}
\clip  (5,-1.7) ellipse (1.5 and 2);
\begin{scope}
  \begin{pgfinterruptboundingbox} 
  \path[invclip,rotate=-20, scale = 0.8]  (4.4,0.5) ellipse (0.6 and 1.4);
  \end{pgfinterruptboundingbox}
    \draw[rotate=20]   (5,-3) ellipse (0.6 and 1.4);
\end{scope}

\begin{scope}
  \begin{pgfinterruptboundingbox} 
  \path[invclip,rotate=20, scale = 0.8]   (5,-3) ellipse (0.6 and 1.4);
  \end{pgfinterruptboundingbox}
    \draw[rotate=-20]  (4.4,0.5) ellipse (0.6 and 1.4);
\end{scope}
\end{scope} 

\end{scope}
\begin{scope}
\clip   (4.22,0.5) rectangle (5.46,-0.3);
  \begin{pgfinterruptboundingbox} 
  \clip[rotate=-30]   (3,2.5) ellipse (2 and 1.1);
\draw  (5,-1.7) ellipse (1.5 and 2);
  \end{pgfinterruptboundingbox}
  \end{scope}   
\node at (4.2,-1.5) {$K$};
\node at (5.9,-1.5) {$K'$};

\draw[thick] (v1) -- (v2);

\begin{scope}[shift = {(9.5,2)}]

\draw[rotate=30]  (1,0) ellipse (2 and 1.3);

\draw[rotate=-30]  (3,2.5) ellipse (2 and 1.1);
\node[circle, draw, fill, minimum size = 5pt, inner sep = 0, outer sep = 0] (v5) at (2.34,1.4) {};
\node[circle, draw, fill, minimum size = 5pt, inner sep = 0, outer sep = 0] (v6) at (2.41,0.8) {};

\node[circle, draw, fill, minimum size = 5pt, inner sep = 0, outer sep = 0] (v3) at (-0.6,-0.2) {};
\node[circle, draw, fill, minimum size = 5pt, inner sep = 0, outer sep = 0] (v4) at (-0.3,-0.6) {};

\node[circle, draw, fill, minimum size = 5pt, inner sep = 0, outer sep = 0] (v2) at (1.6,-0.2) {};
\node[circle, draw, fill, minimum size = 5pt, inner sep = 0, outer sep = 0] (v1) at (1.2,-0.5) {};

\node[circle, draw, fill, minimum size = 5pt, inner sep = 0, outer sep = 0] (v7) at (5.2,-0.1) {};
\node[circle, draw, fill, minimum size = 5pt, inner sep = 0, outer sep = 0] (v8) at (4.6,-0.3) {};

\draw[thick] (v1) -- (v2);
\draw[thick] (v3) -- (v4);
\draw[thick] (v7) -- (v8);

\node at (-2,-0.5) {$G_{\bot}$};
\end{scope}

\begin{scope}[shift = {(9.5,-1)}]

\draw[rotate=-50]  (0,-1.4) ellipse (0.8 and 1.4);
\node[circle, draw, fill, minimum size = 5pt, inner sep = 0, outer sep = 0] (v3) at (-0.6,-0.2) {};
\node[circle, draw, fill, minimum size = 5pt, inner sep = 0, outer sep = 0] (v4) at (-0.3,-0.6) {};

\draw[rotate=50]  (0.5,-2.0) ellipse (0.8 and 1.4);
\node[circle, draw, fill, minimum size = 5pt, inner sep = 0, outer sep = 0] (v2) at (1.6,-0.2) {};
\node[circle, draw, fill, minimum size = 5pt, inner sep = 0, outer sep = 0] (v1) at (1.2,-0.5) {};

\node[circle, draw, fill, minimum size = 5pt, inner sep = 0, outer sep = 0] (v7) at (5.2,-0.1) {};
\node[circle, draw, fill, minimum size = 5pt, inner sep = 0, outer sep = 0] (v8) at (4.6,-0.3) {};

\node at (-0.2,-0.2) {$S_1$};
\node at (1.1,-0.1) {$S_2$};
\node at (4.8,0.03) {$S_3$};

\begin{scope}
\begin{pgfinterruptboundingbox} 
  \path[invclip,scale =0.8,shift = {(9.5,-1)}]   (4.22,0.5) rectangle (5.46,-0.3);
  \end{pgfinterruptboundingbox}
\begin{scope}
\begin{scope}
  \begin{pgfinterruptboundingbox} 
  \path[invclip,scale =0.8,shift = {(9.5,-1)},rotate=-20]  (4.4,0.5) ellipse (0.6 and 1.4);
  \end{pgfinterruptboundingbox}
    \draw[rotate=20]   (5,-3) ellipse (0.6 and 1.4);
\end{scope}

\begin{scope}
  \begin{pgfinterruptboundingbox} 
  \path[invclip,scale =0.8,shift = {(9.5,-1)},rotate=20]   (5,-3) ellipse (0.6 and 1.4);
  \end{pgfinterruptboundingbox}
    \draw[rotate=-20]  (4.4,0.5) ellipse (0.6 and 1.4);
\end{scope}
\end{scope} 

\end{scope}
\begin{scope}
\clip   (4.22,0.5) rectangle (5.46,-0.3);
  \begin{pgfinterruptboundingbox} 
  \clip[rotate=-30]   (3,2.5) ellipse (2 and 1.1);
\draw  (5,-1.7) ellipse (1.5 and 2);
  \end{pgfinterruptboundingbox}
  \end{scope}   
  
  \begin{scope}
  \clip[rotate=20]   (5,-3) ellipse (0.6 and 1.4);
\draw[dotted, rotate=-30]  (3,2.5) ellipse (2 and 1.1);
    \end{scope} 
      \begin{scope}
      \clip[rotate=-20]  (4.4,0.5) ellipse (0.6 and 1.4);
    \draw[dotted, rotate=-30]  (3,2.5) ellipse (2 and 1.1);
        \end{scope} 

\node at (4.2,-1.5) {$K$};

\node at (5.9,-1.5) {$K'$};

\draw[thick] (v1) -- (v2);
\draw[thick] (v3) -- (v4);
\draw[thick] (v7) -- (v8);

\end{scope}
\node (v9) at (7,0.5) {};
\node (v10) at (16,0.5) {};
\draw  (v9) edge (v10);
\node (v11) at (6.75,4) {};
\node (v12) at (6.75,-3.5) {};

\draw[dashed]  (v11) edge (v12);
\node at (8.5,-0.5) {$G_{\top}^{S_1}$};
\node at (15.5,-0.5) {$G_{\top}^{S_3}$};
\node at (11.5,-0.5) {$G_{\top}^{S_2}$};
\end{tikzpicture}

\caption{The figure illustrates the notions of the graphs~$G_{\bot}$ and~$G_{\top}^S$. In the~2-connected graph on the left the~2-separators are indicated and the respective decomposed graphs are shown on the right.}\label{fig:decomposed:graphs}
\end{figure}
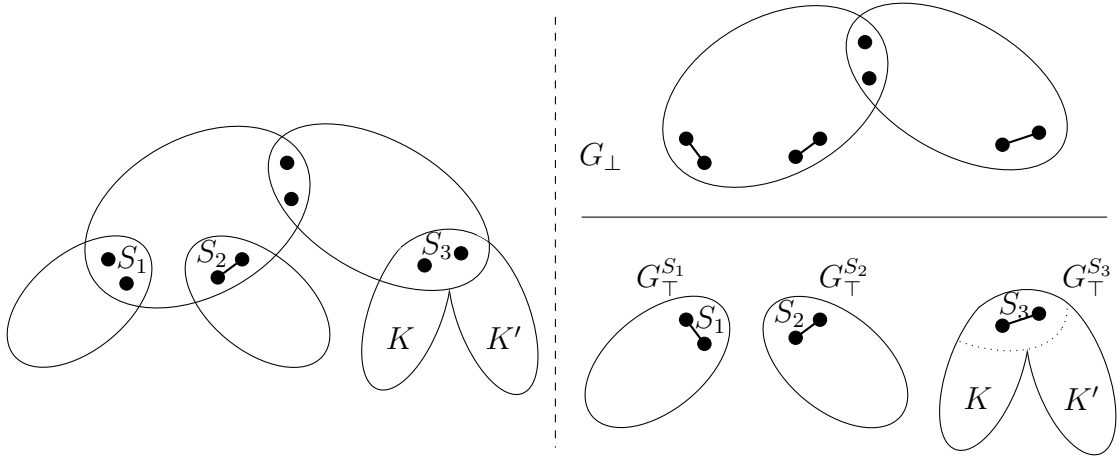

For an arc coloring~$\lambda$ of~$G$, we define an arc coloring~$\lambda^S_\top$ for the graph~$G^S_{\top}$ as follows:
\[\lambda^S_\top(v_1,v_2)  \coloneqq
\begin{cases}
(0,0)& \text{if } \{v_1, v_2\}\subseteq  S \text{ and } \{v_1, v_2\} \notin E(G)\\ 
(\lambda(v_1,v_2),1)& \text{if } \{v_1, v_2\}\subseteq  S \text{ and } \{v_1, v_2\} \in E(G) \\ 
(\lambda(v_1,v_2),2) & \text{otherwise.} 
\end{cases}
\]

If~$G$ is a vertex-colored graph with vertex coloring~$\lambda'$, in order to obtain a coloring for~$G^S_{\top}$, we define an arc coloring~$\lambda$ as~$\lambda (v_1,v_2) \coloneqq \lambda'(v_1)$ and let~$\lambda^S_\top$ be as above.

For~$(S,K) \in P_0(G)$ we also define~$G^{(S,K)}_{\top}\coloneqq G^S_{\top}[S \cup K]$, which differs from~$G^{S}_{\top}$ in that only vertices from~$S$ and~$K$ are retained. Again we define a coloring~$\lambda^{(S,K)}_\top$ which is simply the restriction of~$\lambda^S_\top$ to pairs~$(v_1,v_2)$ for which~$v_1,v_2\in S\cup K$.

Given a graph~$G$, we define~$G_{\bot}$ (see Figure~\ref{fig:decomposed:graphs} left and top right) to be the graph with vertex set~\[V_{\bot} \coloneqq V(G) \setminus \bigg(\bigcup_{(S,K)\in P_0(G)}K\bigg),\]
and edge set 
 \[E_{\bot} \coloneqq E(G[V_{\bot}]) \, \cup \, \{ \{s_1,s_2\} \mid \exists (S,K)\in P_0(G) \text{ s.t. } s_1,s_2\in S, s_1\neq s_2\}.\]
We observe that if~$G$ is not~2-connected then~$G_{\bot}$ is equal to~$G[V_{\bot}]$. In general, if for some~$k$ the graph~$G$ is not~$(k+1)$-connected but has minimum degree at least~$\frac{3k-1}{2}$, then Lemma~\ref{lem:minimum:degree:implies:disjoin:min:seps} applies. In particular, the various components whose vertex sets appear in~$P_0(G)$ are disjoint. If~$G$ is not~3-connected, this implies that~$G_{\bot}$ is a minor of~$G$.

In the following we restrict our discussions to graphs that are not~3-connected. Given an arc coloring~$\lambda$ of~$G$ we define an arc coloring~$\lambda_\bot$ of~$G_{\bot}$ as follows. Assume that~$v_1, v_2 \in V(G_\bot)$. Let~$S \coloneqq \{v_1,v_2\}$.

If~$S$ is a~$2$-separator of~$G$ but~$S \notin E(G)$, we set
\begin{align*}
\lambda_\bot(v_1,v_2) &\coloneqq 
\left(0, \ISOTYPE\left(\left(G^{S}_{\top},\lambda^{S}_\top\right)_{(v_1,v_2)}\right)\right) .
\intertext{Furthermore, if~$v_1 = v_2$ or if~$\{v_1,v_2\} \in E(G)$ we set}
\lambda_\bot(v_1,v_2) &\coloneqq \left(\lambda(v_1,v_2), \ISOTYPE\left(\left(G^S_{\top},\lambda^S_\top\right)_{(v_1,v_2)}\right)\right), 
\end{align*}
where by~$ \ISOTYPE((G^S_{\top},\lambda^S_\top)_{(v_1,v_2)})$  we denote the isomorphism class of the colored graph~$(G^S_{\top},\lambda^S_\top)_{(v_1,v_2)}$ obtained from the arc-colored graph~$(G^S,\lambda^S_\top)$ by individualizing~$v_1$ and~$v_2$. Thus~$(G^{\{v_1,v_2\}}_{\top},\lambda^{\{v_1,v_2\}}_\top)_{(v_1,v_2)}$ and~$(G'^{\{v'_1,v'_2\}}_{\top},\lambda'^{\{v'_1,v'_2\}}_\top)_{(v'_1,v'_2)}$ have the same isomorphism type if and only if there is an isomorphism from the first graph to the second mapping~$v_1$ to~$v'_1$ and~$v_2$ to~$v'_2$. Note that by definition, the~$\lambda_{\bot}$-colors of~$2$-separators of~$G$ are distinct from those of other pairs of vertices.

If not stated otherwise, we implicitly assume that for a graph~$G$ with initial coloring~$\lambda$, the corresponding graph~$G_\bot$ is a colored graph with initial coloring~$\lambda_\bot$.

\begin{lemma}\label{lem:iso:if:and:only:if:smaller:iso}
For~$k\in \{1,2\}$, if~$G$ and~$G'$ are~$k$-connected graphs that are not~$(k+1)$-connected and that are of minimum degree at least~$\frac{3k-1}{2}$ with arc colorings~$\lambda$ and~$\lambda'$, respectively, then
\[(G, \lambda) \cong (G', \lambda')  \Longleftrightarrow (G_{\bot}, \lambda_\bot) \cong (G'_{\bot},\lambda'_\bot).\] 
\end{lemma}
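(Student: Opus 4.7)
The forward direction is nearly immediate. Any isomorphism $\varphi\colon (G,\lambda)\to(G',\lambda')$ preserves $P_0$ by Remark~\ref{remark:isomorphism_invariant}, hence $\varphi(V_\bot)=V'_\bot$ and $\varphi$ sends the virtual clique edges added on the separators in $P_0(G)$ to the analogous ones in $G'_\bot$. Because the $\ISOTYPE$ entries appearing in the definition of $\lambda_\bot$ are by construction isomorphism invariants of the corresponding decorated tuples, the restriction $\varphi|_{V_\bot}$ is an isomorphism of $(G_\bot,\lambda_\bot)$ with $(G'_\bot,\lambda'_\bot)$.

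For the converse I plan to extend a given isomorphism $\psi\colon (G_\bot,\lambda_\bot)\to(G'_\bot,\lambda'_\bot)$ to an isomorphism $\varphi\colon (G,\lambda)\to(G',\lambda')$ by reading the decomposition pieces off the $\ISOTYPE$-data recorded in $\lambda_\bot$. First I note that $\psi$ sends the separators of $P_0(G)$ to those of $P_0(G')$: for $k=2$ they appear as specific edges of $G_\bot$ singled out by $\lambda_\bot$ (non-trivial second coordinate, possibly with $0$ as first coordinate), and for $k=1$ they appear as vertices detected via the self-loop colour $\lambda_\bot(v,v)$. Consequently, for each $(S,K)\in P_0(G)$ the equality of the $\ISOTYPE$-coordinates yields an isomorphism
\[
\eta_S\colon (G^S_\top,\lambda^S_\top)\ \longrightarrow\ (G'^{\psi(S)}_\top,\lambda'^{\psi(S)}_\top)
\]
extending $\psi|_S$, i.e.\ with $\eta_S(s)=\psi(s)$ for every $s\in S$; I fix one such $\eta_S$ for each first coordinate $S$ appearing in $P_0(G)$.

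I then define $\varphi(v):=\psi(v)$ for $v\in V_\bot$ and $\varphi(v):=\eta_S(v)$ for $v\in K$ with $(S,K)\in P_0(G)$. By Lemma~\ref{lem:minimum:degree:implies:disjoin:min:seps}(\ref{item:min:comp:trivial:int}) the sets $K$ over all $(S,K)\in P_0(G)$ are pairwise disjoint and disjoint from $V_\bot$, so $\varphi$ is well-defined on $V(G)$; the analogous disjointness in $G'$ together with the injectivity of $\psi$ (which forces different $S$ in $P_0(G)$ to have different images $\psi(S)$ in $P_0(G')$) then ensures $\varphi$ is a bijection. To verify that $\varphi$ is an isomorphism I split the edges of $G$ into two kinds: edges inside $V_\bot$ are edges of $G_\bot$ (distinguished from virtual clique edges by the first coordinate of $\lambda_\bot$) and hence preserved by $\psi$; an edge of $G$ incident to some $K$ lies entirely inside $S\cup K$ by the separator property of $S$, and is preserved by $\eta_S$ (again using the first coordinate of $\lambda^S_\top$ to separate real from virtual edges within $S$). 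Vertex and arc colours are preserved by the same case analysis. The main subtlety of the argument is the global consistency of this piecewise definition, which is guaranteed precisely by the disjointness conclusion of Lemma~\ref{lem:minimum:degree:implies:disjoin:min:seps} together with the agreement $\eta_S|_S=\psi|_S$; this is the point at which the minimum-degree hypothesis is used.
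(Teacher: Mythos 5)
Your proof is correct and follows essentially the same route as the paper's: the forward direction by isomorphism-invariance of $P_0$, and the converse by lifting $\psi$ to an isomorphism $\varphi$ via isomorphisms $\eta_S$ recovered from the $\ISOTYPE$-data in $\lambda_\bot$, with well-definedness coming from the disjointness provided by Lemma~\ref{lem:minimum:degree:implies:disjoin:min:seps}. The only difference is cosmetic (spelling out how separators are recognised in $\lambda_\bot$ separately for $k=1$ and $k=2$, which the paper leaves implicit).
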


\begin{proof}
\rm($\Longrightarrow$) Suppose that~$\varphi$ is an isomorphism from~$(G, \lambda)$ to~$(G', \lambda')$. Since~$P_0(G)$ is isomorphism-invariant (Remark~\ref{remark:isomorphism_invariant}), we know that~$\varphi(V(G_{\bot}))= V(G'_{\bot})$. We claim that~$\varphi$ induces an isomorphism from~$(G_{\bot}, \lambda_\bot)$ to~$(G'_{\bot},\lambda'_\bot)$.
For this it suffices to observe that the definitions of~$G_{\bot}$ from~$G$ and~$\lambda_{\bot}$ from~$\lambda$ are isomorphism invariant.

\rm($\Longleftarrow$)
Conversely suppose~$\widetilde{\varphi}$ is an isomorphism from the graph~$(G_{\bot}, \lambda_\bot)$ to~$(G'_{\bot},\lambda'_\bot)$. Let~$\{S_1,\dots,S_t\}\coloneqq  \{S\mid \exists(S,K)\in P_0(G)\}$ be the set of separators that appear in~$P_0(G)$.
Since~$\widetilde{\varphi}$ respects the colorings~$\lambda_\bot$ and~$\lambda'_\bot$ we can conclude that
\[\{\widetilde{\varphi}(S_1),\dots,\widetilde{\varphi}(S_t)\}=  \{S\mid \exists(S,K)\in P_0(G')\}.\]
For each~$j\in \{1,\dots,t\}$ we choose an isomorphism~$\varphi_j$ from~$G^{S_j}_{\top}$ to~$G'^{\widetilde{\varphi}(S_j)}_{\top}$ that maps each~$s\in S_j$ to~$\widetilde{\varphi}(s)\in \widetilde{\varphi}(S_j)$. We know that such an isomorphism exists because~$\widetilde{\varphi}$ respects the colorings~$\lambda_\bot$ and~$\lambda'_\bot$.
We define a map~$\varphi$ from~$(G, \lambda)$ to~$(G', \lambda')$ by setting
\[\varphi(v)\coloneqq \begin{cases}
\widetilde{\varphi}(v) & \text{ if~$v \in V(G_{\bot})$} \\
\varphi_j(v) & \text{ if there is a set~$K\subseteq V(G)$ with~$v\in K$ and~$(S_j,K)\in P_0(G)$.}
\end{cases} \] 
This map is well-defined since by Parts~\ref{item:min:comp:trivial:int} and~\ref{item:char:min:comp} of Lemma~\ref{lem:minimum:degree:implies:disjoin:min:seps} the elements in the second components of pairs in~$P_0(G)$ are disjoint and not contained in~$V(G_{\bot})$. Moreover, the map is an isomorphism, since it respects all edges. Finally, by construction, it also respects the colors of vertices and arcs.
\end{proof}

\section{Reduction to vertex-colored~2-connected graphs}\label{sect:2connected}

It is easy to see that for a hereditary graph class~$\mathcal{G}$ and~$k \geq 2$, the~$k$-dimensional WL-algorithm distinguishes all (vertex-colored) graphs in~$\mathcal{G}$ if it distinguishes all (vertex-colored) connected graphs in~$\mathcal{G}$. (By a vertex-colored graph from~$\mathcal{G}$ we mean more precisely a colored graph whose underlying uncolored graph lies in~$\mathcal{G}$.) For this, one simply has to observe that for two non-isomorphic connected components, the sets of colors which the WL-algorithm computes for their vertices are disjoint. 

In this section we show a stronger statement replacing the assumption on connected graphs by an assumption on~2-connected graphs as follows. 

\begin{theorem}\label{thm:reduction2}
 Let~$\mathcal{G}$ be a hereditary graph class. If, for~$k \geq 2$, the~$k$-dimensional Weisfeiler-Leman algorithm distinguishes every two non-isomorphic~$2$-connected vertex-colored graphs~$(H, \lambda)$ and~$(H', \lambda')$ with~$H, H' \in \mathcal{G}$ from each other, then the~$k$-dimensional Weisfeiler-Leman algorithm distinguishes all non-isomorphic graphs in~$\mathcal{G}$.
\end{theorem}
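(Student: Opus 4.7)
I would prove the contrapositive by strong induction on $|V(G)|$: if the $k$-dimensional WL-algorithm does not distinguish two vertex-colored graphs $G,G'\in\mathcal{G}$, then $G\cong G'$. The base case $|V(G)|\le k$ is immediate, since $\chi^k_G$ then encodes $G$ up to isomorphism. Two standard observations dispose of the easy cases: because $k\ge 2$, the WL-algorithm recognises whether two vertices lie in the same connected component and hence matches the connected components of $G$ with those of $G'$, so the disconnected case reduces component-wise to smaller instances via the inductive hypothesis; and because $2$-dimensional WL detects cut vertices, the graphs $G$ and $G'$ must agree on being $2$-connected, in which case the hypothesis of the theorem is directly applicable.

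This leaves the case where both $G$ and $G'$ are connected but not $2$-connected. Here I would invoke Lemma~\ref{lem:iso:if:and:only:if:smaller:iso} with $k=1$ (reading the vertex colorings as arc colorings), so that it suffices to prove $(G_\bot,\lambda_\bot)\cong(G'_\bot,\lambda'_\bot)$. Since $G$ contains at least one cut vertex, $G_\bot$ is a strictly smaller induced subgraph of $G$, and it lies in $\mathcal{G}$ because $\mathcal{G}$ is hereditary. Applying the induction hypothesis to $(G_\bot,\lambda_\bot)$ and $(G'_\bot,\lambda'_\bot)$ therefore reduces the task to showing that the $k$-dimensional WL-algorithm does not distinguish these two colored graphs either.

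The technical heart, and what I expect to be the main obstacle, is to show that the $k$-dimensional WL-coloring on $(G,\lambda)$ is at least as fine as the one on $(G_\bot,\lambda_\bot)$. I would split this into two sub-steps. The first is to show that the initial coloring $\lambda_\bot$ on tuples from $V(G_\bot)^k$ is determined by $\chi^k_G$. The key observation is that for each $(s,K)\in P_0(G)$, the subgraph $G[\{s\}\cup K]$ is $2$-connected and belongs to $\mathcal{G}$, and because the cut vertex $s$ is the sole interface between $K$ and the rest of $G$, the coloring $\chi^k_G$ restricted to tuples from $(\{s\}\cup K)^k$ refines the $k$-dimensional WL-coloring of $G[\{s\}\cup K]$ with $s$ individualized. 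The hypothesis of the theorem, applied to this $2$-connected colored graph, then ensures that the color $\chi^k_G(s,s,\ldots,s)$ determines $\ISOTYPE((G^{\{s\}}_\top,\lambda^{\{s\}}_\top)_{(s,s)})$, and hence $\lambda_\bot(s,s)$. The second sub-step is a round-by-round induction on WL iterations showing that each refinement step of WL on $G_\bot$ is mirrored by WL on $G$: since $V(G_\bot)$ is a union of $\chi^k_G$-color classes, the multisets of colors that WL on $G_\bot$ computes for a tuple in $V(G_\bot)^k$ can be recovered from the corresponding multisets in $G$ by restricting to witnesses $w\in V(G_\bot)$. Combining the two sub-steps, equal $\chi^k$-histograms on $G$ and $G'$ force equal $\chi^k$-histograms on $G_\bot$ and $G'_\bot$, closing the induction.
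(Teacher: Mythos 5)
Your proposal is correct and follows essentially the same route as the paper: an induction on graph size using the $G_\bot$ construction, reducing via Lemma~\ref{lem:iso:if:and:only:if:smaller:iso}, and then showing that the stable coloring on $G$ restricted to $V(G_\bot)$ refines $\lambda_\bot$ (your first sub-step mirrors Lemmas~\ref{lem:one:component:hanging} and~\ref{lem:distinct_cutvertices}) together with a round-by-round WL simulation argument (your second sub-step mirrors Corollary~\ref{cor:induced:on:bot:graphs:is:finer}). The only cosmetic difference is that the paper phrases the key refinement step as a cross-graph implication ($\prescript{}{i}\chi^k_{G_\top}(u,v)\neq\prescript{}{i}\chi^k_{G'_\top}(u',v')\Rightarrow\prescript{}{i}\chi^k_G(u,v)\neq\prescript{}{i}\chi^k_{G'}(u',v')$), proved by induction on the iteration, rather than as a within-graph refinement statement, but the underlying mechanism is the same.
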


For the rest of this section, let~$\mathcal{G}$ be a hereditary graph class. Recall that for a graph~$G \in \mathcal{G}$ with an initial vertex coloring or arc coloring~$\lambda$, the coloring~$\chi^k_G$ is the stable~$k$-tuple coloring produced by the~$k$-dimensional WL-algorithm on~$(G,\lambda)$.

For~$\ell$ vertices~$u_1, \dots, u_\ell$ with~$\ell < k$, we define
\[\chi^k_G (u_1, \dots, u_\ell) \coloneqq \chi^k_G (u_1, \dots, u_\ell, \underbrace{u_\ell, \dots, u_\ell}_{k-\ell \text{ times}})\]
to be the coloring of the~$k$-tuple resulting from extending the~$\ell$-tuple by repeating~$k-\ell$ times its last entry.

To prove Theorem~\ref{thm:reduction2}, we first show that the~$2$-dimensional WL-algorithm distinguishes pairs of vertices that lie in a common~$2$-connected component from pairs that do not.

\begin{theorem}\label{thm:edges_conncomp}
Assume~$k \geq 2$ and let~$G, H$ be two graphs. 
Let~$u$ and~$v$ be vertices from the same~$2$-connected component of~$G$ and let~$u'$ and~$v'$ be vertices that are not contained in a common~$2$-connected component of~$H$. Then~$\chi^k_G(u,v) \neq \chi^k_H(u',v')$.
\end{theorem}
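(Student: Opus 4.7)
My plan is to reduce immediately to the case $k=2$, since for $k\geq 2$ the color $\chi^k_G(u,v)$, defined in the preliminaries by padding to a $k$-tuple with repetitions of the last entry, is at least as discriminating as $\chi^2_G(u,v)$. Throughout I write $\chi$ in place of $\chi^2_G$ or $\chi^2_H$. The guiding intuition is that $\chi$ computes (at least) the coherent-closure of the adjacency relation and hence, in particular, determines the walk-counts $N^G_\ell(u,v)$ of every length $\ell$ between pairs of vertices (since powers of the adjacency matrix lie in the coherent algebra and are therefore constant on each color class).

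First I would dispense with the easy subcases. If $u=v$, the initial coloring $\prescript{}{0}\chi^2$ already distinguishes diagonal from off-diagonal pairs, forcing $u'=v'$; but then $u',v'$ trivially do not lie in a common $2$-connected component (blocks have at least three vertices by the paper's definition), so the hypothesis is vacuous. Hence I may assume $u\neq v$ and $u'\neq v'$. If $u,v$ lie in different connected components of $G$, then $N^G_\ell(u,v)=0$ for every $\ell$, whereas $u',v'$, lying in a common block of $H$, admit walks of every sufficiently large length; walk-count determinacy of $\chi^2$ then separates the colors. So I further reduce to the genuine case: $u,v$ share a $2$-connected component of $G$, while $u',v'$ lie in a common connected component of $H$ but are separated in the block-cut tree of $H$ by at least one cut vertex $c'$ of $H$.

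Assume towards a contradiction that $\chi^2_G(u,v)=\chi^2_H(u',v')$. Stability of the $2$-dimensional WL coloring forces the multisets
\[
\{\!\!\{(\chi^2_G(w,v),\chi^2_G(u,w)) : w\in V(G)\}\!\!\}\quad\text{and}\quad\{\!\!\{(\chi^2_H(w',v'),\chi^2_H(u',w')) : w'\in V(H)\}\!\!\}
\]
to coincide, so some $w\in V(G)$ matches $c'$; that is, $\chi^2_G(u,w)=\chi^2_H(u',c')$ and $\chi^2_G(w,v)=\chi^2_H(c',v')$. The plan is now to conclude that $w$ separates $u$ from $v$ in $G$, which contradicts the existence of two internally vertex-disjoint $u$--$v$ paths guaranteed inside a common $2$-connected component by Menger's theorem.

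I expect this separator-transfer step to be the main obstacle. The approach I would pursue exploits walk-count identities: every walk $u'\to v'$ in $H$ traverses $c'$, so the generating function $\sum_\ell N^H_\ell(u',v')\,z^\ell$ factors (up to a correction accounting for revisits of $c'$) through the generating functions of $u'\to c'$ and $c'\to v'$ walks, and these generating functions are determined by the pair-colors $\chi^2_H(u',c')$ and $\chi^2_H(c',v')$. Matching colors in $G$ then enforce the analogous factorization through $w$, forcing every $u\to v$ walk in $G$ to traverse $w$ and completing the contradiction. Making the revisit-correction precise is the subtle combinatorial point; an alternative, perhaps cleaner route, is an induction on the iteration index $i$ showing that $\prescript{}{i}\chi^2$ eventually refines the ``relative position in the block-cut tree'' relation, so that a color match between $c'$ and $w$ directly forces $w$ to play the role of a separator in~$G$.
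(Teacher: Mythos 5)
Your plan is essentially the same as the paper's proof. You both reduce to the observation that for $k\geq 2$ the stable coloring $\chi^2$ refines the equivalence ``same walk counts $W_i(\cdot,\cdot)$ for all $i$,'' extract the cut vertex $c'$ separating $u'$ from $v'$, use WL-stability to transfer its existence to a vertex $w$ in $G$ with matching pair-colors, and then derive a contradiction from the fact that, since $u,v$ lie in a common block, some walk in $G$ avoids $w$ while every walk in $H$ between $u'$ and $v'$ traverses $c'$. The only point you flag as unresolved --- the ``revisit correction'' in factoring the $u\to v$ walk count through $w$ --- is exactly what the paper handles by requiring, in addition to $W_i(u,w)=W_i(u',c')$ and $W_i(w,v)=W_i(c',v')$, also $W_i(w,w)=W_i(c',c')$ for all $i$; with these three families of identities the number of length-$i$ walks from $u$ to $v$ \emph{passing through} $w$ equals the number of length-$i$ walks from $u'$ to $v'$ passing through $c'$ (by decomposing a walk at its first and subsequent visits to $w$, a standard deconvolution exactly as you sketch), and then a single length $d$ of a $w$-avoiding $u$--$v$ walk yields $W_d(u,v) > W_d(u',v')$. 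Your folding of the paper's ``no matching $w$ exists'' case into ``WL-stability forces a match'' is just the contrapositive of the same dichotomy. So: correct approach, same route; the one sketched step is routine and is exactly the paper's conditions~1--3.
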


\begin{proof}
To improve readability, in this proof we omit the superscripts~$k$, i.e., we write~$\chi_G$ and~$\chi_H$ instead of~$\chi^k_G$ and~$\chi^k_H$, respectively. 

For an integer~$i$ and vertices~$x,y$ denote by~$W_i(x,y)$ the number of walks of length exactly~$i$ from~$x$ to~$y$. (It will be clear from context in which graph we count the number of walks.)
By induction on~$i$, it is easy to see that for~$k\geq 2$ it holds that~$W_i(x,y)\neq W_i(x',y')$ implies~$\chi_G(x,y)\neq \chi_H(x',y')$ (\cite[p.~18]{MR0543783}).
  Thus, it suffices to show that for some~$i$, we have~$W_i(u,v) \neq W_i(u',v')$. Since~$u'$ and~$v'$ are not contained in the same~$2$-connected component, there is some cut vertex~$w'$ such that every walk from~$u'$ to~$v'$ passes~$w'$. Suppose that there does not exist a vertex~$w$ such that for all~$i$ the following hold: 
 \begin{enumerate}
  \item~$W_i(u,w) = W_i(u',w')$
  \item~$W_i(w,w) = W_i(w',w')$
  \item~$W_i(w,v) = W_i(w',v')$.
 \end{enumerate}

Then for every vertex~$w$ it holds that~$\chi_G(u,w) \neq \chi_H(u',w')$ or~$\chi_G(w,w) \neq \chi_H(w',w')$ or~$\chi_G(w,v) \neq \chi_H(w',v')$. If~$\chi_G(w) \neq \chi_H(w')$, then~$\chi_G(u,w) \neq \chi_H(u',w')$ and thus, for every vertex~$w$ it holds that~$\chi_G(u,w) \neq \chi_H(u',w')$ or~$\chi_G(w,v) \neq \chi_H(w',v')$. 
In other words, there is no vertex~$w$ such that~$( \chi_G(w,v),\chi_G(u,w)) = (\chi_H(w',v'),\chi_H(u',w'))$. By the definition of the WL-algorithm this implies that~$\chi_G(u,v) \neq \chi_H(u',v')$.  
 
 Now suppose that there is a vertex~$w$ such that for all~$i$ Conditions~1,~2 and~3 hold. Then for every~$i$ the number of walks of length~$i$ from~$u$ to~$v$ which pass~$w$ equals the number of walks from~$u'$ to~$v'$ which pass~$w'$. However, there must be a walk from~$u$ to~$v$ which avoids~$w$. Let~$d$ be its length. We have~$W_d(u,v) > W_d(u',v')$ and thus~$\chi_G(u,v) \neq \chi_H(u',v')$.
\end{proof}

Next we argue that for~$k \geq 2$, the~$k$-dimensional WL-algorithm distinguishes cut vertices from other vertices.

\begin{corollary}\label{cor:cut:vertices:are:different}
 Let~$k \geq 2$ and assume~$G, H$
 are connected graphs.
 Let~$w\in V(G)$ and~$w'\in V(H)$ be vertices such that~$G - \{w\}$ is connected and~$H - \{w'\}$ is disconnected. Then~$\chi^k_G(w) \neq \chi^k_H(w')$.
\end{corollary}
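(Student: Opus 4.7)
Plan: I would argue the corollary by contradiction, assuming $\chi^k_G(w) = \chi^k_H(w')$, and aim to violate Theorem~\ref{thm:edges_conncomp}. Since $k$-dimensional WL on padded tuples refines $2$-dimensional WL (a standard property of the WL hierarchy, provable by induction on WL iterations by noting that the $k$-dim refinement of a padded tuple subsumes the $\ell$-dim refinement of its first $\ell$ coordinates), it suffices to treat the case $k = 2$. So I assume $\chi^2_G(w,w) = \chi^2_H(w',w')$ and derive a contradiction.

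First, by stability of the WL refinement applied to $\chi^2_G(w,w) = \chi^2_H(w',w')$, and using the fact that neighbor, non-neighbor, and diagonal pair colors are already separated in the initial pair coloring, I extract the multiset equality
\[
\{\!\{\chi^2_G(u, w) \mid u \in N(w)\}\!\} \;=\; \{\!\{\chi^2_H(u', w') \mid u' \in N(w')\}\!\}.
\]
In particular $\deg_G(w) = \deg_H(w')$; and this common degree is at least~$2$, because $w'$ is a cut vertex in the connected graph $H$ and therefore has a neighbor in each of the $\geq 2$ components of $H - \{w'\}$. I then fix a component $A'$ of $H - \{w'\}$ and a neighbor $u' \in A' \cap N(w')$, and choose a matching neighbor $u \in N(w)$ with $\chi^2_G(u, w) = \chi^2_H(u', w')$.

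Second, I apply the refinement once more to the equality $\chi^2_G(u, w) = \chi^2_H(u', w')$; marginalizing to the second coordinate of the refinement multiset and restricting to "neighbor" colors in the first coordinate yields
\[
\{\!\{\chi^2_G(u, z) \mid z \in N(w)\}\!\} \;=\; \{\!\{\chi^2_H(u', z') \mid z' \in N(w')\}\!\}.
\]
On the left side, every $z \in N(w)$ with $z \neq u$ lies in a common $2$-connected component of $G$ with $u$: since $G - \{w\}$ is connected there is a path from $u$ to $z$ in $G - \{w\}$, and this path together with the edges $uw$ and $wz$ forms a cycle of length $\geq 3$ through $u$, $w$, $z$, which is a $2$-connected subgraph. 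On the right side, choose a component $B' \neq A'$ of $H - \{w'\}$ (which exists because $w'$ is a cut vertex) and a neighbor $v' \in B' \cap N(w')$; because $u'$ and $v'$ lie in different components of $H - \{w'\}$, they are not in a common $2$-connected component of $H$. By Theorem~\ref{thm:edges_conncomp}, the color $\chi^2_H(u', v')$ differs from every color $\chi^2_G(u, z)$ coming from a pair in a common $2$-connected component of $G$, and it is also distinct from the diagonal color $\chi^2_G(u, u)$. Thus the right-hand multiset contains a color that cannot appear on the left, contradicting the displayed equality.

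I expect the main obstacle to be the refinement bookkeeping: justifying the two displayed multiset equalities by legitimately marginalizing the WL refinement multisets and restricting to "neighbor" pair colors, which relies on the fact that neighborhood membership is encoded in the initial pair coloring and therefore preserved in every $\chi^2$-color. Once this is set up, the geometric observation---that any two neighbors of a non-cut vertex share a $2$-connected component via a cycle through $w$---together with the color-separation supplied by Theorem~\ref{thm:edges_conncomp}, closes the argument.
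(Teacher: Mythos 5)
Your proof is correct and takes essentially the same approach as the paper: both rely on Theorem~\ref{thm:edges_conncomp} together with the observation that all neighbors of the non-cut vertex~$w$ lie in a common~$2$-connected component (via a cycle through~$w$), whereas the cut vertex~$w'$ has two neighbors in distinct~$2$-connected components. The paper packages this more tersely (fixing $u',v'$ in different components of~$H-\{w'\}$ and arguing no~$u\in V(G)$ can have~$\chi^k_G(u,w)=\chi^k_H(u',w')$), while you carry out the same refinement argument more explicitly by contradiction and with the marginalization bookkeeping spelled out.
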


\begin{proof}
Let~$u'$ and~$v'$ be two neighbors of~$w'$ not sharing a common~$2$-connected component in~$H$. Note that such vertices do not exist for~$w$ in~$G$. 

It suffices to show that for all~$u \in V(G)$ it holds that~$\chi^k_G(u,w) \neq \chi^k_H(u',w')$. By Theorem~\ref{thm:edges_conncomp}, the color~$\chi^k_H(u',w')$ encodes that there is~$v'$ which is a neighbor of~$w'$ and which is not contained in the same~$2$-connected component as~$u'$. For~$u$ and~$w$, such a vertex does not exist in~$G$.
\end{proof}

We prove Theorem~\ref{thm:reduction2} by induction over the sizes of the input graphs. The strategy is to show that on input~$(G, \lambda)$ the WL-algorithm implicitly computes the graph~$(G_\perp, \lambda_\perp)$ and to then apply Lemma~\ref{lem:iso:if:and:only:if:smaller:iso}.

\begin{lemma}\label{lem:bot:vertices:are:different}
	Let~$k \geq 2$ and assume~$G, H$ are connected graphs that are not~2-connected. For vertices~$v \in V(G_\perp)$ and~$w \in V(H) \setminus V(H_\perp)$ we have~$\chi^k_G(v) \neq \chi^k_H(w)$.
\end{lemma}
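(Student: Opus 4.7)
The plan is to split on whether $v$ is a cut vertex of $G$. The two key tools are Corollary~\ref{cor:cut:vertices:are:different}, which implies that $\chi^k_G(v)$ detects whether $v$ is a cut vertex, and Theorem~\ref{thm:edges_conncomp}, which implies that the pair color $\chi^k_G(u,v)$ detects whether $u$ and $v$ share a $2$-connected component.

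Since $w \in V(H) \setminus V(H_\perp)$, by definition there exist a cut vertex $s'$ of $H$ and a component $K'$ of $H - \{s'\}$ with $(s', K') \in P_0(H)$ and $w \in K'$; by the characterization of $P_0$ for connected non-$2$-connected graphs given right after its definition, $K'$ contains no cut vertex of $H$, and in particular $w$ is not a cut vertex of $H$. Hence if $v$ is a cut vertex of $G$, Corollary~\ref{cor:cut:vertices:are:different} immediately yields $\chi^k_G(v) \neq \chi^k_H(w)$.

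So I assume $v$ is not a cut vertex of $G$. I first claim that the block $B$ of $G$ containing $v$ has at least two cut vertices of $G$: if $B$ had none, then $B=G$ would be $2$-connected (contradicting the hypothesis, apart from the degenerate cases $G \in \{K_1, K_2\}$, which I would dispatch directly by comparing degrees and neighbor colors using Corollary~\ref{cor:cut:vertices:are:different}); and if $B$ had exactly one cut vertex $s$, then $B \setminus \{s\}$ would coincide with the component of $G - \{s\}$ containing $v$, contain no cut vertex of $G$, and thus witness $(s, B \setminus \{s\}) \in P_0(G)$ with $v \in B \setminus \{s\}$, contradicting $v \in V(G_\perp)$. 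So I pick distinct cut vertices $c_1, c_2 \in B$; both lie in the same $2$-connected component as $v$ in $G$.

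Now I assume for contradiction that $\chi^k_G(v) = \chi^k_H(w)$. By stability of the WL-algorithm for $k \geq 2$, the multisets $\{\!\!\{\chi^k_G(u, v) : u \in V(G)\}\!\!\}$ and $\{\!\!\{\chi^k_H(u', w) : u' \in V(H)\}\!\!\}$ coincide, so setting $C := \{\chi^k_G(c_1, v), \chi^k_G(c_2, v)\}$ produces at least two distinct vertices $u'_1, u'_2 \in V(H)$ with $\chi^k_H(u'_i, w) \in C$. For each such $u'_i$, because pair colors refine individual vertex colors, Corollary~\ref{cor:cut:vertices:are:different} forces $u'_i$ to be a cut vertex of $H$, and Theorem~\ref{thm:edges_conncomp} forces $u'_i$ and $w$ to lie in a common $2$-connected component of $H$. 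However, an internally-disjoint-paths argument shows that $w$'s block in $H$ is contained in $K' \cup \{s'\}$, and since $K'$ contains no cut vertex of $H$, that block contains at most one cut vertex of $H$ (namely $s'$), the desired contradiction. The main obstacle I anticipate is the WL-multiset bookkeeping---verifying precisely that, for $k \geq 2$, the stable color $\chi^k_G(v)$ determines the multiset of pair colors $\chi^k_G(u,v)$ and that pair colors refine vertex colors---together with the (easy but separate) treatment of the $K_1$/$K_2$ degeneracies.
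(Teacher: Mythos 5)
Your proof is correct and follows essentially the same approach as the paper's: split on whether $v$ is a cut vertex, handle that case via Corollary~\ref{cor:cut:vertices:are:different}, and otherwise exhibit two cut vertices in $v$'s block and observe—via Theorem~\ref{thm:edges_conncomp} together with Corollary~\ref{cor:cut:vertices:are:different}—that $\chi^k_G(v)$ encodes the existence of two cut vertices sharing a block with $v$, whereas $w$'s block contains at most one cut vertex. The paper states the characterization of $V(G_\perp)$ up front and argues more informally, while you re-derive it and spell out the multiset bookkeeping, but the decomposition and the two key tools are identical.
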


\begin{proof}
Note that for a connected but not~2-connected graph~$G$, a vertex~$v\in V(G)$ is in~$V(G_\perp)$ if and only if it is a cut vertex or there are at least two cut vertices that lie in the same~2-connected component as~$v$. The equivalent statement holds for~$H$. If~$v$ is a cut vertex of~$G$, then the lemma follows immediately from Corollary~\ref{cor:cut:vertices:are:different}.

If~$v$ is not a cut vertex, then there are at least two cut vertices~$u$ and~$u'$ lying in the same~2-connected component as~$v$. Note that there are no such two vertices for~$w$. By Corollary~\ref{cor:cut:vertices:are:different},~$u$ and~$u'$ obtain colors distinct from colors of non-cut vertices. Thus, also the colors~$\chi^k_G(v,u)$ and~$\chi^k_G(v,u')$ are distinct from all colors of edges from~$v$ to non-cut vertices. Moreover, Theorem~\ref{thm:edges_conncomp} yields that the colors~$\chi^k_G(v,u)$ and~$\chi^k_G(v,u')$ also encode that~$v,u,u'$ all share a common~2-connected component. This information about the existence of such~$u$ and~$u'$ is contained in the color~$\chi^k_G(v)$ and thus,~$\chi^k_G(v) \neq \chi^k_H(w)$.
\end{proof}

\begin{lemma}\label{lem:one:component:hanging}
For graphs~$G, G' \in \mathcal{G}$ with vertex colorings~$\lambda$ and ~$\lambda'$, respectively, assume~$(s,K) \in P_0(G)$ and~$(s',K') \in P_0(G')$. For~$k \geq 2$ suppose the~$k$-dimensional Weisfeiler-Leman algorithm distinguishes all non-isomorphic vertex-colored~2-connected graphs in~$\mathcal{G}$. Assume there is no isomorphism from~$(G_{\top}^{(s,K)}, \lambda_\top^{(s,K)})$ to~$(G_{\top}'^{(s',K')}, \lambda_\top'^{(s',K')})$ that maps~$s$ to~$s'$. Then
\[
\{ \chi^k_G(s,v) \mid v \in K\} \cap \{ \chi^k_{G'}(s',v') \mid v' \in K'\} = \varnothing.
\]
\end{lemma}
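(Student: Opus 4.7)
The plan is to argue by contradiction. Suppose $\chi^k_G(s,v) = \chi^k_{G'}(s',v')$ for some $v \in K$ and $v' \in K'$. I will show that this forces an isomorphism from $(G^{(s,K)}_\top, \lambda^{(s,K)}_\top)$ to $(G'^{(s',K')}_\top, \lambda'^{(s',K')}_\top)$ mapping $s$ to $s'$, contradicting the assumption of the lemma.

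First I record that in the main case $|K| \geq 2$ the graph $G^{(s,K)}_\top = G[\{s\} \cup K]$ is $2$-connected: since $(s,K) \in P_0(G)$ no vertex of $K$ is a cut vertex of $G$, hence (because the only edges from $K$ to $V(G)\setminus(\{s\}\cup K)$ go through $s$) no vertex of $K$ is a cut vertex of $G[\{s\}\cup K]$ either; moreover $G[K]$ is connected, so $s$ is not a cut vertex of $G[\{s\}\cup K]$. The degenerate case $|K|=1$ I would handle separately, since then $G^{(s,K)}_\top$ is essentially a single colored edge and the claim reduces to checking that $\chi^k_G(s,v)$ encodes the colors of $s$ and $v$. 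Adding a unique color to $s$ in $G^{(s,K)}_\top$ yields a vertex-colored $2$-connected graph in the hereditary class $\mathcal{G}$, to which the hypothesis of Theorem~\ref{thm:reduction2} applies.

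The central step is to transfer the equality $\chi^k_G(s,v) = \chi^k_{G'}(s',v')$ from the full graphs to the individualized $2$-connected subgraphs. By Corollary~\ref{cor:cut:vertices:are:different}, Theorem~\ref{thm:edges_conncomp}, and Lemma~\ref{lem:bot:vertices:are:different}, the coloring $\chi^k_G$ separates $\{s\}\cup K$ from $V(G)\setminus(\{s\}\cup K)$: the vertex $s$ is identifiable as a cut vertex, and vertices of $K$ are identifiable as members of a $2$-connected component containing a cut vertex while lying outside $V(G_\bot)$. Since every vertex of $K$ has all of its neighbors inside $\{s\}\cup K$, I would argue by induction on the round index of the WL-refinement that the stable coloring $\chi^k_G$ restricted to $(\{s\}\cup K)^k$ refines the stable coloring produced by the $k$-dimensional WL-algorithm on $G^{(s,K)}_\top$ with $s$ individualized: any contribution to the WL-update of a tuple inside $(\{s\}\cup K)^k$ coming from a vertex $w \notin \{s\}\cup K$ is determined by its color class and can be absorbed into the $s$-individualization on the smaller graph.

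Combining, $\chi^k_G(s,v) = \chi^k_{G'}(s',v')$ implies that the $k$-dimensional WL-algorithm does not distinguish $(G^{(s,K)}_\top, \lambda^{(s,K)}_\top)$ with $s$ individualized from the analogous colored graph built from $G'$, and propagating equality through the iterated refinement equates the entire color multisets. By the hypothesis of Theorem~\ref{thm:reduction2}, the two colored $2$-connected graphs must then be isomorphic; since $s$ and $s'$ carry unique individualization colors, any such isomorphism sends $s$ to $s'$. Forgetting the individualization yields an isomorphism from $(G^{(s,K)}_\top, \lambda^{(s,K)}_\top)$ to $(G'^{(s',K')}_\top, \lambda'^{(s',K')}_\top)$ mapping $s$ to $s'$, the desired contradiction. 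The main obstacle I expect is the careful induction that establishes the refinement claim: one must verify that external vertices $w \notin \{s\}\cup K$ contribute to the WL-updates on tuples in $(\{s\}\cup K)^k$ only in ways already captured by the $s$-individualization, which relies essentially on the color-separating results above together with the cut-vertex property of $s$.
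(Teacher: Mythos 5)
Your proposal is correct and follows essentially the same route as the paper: both set up a round-by-round induction showing that $\chi^k_G$ on tuples inside $\{s\}\cup K$ refines (cross-graph) the WL-stable coloring of $(G^{(s,K)}_\top,\lambda^{(s,K)}_\top)$, using the color-separation results of Theorem~\ref{thm:edges_conncomp} and Corollary~\ref{cor:cut:vertices:are:different} to control contributions of vertices outside $\{s\}\cup K$, and then invoke the hypothesis that WL distinguishes non-isomorphic $2$-connected colored graphs in $\mathcal{G}$. The only cosmetic difference is that $\lambda^{(s,K)}_\top$ already makes $s$ a singleton (the pair $(s,s)$ gets the reserved color $(0,0)$), so the extra ``individualize $s$ then forget the individualization'' step in your write-up is unnecessary, and when you carry out the induction you will want to exclude the diagonal pair $(s,s)$ from the implication, exactly as the paper's side condition $\{u,v\}\not\subseteq\{s\}$ does.
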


\begin{proof}
If~$\chi^k_G(s)\neq \chi^k_{G'}(s')$ then the conclusion of the lemma is obvious. Thus, we can assume otherwise.
We have already seen with Corollary~\ref{cor:cut:vertices:are:different} that cut vertices obtain different colors than non-cut vertices. Thus, we can assume that~$G$ and~$G'$ are already colored in a way such that~$s$ and~$s'$  have a color different from the colors of vertices in~$K \cup K'$. 

With Theorem~\ref{thm:edges_conncomp} we will now argue that for~$v\in K$ and~$v'\in K'$ we have~$\chi^k_G(v)\neq \chi^k_{G'}(v')$, which implies the lemma. For readability, we drop the superscripts~$(s,K)$ and~$(s',K')$. 
We will show by induction that if the lemma does not hold, then for all~$u,v \in K \cup \{s\}$ and all~$u',v' \in K' \cup \{s'\}$ with~$\{u,v\} \not\subseteq \{s\}$ and~$\{u',v'\} \not\subseteq \{s'\}$ the following implication is true: 
\begin{align}\label{imp:induction}
\prescript{}{i} \chi_{G_\top}^k(u,v) \neq \prescript{}{i} \chi_{G'_\top}^k(u',v') \Rightarrow  \prescript{}{i} \chi_G^k(u,v) \neq \prescript{}{i} \chi_{G'}^k(u',v').
\end{align}

For~$i = 0$ the claim follows by definition of the colorings~$\lambda_\top$ and~$\lambda'_\top$. For the induction step, assume that there exist vertices~$x,y \in K \cup \{s\}$,~$x',y' \in K' \cup \{s'\}$ such that~$\{x,y\} \not\subseteq \{s\}$,~$\{x',y'\} \not\subseteq \{s'\}$  with~$\prescript{}{i} \chi_{G_\top}^{k}(x,y) = \prescript{}{i} \chi_{G'_\top}^{k}(x',y')$ and~$\prescript{}{i+1} \chi_{G_\top}^{k}(x,y) \neq \prescript{}{i+1} \chi_{G'_\top}^{k}(x',y')$. Thus, there must be a color tuple~$(c_1,c_2)$ such that the sets
\begin{align*}
M \coloneqq{} & \big\{w \mid w \in V(G_\top) \backslash \{x,y\}, (\prescript{}{i} \chi^k_{G_\top}(w,y), 
\prescript{}{i} \chi^k_{G_\top}(x,w)) = (c_1,c_2)\big\}
\intertext{and} 
M' \coloneqq{} & \big\{w' \mid w' \in V(G'_\top) \backslash \{x',y'\}, (\prescript{}{i} \chi^k_{G'_\top}(w',y'), \prescript{}{i} \chi^k_{G'_\top}(x',w')) = (c_1,c_2)\big\}
\intertext{do not have the same cardinality. Let}
D \coloneqq{} & \{(\prescript{}{i} \chi^k_{G}(w,y), \prescript{}{i} \chi^k_{G}(x,w)) \mid w \in M\} \cup \{(\prescript{}{i} \chi^k_{G'}(w',y'), \prescript{}{i} \chi^k_{G'}(x',w')) \mid w' \in M'\}.
\end{align*} 
By induction and by Theorem~\ref{thm:edges_conncomp} 
we have that
\begin{align*}
\big\{w \mid w \in V(G) \backslash \{x,y\}, (\prescript{}{i} \chi^k_{G}(w,y), \prescript{}{i} \chi^k_{G}(x,w)) \in D\big\} &= M
\intertext{and}
\big\{w' \mid w' \in V(G') \backslash \{x',y'\}, (\prescript{}{i} \chi^k_{G'}(w',y'), \prescript{}{i} \chi^k_{G'}(x',w')) \in D\big\} &= M'
\end{align*}
and hence these sets do not have the same cardinality. Thus,~$\prescript{}{i+1} \chi_G^{k}(x,y) \neq \prescript{}{i+1} \chi_{G'}^{k}(x',y')$. 

Having shown Implication~(\ref{imp:induction}), it suffices to show that 
\[
\{ \chi^k_{G_\top}(s,v) \mid v \in K\} \cap \{ \chi^k_{G'_\top}(s',v') \mid v' \in K'\} = \varnothing.
\]
However, this follows directly from the assumption that the~$k$-dimensional WL-algorithm distinguishes every pair of non-isomorphic vertex-colored~2-connected graphs in~$\mathcal{G}$ and that the graphs~$(G_{\top}^{(s,K)}, \lambda_\top^{(s,K)})$ and~$(G_{\top}'^{(s',K')}, \lambda_\top'^{(s',K')})$ are~2-connected.
\end{proof}

With this, we can prove the following.
	
\begin{lemma}\label{lem:distinct_cutvertices}
Assume~$k \geq 2$ and suppose the~$k$-dimensional Weisfeiler-Leman algorithm distinguishes all non-isomorphic vertex-colored~2-connected graphs in~$\mathcal{G}$. For two graphs~$G, G' \in \mathcal{G}$ with vertex colorings~$\lambda, \lambda'$, respectively, suppose~$s \in V(G), s' \in V(G')$. Assume there is no isomorphism from~$(G_{\top}^{s}, \lambda_\top^{s})$ to~$(G_{\top}'^{s'}, \lambda_\top'^{s'})$ that maps~$s$ to~$s'$. 

Then~$\chi^k_G(s) \neq \chi^k_{G'}(s')$.
\end{lemma}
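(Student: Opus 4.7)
I prove the contrapositive: assuming $\chi^k_G(s) = \chi^k_{G'}(s')$, I construct an isomorphism $(G^s_\top, \lambda^s_\top) \to (G'^{s'}_\top, \lambda'^{s'}_\top)$ fixing $s \mapsto s'$, contradicting the hypothesis. The proof proceeds by first matching the two sides up at the level of leaf-component structure, then (in the main case) matching the leaf components themselves via their isomorphism type.

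In the first step, I show that $s$ has a leaf component in $P_0(G)$ iff $s'$ does in $P_0(G')$. Corollary~\ref{cor:cut:vertices:are:different} forces $s$ and $s'$ to be both or neither cut vertices. When both are cut vertices, I would use that for $k \geq 2$ the stable color $\chi^k_G(s)$ determines the multiset of colors $\chi^k_G(v)$ over neighbors $v$ of $s$; by Lemma~\ref{lem:bot:vertices:are:different} these unary colors separate $V(G_\perp)$ from $V(G) \setminus V(G_\perp)$, so the count of neighbors of $s$ lying in $V(G) \setminus V(G_\perp)$ equals the analogous count at $s'$. For a cut vertex, these neighbors are precisely the neighbors in leaf components at $s$: a neighbor of $s$ in some leaf component $K$ with $(s^*, K) \in P_0(G)$ and $s^* \neq s$ would force $s \in \{s^*\} \cup K$, impossible since $K$ contains no cut vertices of $G$. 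Hence $s$ has a leaf component in $P_0(G)$ iff $s'$ does in $P_0(G')$. If neither does, then $V(G^s_\top) = \{s\}$ and $V(G'^{s'}_\top) = \{s'\}$; the colors $\lambda'(s) = \lambda'(s')$ agree by the zeroth refinement, and $s \mapsto s'$ is the desired isomorphism.

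In the main case, both $s$ and $s'$ have leaf components attached. The equality $\chi^k_G(s) = \chi^k_{G'}(s')$ together with $k \geq 2$ yields $\{\!\!\{\chi^k_G(s, v) : v \in V(G)\}\!\!\} = \{\!\!\{\chi^k_{G'}(s', v') : v' \in V(G')\}\!\!\}$. Theorem~\ref{thm:edges_conncomp} and Lemma~\ref{lem:bot:vertices:are:different} together show that the WL color of a pair $(s, v)$ encodes the predicate ``$v$ lies in a leaf component at $s$'' (i.e., $v \in V(G) \setminus V(G_\perp)$ and $v$ shares a $2$-connected component with $s$), uniformly across both graphs, so restricting the multiset equality to such colors preserves it. Applying Lemma~\ref{lem:one:component:hanging} both within each graph (taking $G = G'$, $s = s'$) and across the two graphs then further refines the picture: the colors split by the isomorphism type of the hosting leaf component (with its root tagged), and colors coming from leaf components of distinct types are disjoint. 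Combining with the multiset equality, for every rooted iso type $T$ the sub-multiset of colors contributed by type-$T$ leaf components at $s$ equals its counterpart at $s'$. Since a type-$T$ leaf component contributes exactly $|K_T|$ vertices to its sub-multiset, equating cardinalities forces $s$ and $s'$ to host the same number of leaf components of every type.

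Finally I build the isomorphism by pairing leaf components of equal type, choosing inside each matched pair an isomorphism fixing the root (which exists by the definition of ``same type''), and gluing these together with $s \mapsto s'$; the arc colorings $\lambda^s_\top$ and $\lambda'^{s'}_\top$ are respected on $s$ because $\lambda'(s) = \lambda'(s')$ and elsewhere by the chosen per-pair isomorphisms. The main technical obstacle I foresee is extracting from a single WL color $\chi^k_G(s, v)$ the composite information ``$v$ is a leaf vertex at $s$, with host of iso type $T$'' uniformly across both graphs; this requires cleanly assembling Theorem~\ref{thm:edges_conncomp}, Lemma~\ref{lem:bot:vertices:are:different}, and Lemma~\ref{lem:one:component:hanging}, but the remaining combinatorics is routine multiset matching.
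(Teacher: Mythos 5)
Your proof is correct and follows essentially the same route as the paper's: both rest on Lemma~\ref{lem:one:component:hanging} (applied within a single graph and across the two), together with Theorem~\ref{thm:edges_conncomp} and Lemma~\ref{lem:bot:vertices:are:different}, to argue that the colors $\chi^k_G(s,v)$ partition the leaf-component vertices at $s$ according to the rooted isomorphism type of the hosting component. The paper argues by contradiction via a cardinality mismatch on a single isomorphism-type class, whereas you prove the contrapositive and assemble the isomorphism by matching components of equal type, but these are two phrasings of the same counting argument.
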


\begin{proof}
Assume otherwise that~$\chi^k_G(s) = \chi^k_{G'}(s')$.
Further suppose that
\begin{align*}
\{K_1,\dots,K_t\} & = \{K \mid (s,K) \in P_0(G)\}
\intertext{and that}
\{K'_1,\dots,K'_{t'}\} & = \{K' \mid (s',K') \in P_0(G)\}.
\end{align*}
From~$(G_{\top}^{s}, \lambda_\top^{s}) \not\cong (G_{\top}'^{s'}, \lambda_\top'^{s'})$ we conclude that there is a vertex-colored graph~$(H,\lambda_H)$ such that the sets
\begin{align*}
I & \coloneqq \left\{j\mid \left(G_{\top}^{(s,K_j)}, \lambda_\top^{(s,K_j)}\right) \cong (H,\lambda_H)\right\}
\intertext{and}
I' & \coloneqq \left\{j\mid \left({G'}_{\top}^{(s',K'_j)}, \lambda_\top'^{(s',K'_j)}\right) \cong (H,\lambda_H)\right\}
\end{align*}
have different cardinalities.
Note that all~$K_j$ with~$j\in I$ and all~$K'_j$ with~$j\in I'$ have the same cardinality.
We know by Lemma~\ref{lem:one:component:hanging} that for~$v\in K_i$ with~$i\in I$ and~$v'\in K_j$ with~$j\notin I'$ we have~$\chi^k_G(s,v)\neq \chi^k_{G'}(s',v')$. Letting~$C \coloneqq \{\chi^k_G(s,v)\mid i\in I \text{ and } v\in K_i\}$, the vertices~$s$ and~$s'$ do not have the same number of neighbors connected via an arc of color~$C$.
We conclude that~$\chi^k_G(s) \neq \chi^k_{G'}(s')$.
\end{proof}

\begin{corollary}\label{cor:induced:on:bot:graphs:is:finer}
	Assume~$k \geq 2$ and suppose the~$k$-dimensional Weisfeiler-Leman algorithm distinguishes all non-isomorphic vertex-colored~2-connected graphs in~$\mathcal{G}$.
	Let~$G, G' \in \mathcal{G}$ be connected graphs that are not~$2$-connected with vertex colorings~$\lambda, \lambda'$, respectively. If for vertices~$v_1,v_2 \in V(G_\perp)$ and~$v'_1,v'_2\in V(G'_\perp)$ we have~$\chi^k_{G_\perp}(v_1,v_2) \neq \chi^k_{G'_\perp}(v'_1,v'_2)$, then~$\chi^k_{G}(v_1,v_2) \neq \chi^k_{G'}(v'_1,v'_2)$.
\end{corollary}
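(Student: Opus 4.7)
The plan is to argue by induction on the iteration count $i$ that there is some integer $j = j(i)$ such that for all $(v_1, v_2) \in V(G_\perp)^2$ and $(v_1', v_2') \in V(G'_\perp)^2$,
\[
\prescript{}{j}\chi^k_G(v_1, v_2) = \prescript{}{j}\chi^k_{G'}(v_1', v_2') \ \Longrightarrow\ \prescript{}{i}\chi^k_{G_\perp}(v_1, v_2) = \prescript{}{i}\chi^k_{G'_\perp}(v_1', v_2').
\]
Since the WL-algorithm stabilizes in finitely many rounds, choosing $i$ past stability on both $G_\perp$ and $G_\perp'$ and contraposing yields the corollary (as the stable coloring $\chi^k_G$ refines any earlier iteration, so inequality at iteration $j$ persists at stability).

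For the base case $i = 0$, I would show that $\lambda_\perp(v_1, v_2)$, together with the isotype of $G_\perp[\{v_1,v_2\}]$, is determined by $\prescript{}{j_0}\chi^k_G(v_1, v_2)$ for some constant $j_0$. Because $G$ is connected but not 2-connected, every pair in $P_0(G)$ has a singleton separator (a cut vertex), so $G_\top^{\{v_1,v_2\}}$ is trivial when $v_1 \neq v_2$ and $G_\top^{\{v\}}$ is nontrivial only when $v$ is a cut vertex. Lemma~\ref{lem:bot:vertices:are:different} identifies which vertices lie in $V(G_\perp)$; Corollary~\ref{cor:cut:vertices:are:different} identifies cut vertices; Theorem~\ref{thm:edges_conncomp} together with the cut-vertex information then identifies 2-connected-component membership and hence the ``is a 2-separator'' marker in $\lambda_\perp$; and Lemma~\ref{lem:distinct_cutvertices} ensures that on the diagonal the color $\chi^k_G(v_1, v_1)$ encodes the isomorphism type of $(G_\top^{\{v_1\}}, \lambda_\top^{\{v_1\}})_{(v_1,v_1)}$. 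Together these encode all the data of $\lambda_\perp$.

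For the inductive step $i \to i+1$, the color $\prescript{}{i+1}\chi^k_{G_\perp}(v_1, v_2)$ aggregates a multiset over vertices $w \in V(G_\perp)$, whereas $\prescript{}{j(i)+1}\chi^k_G(v_1, v_2)$ aggregates over all $w \in V(G)$. By Lemma~\ref{lem:bot:vertices:are:different}, $\chi^k_G$ separates $V(G_\perp)$ from $V(G) \setminus V(G_\perp)$, and since the refinement step on pairs involves the summand where $u = w$, the tuple color $(\chi^k_G(w, v_2), \chi^k_G(v_1, w))$ records whether $w$ lies in $V(G_\perp)$; hence the $G$-multiset splits canonically along this partition. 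Applying the inductive hypothesis to the $V(G_\perp)$-part (using the same matching-of-multiplicities argument as in the proof of Lemma~\ref{lem:one:component:hanging}), a difference between the $G_\perp$-multisets for $(v_1, v_2)$ and $(v_1', v_2')$ lifts to a difference in the full $G$-multisets, so $\prescript{}{j(i)+1}\chi^k_G(v_1, v_2) \neq \prescript{}{j(i)+1}\chi^k_{G'}(v_1', v_2')$, and we set $j(i+1) := j(i) + 1$.

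The main obstacle is the base case: one must verify that $\chi^k_G$ encodes the full cut-vertex structure $(G_\top^{\{v\}}, \lambda_\top^{\{v\}})_{(v,v)}$ on the diagonal. This is precisely the content of Lemma~\ref{lem:distinct_cutvertices}, which relies crucially on the hypothesis that the $k$-dimensional WL-algorithm distinguishes all non-isomorphic vertex-colored 2-connected graphs in $\mathcal{G}$. A secondary subtlety---that the 2-separator marker in $\lambda_\perp$ must be detected by $\chi^k_G$ for non-adjacent pairs in $V(G_\perp)$---is handled by combining cut-vertex detection with the 2-connected-component detection of Theorem~\ref{thm:edges_conncomp}.
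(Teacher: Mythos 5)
Your proposal is correct and takes essentially the same approach as the paper's proof: the paper reduces, via Lemma~\ref{lem:bot:vertices:are:different}, to showing that $\chi^k_G$ refines $\lambda_\bot$ (handled by Lemma~\ref{lem:distinct_cutvertices} on the diagonal and trivially on edges), and your iteration-by-iteration induction is precisely the unstated argument that justifies that reduction. One small bookkeeping point: to invoke the separation of $V(G_\bot)$ from $V(G)\setminus V(G_\bot)$ inside the inductive step you need it already at iteration $j(i)$, not only in the stable coloring, so $j_0$ should be taken at least as large as the iteration at which both $G$ and $G'$ reach their stable colorings.
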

\begin{proof}
By Lemma~\ref{lem:bot:vertices:are:different}, with respect to the colorings~$\chi^k_G$ and~$\chi^k_{G'}$, the vertices in~$V(G_{\bot})$  and~$V(G'_{\bot})$ have different colors than the vertices in~$V(G) \setminus V(G_{\bot})$ and~$V(G') \setminus V(G'_{\bot})$. Thus, it suffices to show that the colorings~$\chi^k_G$ and~$\chi^k_{G'}$ refine the colorings~$\lambda_\bot$ and~$\lambda'_\bot$, respectively. For this, by the definition of~$\lambda_\bot$ and~$\lambda'_\bot$, it suffices to show the following two statements.

\begin{enumerate}
  \item If we have that~$v_1=v_2$ and~$v'_1=v'_2$ and also~$\ISOTYPE((G^{\{v_1\} }_{\top},\lambda^{\{v_1\}}_\top)_{(v_1)}) \neq
\ISOTYPE((G'^{\{v'_1\} }_{\top},\lambda'^{\{v'_1\}}_\top)_{(v'_1)})$, then~$\chi^k_{G}(v_1) \neq \chi^k_{G'}(v'_1)$. 
  \item If we have~$\{v_1,v_2\}\in E(G)$ and~$\{v'_1,v'_2\}\in E(G')$ and also~$\lambda_\bot(v_1,v_2) \neq \lambda'_\bot(v'_1,v'_2)$, then~$\chi^k_{G}(v_1,v_2) \neq \chi^k_{G'}(v'_1,v'_2)$.
\end{enumerate}

For the first item, from~$\ISOTYPE((G^{\{v_1\} }_{\top},\lambda^{\{v_1\}}_\top)_{(v_1)}) \neq
\ISOTYPE((G'^{\{v'_1\} }_{\top},\lambda'^{\{v'_1\}}_\top)_{(v'_1)})$ we know that~$v_1$ and~$v'_1$ must be cut vertices. Thus, the statement is exactly Lemma~\ref{lem:distinct_cutvertices}. For the second item, from the definition of~$\lambda_\bot$ and~$\lambda'_\bot$ we obtain~$\lambda(v_1,v_2) \neq \lambda'(v'_1,v'_2)$, which implies~$\chi^k_{G}(v_1,v_2) \neq \chi^k_{G'}(v'_1,v'_2)$. 
\end{proof}

\begin{proof}[Proof of Theorem~\ref{thm:reduction2}] Let~$(G, \lambda)$ and~$(G', \lambda')$ be vertex-colored graphs in~$\mathcal{C}$.
We prove the statement by induction on~$|V(G)|+|V(G')|$. If both graphs are~2-connected, then the statement follows directly from the assumptions. If exactly one of the graphs is~2-connected, then exactly one of the graphs has a cut vertex and the statement follows from Lemma~\ref{cor:cut:vertices:are:different}.
Thus suppose both graphs are not~2-connected but connected.
Since~$(G, \lambda) \not\cong (G', \lambda')$, we know by Lemma~\ref{lem:iso:if:and:only:if:smaller:iso} that~$(G_{\bot}, \lambda_\bot) \not\cong (G'_{\bot},\lambda'_\bot)$.
By Lemma~\ref{lem:bot:vertices:are:different} the vertices in~$V(G_{\bot})$  and~$V(G'_{\bot})$ have different colors than the vertices in~$V(G) \setminus V(G_{\bot})$ and~$V(G') \setminus V(G'_{\bot})$.
Moreover by Corollary~\ref{cor:induced:on:bot:graphs:is:finer}, the partition of the vertices and arcs induced by the coloring~$\chi^k_G$ restricted to~$V(G_{\bot})$ is finer than the partition induced by~$\lambda_\bot$. Similarly, the partition induced by~$\chi^k_{G'}$ on~$V(G'_{\bot})$  is finer than the partition induced by~$\lambda'_\bot$. By induction, the~$k$-dimensional WL-algorithm distinguishes~$(G_{\bot}, \lambda_\bot)$ from~$(G'_{\bot},\lambda'_\bot)$. Thus, the~$k$-dimensional WL-algorithm distinguishes~$(G, \lambda)$ from~$(G', \lambda')$.
\end{proof}

\section{Reduction to arc-colored~3-connected graphs}\label{sec:reduct:to:3:con}

In this section our aim is to weaken the assumption from Theorem~\ref{thm:reduction2} which requires that~2-connected graphs are distinguished to an assumption of~3-connected graphs being distinguished.

The strategy to prove our reduction follows similar ideas as those used in Section~\ref{sect:2connected}. It relies on the assumption that the input consists of vertex-colored~$2$-connected graphs, which we can make without loss of generality by the reduction from the last section. Now we consider the decomposition of vertex-colored~$2$-connected graphs into their so-called~``$3$-connected components''. 

Most of the results stated in Section~\ref{sect:2connected} have analogous formulations for the~$3$- or higher-dimensional WL-algorithm on~$2$-connected graphs. But a~$3$-connected component of a~$2$-connected graph~$G$ is not necessarily a subgraph and may only be a minor of~$G$. Thus, we require that the graph class~$\mathcal{G}$ is minor-closed. Furthermore, to enable the inductive approach we will now have to consider graphs~$G$ in which the~2-tuples~$(u,v)$ with~$\{u,v\} \in E(G)$, i.e., the arcs, are also colored. However, it turns out that it is not sufficient to require that arc-colored graphs are distinguished. In fact we need the following stronger property.

\begin{definition}
 Let~$\mathcal{H}$ be a set of graphs. We say that the~$k$-dimensional WL-algorithm \emph{correctly determines orbits} in~$\mathcal{H}$ if for all arc-colored graphs~$(G, \lambda), (G',\lambda')$ with~$G, G' \in \mathcal{H}$ and all vertices~$s \in V(G)$ and~$s' \in V(G')$ the following holds: there exists an isomorphism from~$(G,\lambda)$ to~$(G',\lambda')$ mapping~$s$ to~$s'$ if and only if~$\chi^k_G(s) = \chi^k_{G'}(s')$.

Note that for~$G' = G$, the vertex color classes obtained by an application of the~$k$-dimensional WL-algorithm to an arc-colored graph~$(G, \lambda)$ are the orbits of the automorphism group of~$G$ with respect to~$\lambda$.
\end{definition}

The main result in this section is the following reduction theorem.

\begin{theorem}\label{thm:reduction3}
 Let~$\mathcal{G}$ be a minor-closed graph class and assume~$k \geq 3$. Suppose the~$k$-dimensional Weisfeiler-Leman algorithm correctly determines orbits on all arc-colored~$3$-connected graphs in~$\mathcal{G}$. Then the~$k$-dimensional Weisfeiler-Leman algorithm distinguishes all non-isomorphic graphs in~$\mathcal{G}$.
\end{theorem}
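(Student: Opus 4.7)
The plan is to repeat the strategy of Section~\ref{sect:2connected} one level of connectivity higher. Applying Theorem~\ref{thm:reduction2} reduces the task to distinguishing all non-isomorphic arc-colored $2$-connected graphs in $\mathcal{G}$; I would then induct on $|V(G)|+|V(G')|$. The base case, in which both $G$ and $G'$ are $3$-connected, is immediate: if $G\not\cong G'$, then the orbit-determination hypothesis forces $\chi^k_G(v)\neq\chi^k_{G'}(v')$ for every $v\in V(G)$ and $v'\in V(G')$, so the color-class multisets of the two graphs are disjoint and the $k$-dimensional WL-algorithm distinguishes them. For $2$-connected but not $3$-connected graphs, the natural decomposition is encoded by $P_0(G)$, where each $(S,K)$ consists of a minimum (size-$2$) separator $S$ and a minimal component piece $K$. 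I would invoke Lemma~\ref{lem:iso:if:and:only:if:smaller:iso} with $k=2$ to reduce $(G,\lambda)\not\cong(G',\lambda')$ to $(G_\bot,\lambda_\bot)\not\cong(G'_\bot,\lambda'_\bot)$. Because $G_\bot$ is a minor of $G$ it lies in $\mathcal{G}$, and it is strictly smaller than $G$, so the induction applies to the pair $(G_\bot,G'_\bot)$. It then remains to transfer this distinction back to $G$ and $G'$, which amounts to showing that the restriction of $\chi^k_G$ to $V(G_\bot)$ refines $\lambda_\bot$, and analogously for $G'$.

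To prove this refinement I would establish $3$-connected analogs of Theorem~\ref{thm:edges_conncomp}, Corollary~\ref{cor:cut:vertices:are:different}, and Lemmas~\ref{lem:bot:vertices:are:different}--\ref{lem:distinct_cutvertices}, together with Corollary~\ref{cor:induced:on:bot:graphs:is:finer}: $k$-dimensional WL with $k\geq 3$ distinguishes pairs lying in a common $3$-connected piece from those that do not, distinguishes $2$-separators from non-separator pairs, and its colors on separator pairs $(s_1,s_2)$ already encode $\ISOTYPE((G_\top^{\{s_1,s_2\}},\lambda_\top^{\{s_1,s_2\}})_{(s_1,s_2)})$. The key new ingredient is the analog of Lemma~\ref{lem:distinct_cutvertices}, which requires comparing the colored torsos $(G_\top^S,\lambda_\top^S)$ and $((G')_\top^{S'},\lambda'_\top^{S'})$ as arc-colored graphs with a distinguished ordered separator. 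Here the orbit-determination hypothesis enters decisively: the torsos are $3$-connected arc-colored graphs in $\mathcal{G}$ (by minor-closedness and the construction of $G_\top^S$), the arc inside $S$ carries a unique color under $\lambda_\top^S$, so any isomorphism between torsos preserves $S$ setwise, and single-vertex orbit determination then suffices to pin down the elementwise correspondence between $S$ and $S'$.

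The main obstacle I expect is the minimum-degree requirement in Lemmas~\ref{lem:minimum:degree:implies:disjoin:min:seps} and~\ref{lem:iso:if:and:only:if:smaller:iso}, which for $k=2$ demands minimum degree at least~$3$: general $2$-connected graphs have vertices of degree~$2$, and these produce overlapping $2$-separators, spoiling the disjointness of the $P_0$-components that the whole decomposition argument relies on. Some pre-processing---for example, detecting and handling chains of degree-$2$ vertices in an isomorphism-invariant way, or individualizing them up front---will likely be needed before the decomposition can be carried out cleanly. A second, subtler point is verifying that when an arc coloring is already present on $G$, the induced arc coloring on $G_\top^{(S,K)}$ is faithfully recovered by the $k$-dimensional WL-algorithm running on $G$ itself, so that the analog of Lemma~\ref{lem:one:component:hanging} still goes through at level $k\geq 3$.
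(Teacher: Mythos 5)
Your proposal follows essentially the same approach as the paper: reduce via Theorem~\ref{thm:reduction2} to arc-colored $2$-connected graphs, induct on $|V(G)|+|V(G')|$ using $G\mapsto G_\bot$ via Lemma~\ref{lem:iso:if:and:only:if:smaller:iso}, and develop $3$-connected analogs of the Section~\ref{sect:2connected} machinery (the paper's Corollary~\ref{cor:separating_pairs}, Lemmas~\ref{lem:bot:vertices:are:different:3con}, \ref{lem:one:component:hanging:sep:pairs}, \ref{lem:distinct_separatingpairs}, and Corollary~\ref{cor:induced:on:bot:graphs:is:finer:3con}), with the orbit-determination hypothesis entering exactly where you place it, to force the ordered-separator correspondence on torsos. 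You also correctly anticipate the two technical obstacles the paper resolves: the minimum-degree-$3$ requirement (handled by contracting degree-$2$ chains, Lemma~\ref{lem:min:degree:3}) and the faithfulness of the induced arc coloring under WL (the content of Lemma~\ref{lem:one:component:hanging:sep:pairs}).
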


The next corollary states that the~$3$-dimensional WL-algorithm distinguishes~2-separators from other pairs of vertices.

\begin{corollary}\label{cor:separating_pairs}
 Assume~$k \geq 3$ and let~$G$ and~$H$ be~$2$-connected graphs. Let~$u,v,u',v'$ be vertices such that~$G - \{u,v\}$ is disconnected and~$H - \{u',v'\}$ is connected. Then~$\chi^k_G(u,v) \neq \chi^k_H(u',v')$. 
\end{corollary}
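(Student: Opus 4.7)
The plan is to adapt the argument of Theorem~\ref{thm:edges_conncomp} to the setting of $2$-separators and the $k$-dimensional WL-algorithm with $k\geq 3$. Assume for contradiction that $\chi^k_G(u,v)=\chi^k_H(u',v')$. The strategy is to pick $x,y$ in distinct connected components of $G-\{u,v\}$ and derive a contradiction by counting walks that avoid $\{u,v\}$, exploiting the fact that $H-\{u',v'\}$ is connected.

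The key ingredient is a walk-counting lemma: writing $\widetilde{W}_i(x,y;u,v)$ for the number of walks of length~$i$ from~$x$ to~$y$ in~$G$ whose vertices all lie in $V(G)\setminus\{u,v\}$, one shows by induction on~$i$ that whenever the joint $k$-dimensional information on the quadruple $(u,v,x,y)$ matches that on $(u',v',x',y')$ (meaning that the $\chi^k$-colors of all triples drawn from these vertex sets agree pairwise), one has $\widetilde{W}_i(x,y;u,v)=\widetilde{W}_i(x',y';u',v')$ for every~$i$. The induction mimics the walk-counting argument from the proof of Theorem~\ref{thm:edges_conncomp}: refining $\chi^k(u,v,x)$ reads the multiset over $z$ of triples $(\chi^k(u,v,z),\chi^k(u,z,x),\chi^k(z,v,x))$, the first coordinate of which distinguishes $z\in\{u,v\}$ from $z\notin\{u,v\}$ via the repeated-coordinate separation built into $\prescript{}{0}\chi^k$, the second encoding the edge $\{x,z\}$, and induction supplying the walk-count information from~$z$.

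With the lemma in place, pick $x,y$ in distinct components of $G-\{u,v\}$, so that $\widetilde{W}_i(x,y;u,v)=0$ for every~$i$. By the WL-refinement applied to the color equality $\chi^k_G(u,v)=\chi^k_H(u',v')$, there must exist matching vertices $x',y'\in V(H)\setminus\{u',v'\}$ whose joint $k$-dim colors with $(u',v')$ coincide with those of $(x,y)$ with $(u,v)$; were this not the case, the color mismatch would already be visible at the level of $\chi^k_G(u,v)$ versus $\chi^k_H(u',v')$, contradicting the assumption. Since $H-\{u',v'\}$ is connected, there exists a walk of some length~$d$ from $x'$ to $y'$ avoiding $\{u',v'\}$, yielding $\widetilde{W}_d(x',y';u',v')>0$ in contradiction to the walk-counting lemma.

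The main obstacle is establishing the walk-counting lemma: although the desired walk count depends on four specified vertices while the $3$-dimensional WL-algorithm colors only triples, one must verify that the joint $\chi^k$-colors of the triples drawn from $\{u,v,x,y\}$ together encode the necessary adjacency information and the ``$z\notin\{u,v\}$''-restriction (the latter afforded for free by the separation of triples with repeated coordinates in $\prescript{}{0}\chi^k$), so that the walk-counting induction of Theorem~\ref{thm:edges_conncomp} goes through in this enriched setting.
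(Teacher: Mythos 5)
The paper's proof of Corollary~\ref{cor:separating_pairs} is a two-line reduction: pass to $G-\{u\}$ and $H-\{u'\}$, observe that $v$ is a cut vertex of $G-\{u\}$ while $v'$ is not one of $H-\{u'\}$, invoke Corollary~\ref{cor:cut:vertices:are:different} at dimension~$k-1$, and use the standard fact that $\chi^{k}_G(u,\cdot)$ refines $\chi^{k-1}_{G-\{u\}}(\cdot)$. Your plan takes a genuinely different route: a direct walk-counting argument at dimension~$k=3$, with both separator vertices forbidden simultaneously. That route has a real gap.

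The problem is the walk-counting lemma itself. Its statement involves a quadruple $(u,v,x,y)$, but the $3$-dimensional algorithm only colors triples, and your inductive step breaks because of this. Unfolding the recursion $\widetilde{W}_{i+1}(x,y;u,v)=\sum_{z\sim x,\,z\notin\{u,v\}}\widetilde{W}_i(z,y;u,v)$ and using the refinement of $\chi^3(u,v,x)$ gives, for each such $z$, a partner $z'$ with matching colors only for the three triples drawn from $\{u,v,x,z\}$, namely $(u,v,z)$, $(u,z,x)$ and $(z,v,x)$. But your induction hypothesis for the pair $(z,y)$ requires agreement on \emph{all} triples drawn from $\{u,v,z,y\}$, including $(u,z,y)$ and $(v,z,y)$, and nothing in the refinement of $\chi^3(u,v,x)$ controls those. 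Conversely, if you instead lift the pebble on $x$ and use the bijection coming from the triple $(u,v,y)$, you regain control of $(u,z,y)$ and $(v,z,y)$ but lose the constraint ``$z$ adjacent to $x$.'' You cannot have both with only four pebbles; you would need a fifth, i.e.\ dimension~$4$. So the induction, as set up, does not close. Your preliminary step (``there exist matching $x',y'$'') is justifiable via the $(k+1)$-pebble bijective-game characterization of $\chi^k$, but that same characterization also makes the above mismatch precise, since the game never holds five constrained points simultaneously.

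A repairable version of a walk-counting approach avoids a fourth target vertex entirely: fix only $(u,v,x)$ and track, for each color class $c$ of $\chi^3_G(u,v,\cdot)$ and each $i$, the number of walks of length $i$ from $x$ inside $G-\{u,v\}$ ending at vertices of color $c$. That invariant is genuinely a function of the triple color, so the walk-counting induction does close, and comparing the stabilized reach of $x$ in a small component of $G-\{u,v\}$ with the full reach in the connected $H-\{u',v'\}$ gives the contradiction. This is essentially the paper's argument in disguise: deleting $u$ and running Corollary~\ref{cor:cut:vertices:are:different} at dimension~$k-1$ is exactly what keeps the walk-count bookkeeping down to three anchored vertices, which is why the reduction is both shorter and correct, while your quadruple-based version overshoots what $3$-dimensional WL can express.
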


\begin{proof}
Consider the connected graphs~$G-\{u\}$ and~$H-\{u'\}$. In the first graph~$v$ is a cut vertex but in the second graph~$v'$ is not a cut vertex. Thus, by Corollary~\ref{cor:cut:vertices:are:different}, we have that~$\chi^{k-1}_{G-\{u\}}(v) \neq \chi^{k-1}_{H-\{u'\}}(v')$ and thus~$\chi^k_G(u,v) \neq \chi^k_H(u',v')$. 
\end{proof}

Just as we did in the previous section, we want to apply a recursive strategy that relies on Lemma~\ref{lem:iso:if:and:only:if:smaller:iso}.
However, to apply that lemma we require a minimum degree of~$3$. The following lemma states that vertices of degree~$2$ can be removed.

\begin{lemma}\label{lem:min:degree:3}
 Let~$\mathcal{G}$ be a minor-closed graph class and assume~$k \geq 2$. Suppose the~$k$-dimensional Weisfeiler-Leman algorithm correctly determines orbits on all arc-colored graphs in~$\mathcal{G}$ of minimum degree at least~3. Then the~$k$-dimensional Weisfeiler-Leman algorithm distinguishes all non-isomorphic arc-colored graphs in~$\mathcal{G}$.
\end{lemma}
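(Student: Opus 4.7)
The plan is to replace each input graph $(G,\lambda)\in\mathcal{G}$ by a canonically defined minor $\hat{G}$ of minimum degree at least $3$, equipped with an enriched arc coloring $\hat{\lambda}$ that fully encodes the suppressed structure; then to apply the orbit-determination hypothesis on $\hat{G}$, which lies in $\mathcal{G}$ by minor-closedness. Orbit determination on the reduced class already implies WL-distinguishing of non-isomorphic members: if $(\hat{G},\hat{\lambda})\not\cong(\hat{G}',\hat{\lambda}')$, then for every pair $\hat{s}\in V(\hat{G})$, $\hat{s}'\in V(\hat{G}')$ no isomorphism maps $\hat{s}\mapsto\hat{s}'$, hence $\chi^k_{\hat{G}}(\hat{s})\neq\chi^k_{\hat{G}'}(\hat{s}')$ for every such pair and the vertex-color multisets are disjoint. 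The real work is therefore to transfer distinguishing back from the reduced to the original graphs.

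The reduction is built from three isomorphism-invariant operations iterated to stabilization: (a)~delete all isolated vertices, recording their colors as a global annotation; (b)~suppress every degree-$1$ vertex by folding its color and the color of its incident arc into the arc coloring at its unique neighbor; (c)~for every maximal degree-$2$ path $u-v_1-\cdots-v_t-w$ between vertices of degree different from $2$, contract the interior and color the resulting arc by the oriented colored isomorphism type of the suppressed internal path. Each operation is applied simultaneously across all eligible vertices, so the reduction is canonical, and $\hat{G}$ is obtained from $G$ by deletions and contractions and thus belongs to $\mathcal{G}$. If the process fails to reach minimum degree $3$, then every component of $G$ is a tree or a cycle (up to the bookkeeping), a case handled directly since the $1$-dimensional WL-algorithm identifies all arc-colored graphs of this form.

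The main obstacle is the refinement claim: that the stable coloring $\chi^k_G$ computed on the original graph refines $\hat{\lambda}$ on $V(\hat{G})$. Concretely, we must show that whenever two pairs $(u,w),(u',w')$ of degree-$\geq 3$ vertices are joined in $G,G'$ by maximal degree-$2$ paths of distinct colored isomorphism types, they receive distinct $\chi^k$-colors. For $k\geq 2$ the ingredient is the walk-counting power of the WL-algorithm (in the spirit of Theorem~\ref{thm:edges_conncomp}): the color of a pair $(u,w)$ records the multiset of colored walks of every length between $u$ and $w$, which witnesses the color sequence along any maximal degree-$2$ path joining them; an analogous argument handles the degree-$1$ case. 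I expect to prove this by induction on the number of suppression rounds, with the base case being that WL for $k\geq 2$ already separates degree-$2$ interior vertices from degree-$\geq 3$ endpoints via vertex degrees. Once refinement is in hand, non-isomorphism of $(G,\lambda)$ and $(G',\lambda')$ yields non-isomorphism of the reduced graphs (the reduction is a bijection on isomorphism types), the hypothesis distinguishes them, and refinement lifts the distinction back to $\chi^k_G$ versus $\chi^k_{G'}$.
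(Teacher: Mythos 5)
Your proposal follows essentially the same strategy as the paper's sketch: iteratively suppress low-degree vertices to obtain a canonical minor of minimum degree at least~$3$, encode the suppressed structure into the arc coloring of the reduced graph, and show that the stable $k$-WL coloring on the original graph refines the new arc coloring via walk-counting (in the spirit of Theorem~\ref{thm:edges_conncomp}). The one genuine difference is in the scope of the reduction. The paper first invokes Theorem~\ref{thm:reduction2} to restrict attention to $2$-connected graphs, which eliminates degree-$0$ and degree-$1$ vertices outright and leaves only degree-$2$ vertices to suppress. You instead attempt a single unified reduction that also deletes isolated vertices and suppresses degree-$1$ vertices; this works in principle, but it obliges you to handle two extra issues the paper sidesteps: you need an analogue of Lemma~\ref{lem:iso:if:and:only:if:smaller:iso} showing that your more aggressive reduction is a bijection on isomorphism types, and you need to explain how the ``global annotation'' of deleted isolated vertices is reflected in the WL coloring of the remaining graph (in fact, since isolated vertices contribute fixed colors to the stable coloring, a mismatch in their multiset is detected immediately, so this is fine, but it deserves to be said). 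Your decision to color contracted arcs with the full oriented colored isomorphism type of the suppressed path, rather than merely the multiset of path lengths, is a good instinct for arc-colored graphs and is what a careful expansion of the paper's sketch would do anyway.

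One concrete misstatement: you assert that the $1$-dimensional WL-algorithm identifies all arc-colored graphs whose components are trees or cycles. Color refinement identifies (colored) trees, but it does \emph{not} identify cycles --- for instance it cannot distinguish a $6$-cycle from two disjoint triangles. The correct statement (which the paper uses) is that the $2$-dimensional WL-algorithm identifies graphs of maximum degree at most~$2$. Since the lemma assumes $k\ge2$, this does not break your argument, but the claim should be corrected.
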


\begin{proof}[Sketch of the proof] The proof is a basic exercise regarding the WL-algorithm. We give a sketch. By Theorem~\ref{thm:reduction2} we can assume that the graphs are~2-connected.
We first observe that the~2-dimensional WL-algorithm identifies graphs of maximum degree at most~2. Since vertices of degree at least~3 obtain different colors than vertices of degree at most~2 it suffices now to observe that for each~$i$ the color~$\chi^k_G(u,v)$ implicitly encodes the number of paths of length exactly~$i$ from~$u$ to~$v$ whose inner vertices are of degree~2. 
Inductively we can then consider the minors obtained by retaining vertices of degree at least~3 and connecting two such vertices with an edge if there is a path between them whose inner vertices all have degree~2. The edge is colored with a color that encodes the multiset of lengths of paths between the two vertices only having inner vertices of degree~2.
\end{proof}

The lemma allows us to focus on graphs with minimum degree~3. Doing so, in analogy to Lemma~\ref{lem:bot:vertices:are:different}, the following proposition gives a characterization of the vertices in~$V(G_\perp)$.

\begin{proposition}\label{prop:tool}
Assume~$k \geq 3$, let~$G$ be a~2-connected graph of minimum degree at least~3 that is not~$3$-connected. 
Then~$x \notin V(G_\perp)$ if and only if there exists a vertex~$u$ contained in some minimal separator of~$G$ such that~$x\notin V((G-\{u\})_\perp)$ and such that the (unique)~2-connected component containing~$x$ in~$G-\{u\}$ has exactly one vertex belonging to a minimal separator of~$G$.
\end{proposition}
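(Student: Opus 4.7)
The plan is to prove both directions by carefully translating between the decomposition data of $G$, encoded by $P_0(G)$, and the block structure of $G - \{u\}$ for a suitably chosen vertex $u$. Since $G$ is $2$-connected but not $3$-connected with minimum degree at least $3$, every minimal separator of $G$ is a $2$-separator, and $x \notin V(G_\perp)$ means exactly that $x \in K$ for some pair $(\{u,v\}, K) \in P_0(G)$. The proof hinges on Lemma~\ref{lem:minimum:degree:implies:disjoin:min:seps}, which controls how minimal separators of $G$ can intersect the components appearing in $P_0(G)$.

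For the forward direction, assume $x \notin V(G_\perp)$ and fix $(\{u,v\}, K) \in P_0(G)$ with $x \in K$. I claim this choice of $u$ works. First, $v$ is a cut vertex of $G - \{u\}$ because $G - \{u, v\}$ is disconnected while $G - \{u\}$ is connected by $2$-connectivity of $G$. Second, Part~\ref{item:one:min:comp:trivial:int:contained} of Lemma~\ref{lem:minimum:degree:implies:disjoin:min:seps} implies that $K$ is disjoint from every minimal separator of $G$ other than $\{u,v\}$; in particular, no cut vertex of $G - \{u\}$ lies in $K$, since such a vertex $c \in K$ would make $\{u, c\}$ a $2$-separator of $G$ meeting $K$. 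Consequently, the block $B_x$ of $G - \{u\}$ containing $x$ is contained in $K \cup \{v\}$ (any cycle through $x$ and a vertex outside $K \cup \{v\}$ would have to cross $v$ twice), and since every block of the non-$2$-connected graph $G-\{u\}$ contains at least one cut vertex, the unique cut vertex of $G - \{u\}$ in $B_x$ must be $v$. Hence $v$ is also the unique vertex of $B_x$ belonging to a minimal separator of $G$. Finally, $(\{v\}, K)$ belongs to $P(G-\{u\})$ because $K$ is a component of $G - \{u,v\}$, and by Part~\ref{item:char:min:comp} of Lemma~\ref{lem:minimum:degree:implies:disjoin:min:seps} (applied inside $G - \{u\}$) it is in fact in $P_0(G - \{u\})$ since $K$ avoids all cut vertices of $G - \{u\}$. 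So $x \in K$ witnesses $x \notin V((G - \{u\})_\perp)$.

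For the backward direction, suppose $u$ lies in a minimal separator of $G$, that $x \notin V((G-\{u\})_\perp)$, and that the unique vertex $v$ of the block $B_x$ of $G - \{u\}$ containing $x$ that belongs to a minimal separator of $G$ exists. Pick $(\{w\}, K') \in P_0(G - \{u\})$ with $x \in K'$. The cut vertex $w$ of $G - \{u\}$ makes $\{u, w\}$ a $2$-separator of $G$. I first argue $w \in B_x$ and $w = v$: any cut vertex of $G - \{u\}$ inside $B_x$ belongs to the minimal separator $\{u, \cdot\}$ of $G$, so by the uniqueness assumption it must be $v$; an easy block-tree argument then forces $w \in B_x$ (otherwise the component $K'$ of $G - \{u,w\}$ containing $x$ would include all of $B_x$ together with the cut vertex $v$, contradicting $K' \cap \{\text{cut vertices}\} = \varnothing$), whence $w = v$ and $K' = B_x \setminus \{v\}$. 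Now take $S := \{u,v\}$ and $K := K'$; this pair is in $P(G)$ since $K'$ is a component of $G - \{u, v\}$. To conclude $(S, K) \in P_0(G)$, I apply Part~\ref{item:char:min:comp} of Lemma~\ref{lem:minimum:degree:implies:disjoin:min:seps}: any other minimal separator $S''$ of $G$ meeting $K \subseteq B_x \setminus \{v\}$ would contribute a second vertex of $B_x$ in a minimal separator of $G$, contradicting uniqueness. Therefore $(\{u,v\}, K') \in P_0(G)$ contains $x$, so $x \notin V(G_\perp)$.

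The main obstacle is the careful three-way matching between pairs in $P_0(G)$, pairs in $P_0(G - \{u\})$, and the block tree of $G - \{u\}$. The subtle point is that minimal separators of $G$ and cut vertices of $G - \{u\}$ are not in bijection: every cut vertex $w$ of $G - \{u\}$ yields a $2$-separator $\{u, w\}$ of $G$, but a $2$-separator $\{a, b\}$ of $G$ with $a, b \neq u$ need not give cut vertices in $G - \{u\}$. The uniqueness condition in the proposition is exactly what is needed to collapse these pictures: it forces the leaf block $B_x$ of $G - \{u\}$ to be isolated in the sense of Lemma~\ref{lem:minimum:degree:implies:disjoin:min:seps}, so that $B_x \setminus \{v\}$ genuinely corresponds to an element of $P_0(G)$.
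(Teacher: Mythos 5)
Your proof is correct and follows essentially the same approach as the paper's: both directions hinge on Lemma~\ref{lem:minimum:degree:implies:disjoin:min:seps} (you re-derive Part~\ref{item:char:min:comp} from Part~\ref{item:one:min:comp:trivial:int:contained} in the forward direction, where the paper invokes Part~\ref{item:char:min:comp} directly) to show that $K$ avoids all minimal separators of $G$ other than $\{u,v\}$, which then pins down the unique cut vertex of $G-\{u\}$ in $B_x$ and lets one translate between $P_0(G)$ and $P_0(G-\{u\})$. Your version makes the block-tree bookkeeping in the backward direction more explicit than the paper's terse treatment, but it is the same argument.
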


\begin{proof}
If~$x\notin V(G_\perp)$ then~$x\in K$ for a~$(\{u,s\},K) \in P_0(G)$. Then~$(s,K) \in P_0(G-\{u\})$. Moreover, by Part~\ref{item:char:min:comp} of Lemma~\ref{lem:minimum:degree:implies:disjoin:min:seps}, no vertex of~$K$ is contained in a minimum separator of~$G$. This implies that~$s$ is the only vertex in the~2-connected component of~$x$ in~$G-\{u\}$ that belongs to a minimum separator of~$G$.

Conversely suppose~$u$ is a vertex in a minimal separator of~$G$ such that~$x\notin V((G-\{u\})_\perp)$ and the~2-connected component of~$x$ in~$G-\{u\}$ has exactly one vertex belonging to a minimal separator of~$G$.
Since~$x\notin V((G-\{u\})_\perp)$, there is~$(s,K)\in P_0(G-\{u\})$ with~$x\in K$. Then~$K\cup \{s\}$ is a~2-connected component of~$G-\{u\}$ and, since~$s$ is the only 
vertex in this~2-connected component that is contained in a minimal separator of~$G$,
by Part~\ref{item:char:min:comp} of Lemma~\ref{lem:minimum:degree:implies:disjoin:min:seps}, we have~$(\{u,s\},K) \in P_0(G)$.
\end{proof}

\begin{lemma}\label{lem:bot:vertices:are:different:3con}
Assume~$k \geq 3$, let~$G, H$ be~2-connected graphs of minimum degree at least~3 that are not~$3$-connected. 
Then for vertices~$v \in V(G_\perp)$ and~$w \in V(H) \setminus V(H_\perp)$ we have~$\chi^k_G(v) \neq \chi^k_H(w)$.
\end{lemma}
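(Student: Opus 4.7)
The plan is to mimic the proof of Lemma~\ref{lem:bot:vertices:are:different}, with Corollary~\ref{cor:separating_pairs} playing the role of Corollary~\ref{cor:cut:vertices:are:different} and the recursive characterization in Proposition~\ref{prop:tool} playing the role of the elementary description of $V(G_\bot)$ used there. First, note that Part~\ref{item:char:min:comp} of Lemma~\ref{lem:minimum:degree:implies:disjoin:min:seps} implies that $w \in V(H) \setminus V(H_\bot)$ lies in no minimal separator of $H$.

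I would handle first the case where $v$ itself lies in a minimal (size-$2$) separator of $G$. Then there exists $u \in V(G)$ with $\{v,u\}$ a $2$-separator, and Corollary~\ref{cor:separating_pairs} implies that $\chi^k_G(v,u)$ receives a distinguished ``$2$-separator color'' which is absent from every $\chi^k_H(w,u')$, since no pair $\{w,u'\}$ is a $2$-separator. Hence the multiset $\{\chi^k_G(v,u) \mid u \in V(G)\}$ differs from $\{\chi^k_H(w,u') \mid u' \in V(H)\}$, and $\chi^k_G(v) \neq \chi^k_H(w)$.

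Now suppose $v$ lies in no minimal separator of $G$ and assume toward a contradiction that $\chi^k_G(v) = \chi^k_H(w)$. Proposition~\ref{prop:tool} applied to $w$ furnishes a vertex $u' \in V(H)$ in a minimal separator of $H$ satisfying (i')~$w \notin V((H-u')_\bot)$ and (ii')~the $2$-connected component of $w$ in $H-u'$ contains exactly one vertex of $H$ belonging to a minimal separator of $H$. The assumed color equality produces some $u \in V(G)$ with $\chi^k_G(v,u) = \chi^k_H(w,u')$; since (by the first-case argument, together with the standard fact that $\chi^k_G(v,u)$ determines $\chi^k_G(u)$) membership in a minimal separator is encoded in $\chi^k_G$, the vertex $u$ must also lie in a minimal separator of $G$. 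Applying Proposition~\ref{prop:tool} to $v \in V(G_\bot)$ with this $u$ leaves two alternatives: (i)~$v \in V((G-u)_\bot)$, or (ii)~the $2$-connected component of $v$ in $G - u$ contains more than one vertex of $G$ belonging to a minimal separator of $G$.

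Since $u$ and $u'$ lie in $2$-separators, both $G-u$ and $H-u'$ are connected but not $2$-connected. In Subcase~(i), pitted against (i'), Lemma~\ref{lem:bot:vertices:are:different} applied to $G-u$ and $H-u'$ yields $\chi^k_{G-u}(v) \neq \chi^k_{H-u'}(w)$. In Subcase~(ii), pitted against (ii'), whether two vertices share a $2$-connected component of $G-u$ is encoded by $\chi^k_{G-u}$ via Theorem~\ref{thm:edges_conncomp}, and membership in a minimal separator of $G$ is encoded as a vertex color by the first-case argument; combining these, the number of vertices in the $2$-connected component of $v$ in $G-u$ that lie in a minimal separator of $G$ is encoded in the color of $v$ in $G-u$ with initial coloring inherited from $\chi^k_G(\cdot,u)$, and this count equals at least $2$ on the $G$-side while equalling $1$ on the $H$-side. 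In both subcases, a standard transfer property of the WL-algorithm (individualizing a vertex is at least as informative as deleting it, so $\chi^k_G(v,u)$ for $k\geq 3$ refines the coloring of $v$ in $G-u$ obtained with the inherited initial coloring) then lifts the distinction to $\chi^k_G(v,u) \neq \chi^k_H(w,u')$, contradicting the choice of $u$.

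The main obstacle will be formalizing this transfer property cleanly, namely showing that $\chi^k_G(v,u) = \chi^k_H(w,u')$ (for $k \geq 3$) implies the corresponding equality of colorings on the smaller graphs $G-u$ and $H-u'$ equipped with the inherited initial coloring, so that the conclusions of Lemma~\ref{lem:bot:vertices:are:different} and Theorem~\ref{thm:edges_conncomp} can indeed be ``pulled back'' to $G$ and $H$; this is standard bookkeeping of initial colorings between $G$ and $G-u$ but must be executed carefully.
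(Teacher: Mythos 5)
Your proposal is correct and takes essentially the same route as the paper: fixing the vertex $u'$ for $w$ via Proposition~\ref{prop:tool}, noting that membership in a $2$-separator is encoded via Corollary~\ref{cor:separating_pairs}, then splitting according to the two alternatives given by (the contrapositive of) Proposition~\ref{prop:tool} applied to $v$, invoking Lemma~\ref{lem:bot:vertices:are:different} and Theorem~\ref{thm:edges_conncomp} respectively, and finally transferring the distinction from $\chi^{k-1}_{G-u}$ back to $\chi^k_G(\cdot,u)$. The paper argues directly that $\chi^k_G(v,t)\neq\chi^k_H(w,u')$ for every $t$ rather than by contradiction, omits your redundant first case, and phrases Subcase~(ii) as ``does not have exactly one'' rather than ``more than one'' (your version is also fine, since a vertex $s$ with $\{u,s\}$ separating $G$ always lies in that $2$-connected component), but these are purely presentational differences.
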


 \begin{proof}
Suppose that~$v \in V(G_\perp)$ and~$w \in V(H) \setminus V(H_\perp)$. Let~$u$ be a vertex contained in some minimal separator of~$H$ such that~$w\notin V((H-\{u\})_\perp)$ and such that the~2-connected component of~$w$ in~$H-\{u\}$ has exactly one vertex that is contained in a minimal separator of~$H$. Such a vertex exists by Proposition~\ref{prop:tool}. We argue that~$\chi^k_G(v,t)\neq \chi^k_H(w,u)$ for all~$t\in V(G)$. 

If~$t$ is not contained in any minimal separator, then this follows from Corollary~\ref{cor:separating_pairs}. Otherwise we know that~$v\in V((G-\{t\})_\perp)$ or the~2-connected component of~$v$ in~$G-\{t\}$ does not have exactly one vertex that is contained in a minimal separator of~$G$.
In the first case we use Lemma~\ref{lem:bot:vertices:are:different} and in the second case we use Theorem~\ref{thm:edges_conncomp} and Corollary~\ref{cor:separating_pairs} to conclude that~$\chi^{k-1}_{G-\{t\}}(v)\neq \chi^{k-1}_{H-\{u\}}(w)$ and thus~$\chi^k_G(v,t)\neq \chi^k_H(w,u)$.
\end{proof}

\begin{lemma}\label{lem:one:component:hanging:sep:pairs}
For~$k \geq 3$ suppose the~$k$-dimensional Weisfeiler-Leman algorithm correctly determines orbits on all arc-colored~$3$-connected graphs in~$\mathcal{G}$. Suppose~$G, G' \in \mathcal{G}$ are arc-colored~$2$-connected graphs of minimum degree at least~3. Assume that~$(\{s_1,s_2\},K) \in P_0(G)$ and~$(\{s'_1, s'_2\},K') \in P_0(G')$.

If no isomorphism from~$(G_{\top}^{(\{s_1, s_2\},K)}, \lambda_\top^{(\{s_1, s_2\},K)})$ to~$(G_{\top}'^{(\{s'_1, s'_2\},K')},\lambda_\top'^{(\{s'_1, s'_2\},K')})$ maps~$s_1$ to~$s'_1$ and~$s_2$ to~$s'_2$, then
\[
\{ \chi^k_G(s_1,s_2,v) \! \mid \! v \in K\} \cap \{ \chi^k_{G'}(s'_1,s'_2,v) \! \mid \! v \in K'\} = \varnothing.
\]
\end{lemma}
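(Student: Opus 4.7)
My plan is to mirror the argument of Lemma~\ref{lem:one:component:hanging}, replacing cut vertices by $2$-separators, $2$-connected torsos by $3$-connected ones, and distinguishability of $2$-connected graphs by orbit determination on $3$-connected graphs. Write $S:=\{s_1,s_2\}$ and $S':=\{s'_1,s'_2\}$. I argue by contradiction: assume the conclusion fails, so there exist $v\in K$ and $v'\in K'$ with $\chi^k_G(s_1,s_2,v) = \chi^k_{G'}(s'_1,s'_2,v')$. I will construct an isomorphism from $(G_\top^{(S,K)},\lambda_\top^{(S,K)})$ to $(G_\top'^{(S',K')},\lambda_\top'^{(S',K')})$ mapping $s_1\mapsto s'_1$ and $s_2\mapsto s'_2$, contradicting the lemma's hypothesis.

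The central step is a lifting analogous to Implication~(\ref{imp:induction}): by induction on $i$, for all triples $(u_1,u_2,u_3)\in (K\cup S)^3$ with $\{u_1,u_2,u_3\}\not\subseteq S$ and all $(u'_1,u'_2,u'_3)\in (K'\cup S')^3$ with $\{u'_1,u'_2,u'_3\}\not\subseteq S'$,
\[
\prescript{}{i}\chi^k_{G_\top^{(S,K)}}(u_1,u_2,u_3)\neq \prescript{}{i}\chi^k_{G_\top'^{(S',K')}}(u'_1,u'_2,u'_3) \ \Longrightarrow\ \prescript{}{i}\chi^k_G(u_1,u_2,u_3)\neq \prescript{}{i}\chi^k_{G'}(u'_1,u'_2,u'_3).
\]
The base case follows from the definition of $\lambda_\top$, which on triples with at least one vertex in $K$ agrees with $\lambda$ up to a fixed additional second coordinate, so the initial $\prescript{}{0}\chi^k$-partitions coincide on such triples. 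For the inductive step, I would combine Corollary~\ref{cor:separating_pairs} (which distinguishes the $2$-separator $S$ from non-separating pairs, using $k\geq 3$) with Theorem~\ref{thm:edges_conncomp} (which handles $2$-connected component structure) to identify canonically those triples in $G$ that involve a vertex from $V(G)\setminus (K\cup S)$. Then, exactly as in Lemma~\ref{lem:one:component:hanging}, the discrepancy between the torso refinement multisets (witnessing the failure of $\prescript{}{i+1}\chi^k$-equality in the torsos) persists after adding these canonically identified outside contributions, forcing a discrepancy between the full refinement multisets of $G$ and $G'$.

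Granting the lifting, its contrapositive applied to $(s_1,s_2,v)$ and $(s'_1,s'_2,v')$ (admissible since $v\in K$) yields $\chi^k_{G_\top^{(S,K)}}(s_1,s_2,v) = \chi^k_{G_\top'^{(S',K')}}(s'_1,s'_2,v')$. I then individualize $s_1$ and $s_2$ with two fresh distinct colors, and $s'_1,s'_2$ with the matching colors, producing arc-colored graphs $(\widetilde G_\top,\tilde\lambda)$ and $(\widetilde G'_\top,\tilde\lambda')$ which satisfy $\chi^k_{\widetilde G_\top}(v) = \chi^k_{\widetilde G'_\top}(v')$. Both graphs are $3$-connected (being leaf torsos of the decomposition of $G$ and $G'$ into $3$-connected components) and belong to $\mathcal{G}$, since $\mathcal{G}$ is minor-closed and $G_\top^{(S,K)}$ is a minor of $G$ (contract a path from $s_1$ to $s_2$ through $V(G)\setminus(K\cup S)$ if $\{s_1,s_2\}\notin E(G)$). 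The orbit-determination hypothesis of the theorem then supplies an isomorphism $(\widetilde G_\top,\tilde\lambda)\to(\widetilde G'_\top,\tilde\lambda')$ sending $v\mapsto v'$; by the individualization with distinct colors it also sends $s_1\mapsto s'_1$ and $s_2\mapsto s'_2$. The same map is also an isomorphism from $(G_\top^{(S,K)},\lambda_\top^{(S,K)})$ to $(G_\top'^{(S',K')},\lambda_\top'^{(S',K')})$, contradicting the lemma's hypothesis.

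The main obstacle, as in Lemma~\ref{lem:one:component:hanging}, is the inductive step in the lifting: one must track carefully how the $k$-dimensional refinement multisets in $G$ decompose into inside- and outside-the-torso contributions and verify that the outside part behaves canonically across $G$ and $G'$. The present setting differs from the $2$-connected case only in that Corollary~\ref{cor:separating_pairs} replaces Corollary~\ref{cor:cut:vertices:are:different} (which forces $k\geq 3$), and the final step appeals to orbit determination on arc-colored $3$-connected graphs rather than plain distinguishability of vertex-colored $2$-connected graphs, because after individualizing two vertices one needs to pin down the orbit of $v$ (resp.\ $v'$) rather than merely the isomorphism type of the graph.
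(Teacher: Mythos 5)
Your lifting implication and its inductive structure mirror the paper's Implication~(\ref{imp:induction2}) and are fine in outline, but the final step has a real gap. You assert that $\chi^k_{G_\top^{(S,K)}}(s_1,s_2,v) = \chi^k_{G_\top'^{(S',K')}}(s'_1,s'_2,v')$ implies $\chi^k_{\widetilde G_\top}(v) = \chi^k_{\widetilde G'_\top}(v')$ after individualizing $s_1,s_2$ (resp.~$s'_1,s'_2$). This does not follow. Running $k$-WL after individualizing $m$ vertices is strictly finer than inspecting stable $k$-WL colors of $k$-tuples in which $m$ coordinates are pinned: the standard trade-off only yields $\chi^{k-m}$-equality in the individualized graph from a $\chi^{k}$-tuple equality. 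For $k=3$ and $m=2$ this gives only $\chi^1_{\widetilde G_\top}(v) = \chi^1_{\widetilde G'_\top}(v')$, while the hypothesis provides orbit determination only in dimension at least~$3$, so you cannot invoke orbit determination at $v,v'$ on the individualized torsos.

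The paper's proof avoids the individualization entirely, and with it the dimension loss. Since a stable $k$-WL tuple color determines the color of every subtuple, $\chi^k_{G_\top}(s_1,s_2,v) = \chi^k_{G'_\top}(s'_1,s'_2,v')$ already yields $\chi^k_{G_\top}(s_1) = \chi^k_{G'_\top}(s'_1)$. Orbit determination applied to the (non-individualized) arc-colored $3$-connected torsos then produces an isomorphism $\psi$ with $\psi(s_1)=s'_1$; because $\lambda_\top$ by construction marks the separator pair $S$ with arc colors distinct from all other arcs, $\psi$ must map $S$ to $S'$, so $\psi(s_2)=s'_2$, giving the contradiction. Your argument is repairable by replacing the individualization step with this subtuple-color argument. (As a smaller point, your base case of the lifting glosses over the normalization the paper relies on -- that $(s_1,s_2)$, $(s_2,s_1)$, $(s'_1,s'_2)$, $(s'_2,s'_1)$ are already pre-colored distinctly from pairs touching $K\cup K'$ via Corollary~\ref{cor:separating_pairs} -- which is needed to handle the subcase where $\{u,v\}\subseteq S$ but $\{u',v'\}\not\subseteq S'$.)
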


\begin{proof}
This is an adaption of the proof of Lemma~\ref{lem:one:component:hanging}.

If~$\chi^k_G(s_1,s_2) \neq \chi^k_{G'}(s'_1,s'_2)$ then the conclusion of the lemma is obvious. Thus, we can assume otherwise. We have already seen with Corollary~\ref{cor:separating_pairs} that~2-separators obtain different colors than other pairs of vertices. Thus, we can assume that~$G$ and~$G'$ are already colored in a way such that~$(s_1, s_2)$,~$(s_2, s_1)$,~$(s'_1, s'_2)$,~$(s'_2, s'_1)$ have colors different from the colors of pairs of vertices~$(t_1,t_2)$ with~$\{t_1, t_2\} \cap (K \cup K') \neq \varnothing$. This immediately implies that we can assume that~$s_1, s_2, s'_1, s'_2$ have colors different from colors of vertices that are not contained in any~$2$-separator in the graphs.

We will now argue that for~$v \in K$ and~$v' \in K'$ we have~$\chi^k_G(v) \neq \chi^k_{G'}(v')$, which implies the lemma. For readability, we drop the superscripts~$(\{s_1, s_2\},K)$ and~$(\{s'_1, s'_2\},K')$. 

We will show by induction that if the lemma does not hold, then for all~$i \in \mathbb{N}$, all~$u,v,w \in K \cup \{s_1, s_2\}$ and all~$u',v',w' \in K' \cup \{s'_1, s'_2\}$ with~$\{u,v,w\} \not\subseteq \{s_1,s_2\}$ and~$\{u',v',w'\} \not\subseteq \{s'_1,s'_2\}$ the following implication holds: 
\begin{align}\label{imp:induction2}
\prescript{}{i} \chi_{G_\top}^k(u,v,w) \neq \prescript{}{i} \chi_{G'_\top}^k(u',v',w') \Rightarrow  \prescript{}{i} \chi_G^k(u,v,w) \neq \prescript{}{i} \chi_{G'}^k(u',v',w').
\end{align}

For the induction base with~$i = 0$, suppose~$\prescript{}{0} \chi_{G_\top}^k(u,v,w) \neq \prescript{}{0} \chi_{G'_\top}^k(u',v',w')$. Then either there is no isomorphism from~$G_\top[u,v]$ to~$G'_\top[u',v']$ mapping~$u$ to~$u'$ and~$v$ to~$v'$, or~$\lambda_\top(u,v,w) \neq \lambda'_\top(u',v',w')$. 

In the first case, by definition of the graphs~$G_\top$ and~$G'_\top$ and their colorings, we immediately get~$\prescript{}{0} \chi_G^k(u,v,w) \neq \prescript{}{0} \chi_{G'}^k(u',v',w')$ since an isomorphism from~$G$ to~$G'$ that maps~$u$ to~$u'$ and~$v$ to~$v'$ would induce an isomorphism from~$G_\top$ to~$G'_\top$ with the same mappings. In the second case, since~$G$ and~$G'$ are arc-colored graphs, we have~$\lambda_\top(u,v,w) = \lambda_\top(u,v)  \neq  \lambda'_\top(u',v') = \lambda'_\top(u',v',w')$. If~$\lambda(u,v)  \neq  \lambda'(u',v')$ the{}n~$\prescript{}{0} \chi_G^k(u,v,w) \neq \prescript{}{0} \chi_{G'}^k(u',v',w')$. Otherwise, from the definitions of~$\lambda_\top$ and~$\lambda'_\top$ we conclude that~$(u,v)\in E(G) \nLeftrightarrow (u',v')\in E(G')$ or
$\{u,v\}\subseteq \{s_1, s_2\}  \nLeftrightarrow \{u',v'\}\subseteq  \{s'_1, s'_2\}$. In the first subcase, we conclude that~$\prescript{}{0} \chi_G^k(u,v,w) \neq \prescript{}{0} \chi_{G'}^k(u',v',w')$. In the second subcase, since~$(\{s_1,s_2\},K) \in P_0(G)$ and~$(\{s'_1,s'_2\},K') \in P_0(G')$ we know with Part~\ref{item:one:min:comp:trivial:int:contained} from Lemma~\ref{lem:minimum:degree:implies:disjoin:min:seps} that either at least one of~$u$ and~$v$ or at least one of~$u'$ and~$v'$ is a vertex not contained in any~$2$-separator. Since we have assumed that~$s_1, s_2, s'_1, s'_2$ have colors that are different from colors of vertices not contained in any~$2$-separator, we also conclude that~$\prescript{}{0} \chi_G^k(u,v,w) \neq \prescript{}{0} \chi_{G'}^k(u',v',w')$. 

For the induction step, assume that there exist vertices~$x,y,z \in K \cup \{s_1,s_2\}$ and vertices~$x',y',z' \in K' \cup \{s'_1, s'_2\}$ such that~$\{x,y,z\} \not\subseteq \{s_1, s_2\}$ and~$\{x',y',z'\} \not\subseteq \{s'_1, s'_2\}$  with~$\prescript{}{i} \chi_{G_\top}^{k}(x,y,z) = \prescript{}{i} \chi_{G'_\top}^{k}(x',y',z')$ and~$\prescript{}{i+1} \chi_{G_\top}^{k}(x,y,z) \neq \prescript{}{i+1} \chi_{G'_\top}^{k}(x',y',z')$. 

Thus, there must be a color triple~$(c_1,c_2,c_3)$ such that the sets
\begin{align*}
M \coloneqq{} & \big\{w \mid w \in V(G_\top) \backslash \{x,y,z\},
\\
& \phantom{\big\{w \mid{}} (\prescript{}{i} \chi^k_{G_\top}(w,y,z), 
\prescript{}{i} \chi^k_{G_\top}(x,w,z), \prescript{}{i} \chi^k_{G_\top}(x,y,w)) = (c_1,c_2,c_3)\big\}
\intertext{and} 
M' \coloneqq{} & \big\{w' \mid w' \in V(G'_\top) \backslash \{x',y',z'\}, \\
& \phantom{\{w' \mid{}} (\prescript{}{i} \chi^k_{G'_\top}(w',y',z'), \prescript{}{i} \chi^k_{G'_\top}(x',w',z'), \prescript{}{i} \chi^k_{G'_\top}(x',y',w')) = (c_1,c_2,c_3)\big\}
\intertext{do not have the same cardinality. Let}
D \coloneqq{} & \{(\prescript{}{i} \chi^k_{G}(w,y,z), \prescript{}{i} \chi^k_{G}(x,w,z), \prescript{}{i} \chi^k_{G}(x,y,w)) \mid w \in M\} \, \cup{} 
\\
& \{(\prescript{}{i} \chi^k_{G'}(w',y',z'), \prescript{}{i} \chi^k_{G'}(x',w',z'), \prescript{}{i} \chi^k_{G'}(x',y',w')) \mid w' \in M'\}.
\end{align*}
By induction and by Theorem~\ref{thm:edges_conncomp} we have that
\begin{align*}
\big\{w \mid w \in V(G) \backslash \{x,y,z\}, (\prescript{}{i} \chi^k_{G}(w,y,z), \prescript{}{i} \chi^k_{G}(x,w,z),\prescript{}{i} \chi^k_{G}(x,y,w)) \in D\big\} &= M
\intertext{and}
\big\{w' \mid w' \in V(G') \backslash \{x',y',z'\}, (\prescript{}{i} \chi^k_{G'}(w',y',z'), \prescript{}{i} \chi^k_{G'}(x',w',z'),\prescript{}{i} \chi^k_{G'}(x',y',w')) \in D\big\} &= M'.
\end{align*}
Hence these sets do not have the same cardinality. Thus,~$\prescript{}{i+1} \chi_G^{k}(x,y,z) \neq \prescript{}{i+1} \chi_{G'}^{k}(x',y',z')$. 

Having shown Implication~(\ref{imp:induction2}), it suffices to show that 
\[
\{ \chi^k_{G_\top}(s_1,s_2,v) \mid v \in K\} \cap \{ \chi^k_{G'_\top}(s'_1,s'_2,v') \mid v' \in K'\} = \varnothing.
\]
For this, it suffices to prove that~$\chi^k_{G_\top}(s_1) \neq \chi^k_{G'_\top}(s'_1)$ holds. 

The graphs~$(G_{\top}^{(\{s_1,s_2\},K)}, \lambda_\top^{(\{s_1,s_2\},K)})$ and~$(G_{\top}'^{(\{s'_1,s'_2\},K')}, \lambda_\top'^{(\{s'_1,s'_2\},K')})$ are~3-connected. Thus, if we had that~$\chi^k_{G_\top}(s_1) = \chi^k_{G'_\top}(s'_1)$, there would have to be an isomorphism from the graph~$(G_{\top}^{(\{s_1, s_2\},K)}, \lambda_\top^{(\{s_1, s_2\},K)})$ to~$(G_{\top}'^{(\{s'_1, s'_2\}',K')},\lambda_\top'^{(\{s'_1, s'_2\},K')})$ that maps~$s_1$ to~$s'_1$ since we have assumed that the~$k$-dimensional WL-algorithm correctly determines orbits on all arc-colored~3-connected graphs in~$\mathcal{G}$. However, by definition of~$\lambda_\top^{(\{s_1, s_2\},K)}$ and~$\lambda_\top'^{(\{s'_1, s'_2\},K')}$, this isomorphism would also map~$s_2$ to~$s'_2$ contradicting the assumptions of the lemma.
\end{proof}

Using the lemma, we can show the following.

\begin{lemma}\label{lem:distinct_separatingpairs}
Assume~$k \geq 3$ and suppose the~$k$-dimensional Weisfeiler-Leman algorithm correctly determines orbits on all arc-colored~$3$-connected graphs in~$\mathcal{G}$. Assume~$G, G' \in \mathcal{G}$ are arc-colored~2-connected graphs of minimum degree at least~3 and let~$\{s_1,s_2\} \subseteq V(G)$ and~$\{s'_1, s'_2 \} \subseteq V(G')$ be~$2$-separators of~$G$ and~$G'$, respectively. 

If no isomorphism from~$(G_{\top}^{\{s_1, s_2\}}, \lambda_\top^{\{s_1, s_2\}})$ to~$(G_{\top}'^{\{s'_1, s'_2\}},\lambda_\top'^{\{s'_1, s'_2\}})$ maps~$s_1$ to~$s'_1$ and~$s_2$ to~$s'_2$, then~$\chi^k_{G}(s_1,s_2) \neq \chi^k_{G'}(s'_1,s'_2)$. 
\end{lemma}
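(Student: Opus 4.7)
My plan is to proceed by contradiction, in close analogy with the proof of Lemma~\ref{lem:distinct_cutvertices}, replacing single cut vertices by $2$-separators. Suppose $\chi^k_G(s_1,s_2) = \chi^k_{G'}(s'_1,s'_2)$. As preparation, Corollary~\ref{cor:separating_pairs} ensures that $2$-separators are colored distinctly from non-separating pairs, and Lemma~\ref{lem:bot:vertices:are:different:3con} (together with the fact that stable $k$-tuple colors refine the stable vertex coloring) implies that vertices in $V(G_\bot)\setminus\{s_1,s_2\}$ never share triple colors $\chi^k_G(s_1,s_2,\cdot)$ with vertices in the pieces hanging off $\{s_1,s_2\}$. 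This guarantees that only vertices in those pieces will contribute to the forthcoming count.

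Next, I enumerate $\{K_1,\dots,K_t\} \coloneqq \{K \mid (\{s_1,s_2\},K) \in P_0(G)\}$ and analogously $\{K'_1,\dots,K'_{t'}\}$ for $G'$. By construction, $(G_\top^{\{s_1,s_2\}}, \lambda_\top^{\{s_1,s_2\}})$ is the union of the pieces $(G_\top^{(\{s_1,s_2\},K_j)}, \lambda_\top^{(\{s_1,s_2\},K_j)})$, viewed as graphs with the distinguished ordered pair $(s_1,s_2)$, glued along $\{s_1,s_2\}$. Hence the hypothesis that no isomorphism from $G_\top^{\{s_1,s_2\}}$ to $G'^{\{s'_1,s'_2\}}_\top$ respects $s_1 \mapsto s'_1,\ s_2 \mapsto s'_2$ forces some rooted isomorphism type $(H,\lambda_H)$ to appear with different multiplicities in the piece-multisets of $G$ and $G'$. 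Let $I$ and $I'$ index the pieces of this type in $G$ and $G'$, respectively; then $|I| \neq |I'|$ while $|K_i| = |K'_j| = |V(H)| - 2$ for $i \in I$ and $j \in I'$.

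I then apply Lemma~\ref{lem:one:component:hanging:sep:pairs} in three ways: across $G$ and $G'$, within $G$ (to pairs of pieces of differing rooted types), and within $G'$ analogously. Combined with the preparatory step, this yields that the set $C \coloneqq \{\chi^k_G(s_1,s_2,v) \mid i \in I,\ v \in K_i\}$ is disjoint from $\{\chi^k_G(s_1,s_2,w) \mid w \notin \bigcup_{i \in I} K_i\}$, and symmetrically for $G'$ at $(s'_1,s'_2)$. Therefore the number of $w \in V(G)$ with $\chi^k_G(s_1,s_2,w) \in C$ equals $|I|\cdot(|V(H)|-2)$, while the corresponding count in $G'$ equals $|I'|\cdot(|V(H)|-2)$. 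These counts differ, and since the WL-update rule for $k \geq 3$ inspects exactly such multisets of triple colors, we obtain $\chi^k_G(s_1,s_2) \neq \chi^k_{G'}(s'_1,s'_2)$, contradicting our assumption.

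The main obstacle I anticipate is the second step: transferring the non-isomorphism of the top-graphs into a multiplicity mismatch among the rooted pieces. This works because $G_\top^{\{s_1,s_2\}}$ is determined up to $(s_1,s_2)$-respecting isomorphism by the multiset of its rooted pieces, and Lemma~\ref{lem:one:component:hanging:sep:pairs} is formulated with exactly the ordered condition $s_1 \mapsto s'_1,\ s_2 \mapsto s'_2$ needed to exploit this correspondence.
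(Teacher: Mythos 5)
Your argument follows essentially the same route as the paper's own proof: derive a multiplicity mismatch in rooted piece-types from the non-isomorphism of the top-graphs, apply Lemma~\ref{lem:one:component:hanging:sep:pairs} to get color-disjointness between pieces of different types, and conclude via the cardinality discrepancy in the WL multiset update for the $2$-tuple $(s_1,s_2)$. Your version is slightly more explicit than the paper's (noting the within-graph applications of Lemma~\ref{lem:one:component:hanging:sep:pairs} and invoking Lemma~\ref{lem:bot:vertices:are:different:3con} to keep $V(G_\bot)$-vertices out of the count), but these elaborations spell out what the paper leaves implicit rather than changing the argument.
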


\begin{proof} 
This proof is similar to the proof of Lemma~\ref{lem:distinct_cutvertices}. Suppose that
\begin{align*}
\{K_1,\dots,K_t\} & = \left\{K \mid (\{s_1,s_2\},K) \in P_0(G)\right\}
\intertext{and that}
\{K'_1,\dots,K'_{t'}\} & = \left\{K' \mid (\{s'_1,s'_2\},K') \in P_0(G')\right\}.
\end{align*}
Since there is no isomorphism from~$(G_{\top}^{\{s_1, s_2\}}, \lambda_\top^{\{s_1, s_2\}})$ to~$(G_{\top}'^{\{s'_1, s'_2\}}, \lambda_\top'^{\{s'_1, s'_2\}})$ that maps~$s_1$ to~$s'_1$ and~$s_2$ to~$s'_2$, there is an arc-colored graph~$(H,\lambda_H)$ such that the sets
\begin{align*}
I & \coloneqq \left\{j\mid \left(G_{\top}^{(\{s_1,s_2\},K_j)}, \lambda_\top^{(\{s_1,s_2\},K_j)}\right)_{(s_1,s_2)} \cong (H,\lambda_H)\right\}
\intertext{and}
I' & \coloneqq \left\{j\mid \left({G'}_{\top}^{(\{s'_1,s'_2\},K'_j)}, \lambda_\top'^{(\{s'_1,s'_2\},K'_j)}\right)_{(s'_1,s'_2)} \cong (H,\lambda_H)\right\}
\end{align*}
have different cardinalities. Note that all~$K_j$ with~$j\in I$ and all~$K'_j$ with~$j\in I'$ have the same cardinality. We know by Lemma~\ref{lem:one:component:hanging:sep:pairs} that for~$v\in K_i$ with~$i\in I$ and~$v'\in K_j$ with~$j\notin I'$ we have~$\chi^k_G(s_1,s_2,v)\neq \chi^k_{G'}(s'_1,s'_2,v')$. Letting~$C \coloneqq \{\chi^k_G(s_1,s_2,v)\mid i\in I \text{ and } v\in K_i\}$, the sets~$\{v\mid \chi^k_G(s_1,s_2,v)\in C\}$ and~$\{v'\mid \chi^k_{G'}(s'_1,s'_2,v')\in C\}$ do not have the same cardinality. We conclude that~$\chi^k_G(s_1,s_2) \neq \chi^k_{G'}(s'_1,s'_2)$.
\end{proof}

\begin{corollary}\label{cor:induced:on:bot:graphs:is:finer:3con}
	Assume~$k \geq 3$ and suppose the~$k$-dimensional Weisfeiler-Leman algorithm correctly determines orbits on all arc-colored~$3$-connected graphs in~$\mathcal{G}$. Let~$G, G' \in \mathcal{G}$ be arc-colored~2-connected graphs of minimum degree at least~3. If for vertices~$v_1,v_2 \in V(G_\perp)$ and~$v'_1,v'_2\in V(G'_\perp)$ we have~$\chi^k_{G_\perp}(v_1,v_2) \neq \chi^k_{G'_\perp}(v'_1,v'_2)$, then~$\chi^k_{G}(v_1,v_2) \neq \chi^k_{G'}(v'_1,v'_2)$.
\end{corollary}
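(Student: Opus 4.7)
The plan is to prove the corollary by imitating the argument of Corollary~\ref{cor:induced:on:bot:graphs:is:finer}, with Lemma~\ref{lem:bot:vertices:are:different:3con} playing the role of Lemma~\ref{lem:bot:vertices:are:different} and Lemma~\ref{lem:distinct_separatingpairs} playing the role of Lemma~\ref{lem:distinct_cutvertices}. The overall strategy is to show that the stable coloring $\chi^k_G$ restricted to pairs in $V(G_\perp) \times V(G_\perp)$ refines the initial coloring $\lambda_\perp$ of $G_\perp$ (and likewise for $G'$), after which an inductive application of the $k$-dimensional WL-algorithm on $G_\perp$ and $G'_\perp$ yields the conclusion.

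First I would use Lemma~\ref{lem:bot:vertices:are:different:3con} to observe that with respect to $\chi^k_G$ and $\chi^k_{G'}$ the vertices in $V(G_\perp)$ and $V(G'_\perp)$ receive colors distinct from those of vertices in $V(G)\setminus V(G_\perp)$ and $V(G')\setminus V(G'_\perp)$. Hence it suffices to reduce to the case where $v_1,v_2\in V(G_\perp)$ and $v'_1,v'_2\in V(G'_\perp)$ and prove that whenever the initial $\lambda_\perp$-color of $(v_1,v_2)$ differs from the $\lambda'_\perp$-color of $(v'_1,v'_2)$, the stable colors $\chi^k_G(v_1,v_2)$ and $\chi^k_{G'}(v'_1,v'_2)$ also differ.

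Unfolding the definition of $\lambda_\perp$, there are several cases that must be matched. If $\{v_1,v_2\} \in E(G)$ but $\{v'_1,v'_2\} \notin E(G')$ (or vice versa), or if the arc colors $\lambda(v_1,v_2)$ and $\lambda'(v'_1,v'_2)$ differ, then $\chi^k_G(v_1,v_2) \neq \chi^k_{G'}(v'_1,v'_2)$ by definition of the WL-algorithm, since it refines the initial arc coloring. If exactly one of $\{v_1,v_2\}$ and $\{v'_1,v'_2\}$ is a $2$-separator, Corollary~\ref{cor:separating_pairs} immediately gives the required distinction. The diagonal case $v_1 = v_2$, $v'_1 = v'_2$ (where $S = \{v_1\}$ is not a separator since the graphs are $2$-connected) reduces to showing that differing vertex colors are preserved, which is immediate.

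The main remaining case — and the only one using the assumption of the corollary — is when both $\{v_1,v_2\}$ and $\{v'_1,v'_2\}$ are $2$-separators but the isomorphism types of the colored individualized graphs $(G_\top^{\{v_1,v_2\}}, \lambda_\top^{\{v_1,v_2\}})_{(v_1,v_2)}$ and $(G'^{\{v'_1,v'_2\}}_\top, \lambda'^{\{v'_1,v'_2\}}_\top)_{(v'_1,v'_2)}$ differ. Here I would invoke Lemma~\ref{lem:distinct_separatingpairs}, which precisely asserts that under these hypotheses $\chi^k_G(v_1,v_2) \neq \chi^k_{G'}(v'_1,v'_2)$. This is the only step that requires the full assumption that the $k$-dimensional WL-algorithm correctly determines orbits on arc-colored $3$-connected graphs in $\mathcal{G}$, and it is also the technical heart of the reduction; since Lemma~\ref{lem:distinct_separatingpairs} has already absorbed all the delicate induction on WL-rounds, no further difficulty arises here. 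Combining all cases finishes the proof.
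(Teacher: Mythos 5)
Your proof is correct and follows essentially the same approach as the paper: reduce via Lemma~\ref{lem:bot:vertices:are:different:3con} to showing that $\chi^k_G$ refines $\lambda_\perp$ on pairs from $V(G_\perp)$, unfold the definition of $\lambda_\perp$, dispatch the easy cases, and invoke Lemma~\ref{lem:distinct_separatingpairs} for the crucial case of two $2$-separators with distinct isomorphism types. Your case analysis is in fact slightly more explicit than the paper's, which states only the two substantive cases and leaves implicit the mixed cases (exactly one pair a $2$-separator, handled by Corollary~\ref{cor:separating_pairs}; both $2$-separators but differing arc colors) that you spell out.
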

\begin{proof}
By Lemma~\ref{lem:bot:vertices:are:different:3con}, with respect to the colorings~$\chi_G^k$ and~$\chi_{G'}^k$, the vertices in~$V(G_{\bot})$ and~$V(G'_{\bot})$ have different colors than the vertices in~$V(G) \setminus V(G_{\bot})$ and~$V(G') \setminus V(G'_{\bot})$. Thus, it suffices to show that the colorings~$\chi^k_G$ and~$\chi^k_{G'}$ refine the colorings~$\lambda_\bot$ and~$\lambda'_\bot$, respectively. For this, by the definition of~$\lambda_\bot$ and~$\lambda'_\bot$, it suffices to show the following two statements.

\begin{enumerate}
  \item If~$\{v_1,v_2\}$ and~$\{v'_1,v'_2\}$ are 2-separators and~$\ISOTYPE((G^{\{v_1,v_2\} }_{\top},\lambda^{\{v_1,v_2\}}_\top)_{(v_1,v_2)}) \neq
\ISOTYPE((G'^{\{v'_1,v'_2\} }_{\top},\lambda'^{\{v'_1,v'_2\}}_\top)_{(v'_1,v'_2)})$, then~$\chi^k_{G}(v_1,v_2) \neq \chi^k_{G'}(v'_1,v'_2)$. 
  \item If~$v_1=v_2$ and~$v'_1=v'_2$, or~$\{v_1,v_2\}\in E(G)$ and~$\{v'_1,v'_2\}\in E(G')$ but~$\{v_1,v_2\}$ and~$\{v'_1,v'_2\}$ are not 2-separators, if~$\lambda_\bot(v_1,v_2) \neq \lambda'_\bot(v'_1,v'_2)$, then~$\chi^k_{G}(v_1,v_2) \neq \chi^k_{G'}(v'_1,v'_2)$.
\end{enumerate}

The first item is exactly Lemma~\ref{lem:distinct_separatingpairs}. For the second item, from the definition of~$\lambda_\bot$ and~$\lambda'_\bot$ we obtain~$\lambda(v_1,v_2) \neq \lambda'(v'_1,v'_2)$, which implies~$\chi^k_{G}(v_1,v_2) \neq \chi^k_{G'}(v'_1,v'_2)$. 
\end{proof}

\begin{proof}[Proof of Theorem~\ref{thm:reduction3}] 
By Theorem~\ref{thm:reduction2} it suffices to show the statement for vertex-colored~2-connected graphs. To allow induction we will show the statement for arc-colored~2-connected graphs. Let~$(G, \lambda)$ and~$(G', \lambda')$ be arc-colored~2-connected graphs in~$\mathcal{G}$.
We prove the statement by induction on~$|V(G)|+|V(G')|$. If both graphs are~3-connected then the statement follows directly from the assumptions. If exactly one of the graphs is~3-connected, then exactly one of the graphs has a~2-separator and the statement follows from Corollary~\ref{cor:separating_pairs}.

Thus suppose both graphs are not~3-connected. By Lemma~\ref{lem:min:degree:3} we can assume that both graphs have minimum degree at least~3.
Since~$(G, \lambda)$ and~$(G', \lambda')$ are not isomorphic, we know by Lemma~\ref{lem:iso:if:and:only:if:smaller:iso} that~$(G_{\bot}, \lambda_\bot) \not\cong (G'_{\bot},\lambda'_\bot)$. Note that~$G_{\bot}$ and~$G'_{\bot}$ are~2-connected.
By Lemma~\ref{lem:bot:vertices:are:different:3con} the vertices in~$V(G_{\bot})$  and~$V(G'_{\bot})$ have different colors than the vertices in~$V(G) \setminus V(G_{\bot})$ and~$V(G') \setminus V(G'_{\bot})$.
Moreover by Corollary~\ref{cor:induced:on:bot:graphs:is:finer:3con}, the partition of the vertices and arcs induced by the coloring~$\chi^k_G$ restricted to~$V(G_{\bot})$  is finer than the partition induced by~$\lambda_\bot$. Similarly, the partition induced by~$\chi^k_{G'}$ on~$V(G'_{\bot})$  is finer than the partition induced by~$\lambda'_\bot$. By induction the~$k$-dimensional WL-algorithm distinguishes~$(G_{\bot}, \lambda_\bot)$ from~$(G'_{\bot},\lambda'_\bot)$. Thus the~$k$-dimensional WL-algorithm distinguishes~$(G, \lambda)$ from~$(G', \lambda')$.
\end{proof} 

In the last two sections we have concerned ourselves with graphs being distinguished (referring to two input graphs from a class) rather than graphs being identified (referring to one input graph from a class and another input graph being arbitrary). However, the theorems we prove also have 
corresponding versions concerning the latter notion.

For a graph~$G$ the \emph{Weisfeiler-Leman dimension} of~$G$ is the least integer~$k$ such that the~$k$-dimensional WL-algorithm distinguishes~$G$ from every non-isomorphic graph~$G'$.

\begin{lemma}
Let~$\mathcal{G}$  be a minor-closed graph class. The Weisfeiler-Leman dimension of graphs in~$\mathcal{G}$ is at most~$\max\{3,k\}$, where~$k$ is the minimal number~$\ell$ such that the~$\ell$-dimensional Weisfeiler-Leman algorithm correctly determines orbits on all arc-colored~$3$-connected graphs in~$\mathcal{G}$ and identifies such graphs.
\end{lemma}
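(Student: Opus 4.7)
Setting $k^* := \max\{3, k\}$, the goal is to show that for every $G \in \mathcal{G}$ and every graph $G'$ (not necessarily in $\mathcal{G}$) with $G \not\cong G'$, the $k^*$-dimensional WL-algorithm distinguishes $G$ from $G'$. The plan is to rerun the inductive argument of Theorem~\ref{thm:reduction3} in this ``one-sided'' setting where only $G$ is assumed to lie in $\mathcal{G}$. The induction proceeds on $|V(G)|$. The base case is $G$ being 3-connected, where the identification hypothesis directly applies.

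For the inductive step I first reduce to the case where both $G$ and $G'$ are 2-connected of minimum degree at least 3. If only one of them is, Corollary~\ref{cor:cut:vertices:are:different} together with the argument of Lemma~\ref{lem:min:degree:3} already detects the discrepancy from the WL-coloring. The identification analogs of Theorem~\ref{thm:reduction2} and Lemma~\ref{lem:min:degree:3} go through by the same proofs, since they only invoke the hypothesis on 2-connected components or on degree-3-reduced minors of $G$ — both of which remain in $\mathcal{G}$ by minor-closedness, so the inductive identification hypothesis applies to them.

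Now suppose both graphs are 2-connected of minimum degree at least 3 and $G$ is not 3-connected. If $G'$ is 3-connected then Corollary~\ref{cor:separating_pairs} distinguishes them since $G$ has a 2-separator while $G'$ does not. Otherwise $G'_\bot$ is well-defined. Since $G_\bot$ is a proper minor of $G$ we have $G_\bot \in \mathcal{G}$ and $|V(G_\bot)| < |V(G)|$. By Lemma~\ref{lem:iso:if:and:only:if:smaller:iso}, $(G_\bot, \lambda_\bot) \not\cong (G'_\bot, \lambda'_\bot)$, so the inductive hypothesis applied to $G_\bot$ yields that the $k^*$-dimensional WL-algorithm distinguishes $(G_\bot, \lambda_\bot)$ from $(G'_\bot, \lambda'_\bot)$. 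Lifting this distinction to $G$ versus $G'$ is then done through an identification analog of Corollary~\ref{cor:induced:on:bot:graphs:is:finer:3con}.

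The main obstacle is to verify that Lemmas~\ref{lem:one:component:hanging:sep:pairs} and~\ref{lem:distinct_separatingpairs}, and thereby Corollary~\ref{cor:induced:on:bot:graphs:is:finer:3con}, extend to the identification setting. The key observation is that these lemmas invoke the orbit-determination hypothesis only on the 3-connected components $G^{(S,K)}_\top$ of $G$, which are minors of $G$ and hence lie in $\mathcal{G}$. Since the identification hypothesis is strictly stronger than orbit determination — it also handles comparisons against arbitrary graphs, not just against graphs in $\mathcal{G}$ — the same arguments yield the required distinction from $G'$ regardless of whether $G'$'s 3-connected components lie in $\mathcal{G}$. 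Once this is verified, the induction closes and the bound $k^*$ on the Weisfeiler-Leman dimension of $G$ follows.
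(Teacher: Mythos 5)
Your proposal is correct and takes essentially the same route as the paper, whose own proof is the one-line remark that the argument of Theorem~\ref{thm:reduction3} (and its supporting chain through Sections~\ref{sect:2connected} and~\ref{sec:reduct:to:3:con}) carries over verbatim once ``distinguishes'' is replaced by ``identifies.'' You additionally make explicit why the substitution is legitimate: every graph on which the orbit-determination and identification hypotheses are actually invoked (the $3$-connected torsos~$G^{(S,K)}_\top$, the degree-reduced minors, and the various~$G_\bot$) is a minor of~$G$ and hence stays in~$\mathcal{G}$, so only the second graph~$G'$ needs to be handled by the stronger ``identifies'' assumption.
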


The proof follows almost verbatim the lines of the entire proof of Theorem~\ref{thm:reduction3} outlined in the last two sections replacing ``distinguishes'' with ``identifies''.

\section{Arc-colored~3-connected planar graphs}\label{sec:3:con}

Let~$G$ be a~3-connected planar graph. We show that typically we can individualize two vertices in~$G$ so that applying the~$1$-dimensional WL-algorithm yields a discrete graph. There are some~3-connected planar graphs for which this is not the case. However, we can precisely determine the collection of such exceptions. 

\begin{definition}
We call a graph~$G$ an \emph{exception} if~$G$ is a~3-connected planar graph in which there are no two vertices~$v,w$ in~$G$ such that~$\chi^1_{G_{(v,w)}}$ is the discrete coloring.
\end{definition}

Here and in the following we denote by~$G_{(v_1,v_2,\dots,v_t)}$ the colored graph obtained from the (uncolored) graph~$G$ by individualizing the vertices~$v_1,v_2,\dots,v_t$ in that order. More specifically, we let~$G_{(v_1,v_2,\dots,v_t)}$ be the colored graph~$(G,\lambda)$ with
\[\lambda(v)\coloneqq 
\begin{cases} i & \text{if~$v=v_i$} \\ 0 &\text{if~$v\notin \{v_1,\dots,v_t\}$}.\end{cases}\]
As before~$\chi^1_H$ denotes the stable coloring of the~1-dimensional WL-algorithm applied to the graph~$H$.

\begin{lemma}\label{lem:3:sing:on:face:implies:discrete}
Let~$G$ be a~$3$-connected planar graph and let~$v_1,v_2,v_3$ be vertices of~$G$. If~$v_1,v_2,v_3$ lie on a common face, then~$\chi^1_{G_{(v_1,v_2,v_3)}}$ is a discrete coloring.
\end{lemma}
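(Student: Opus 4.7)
My plan is to derive a contradiction from the assumption that the stable refinement $\chi := \chi^1_{G_{(v_1,v_2,v_3)}}$ is non-discrete, by combining an equitability/averaging calculation with the uniqueness part of Tutte's Spring Embedding Theorem. I first fix a planar embedding of $G$ whose outer face is the face $F$ containing $v_1,v_2,v_3$. Since $G$ is $3$-connected, the boundary of $F$ is a simple cycle $C$.

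A preliminary step is to show that every vertex of $C$ is a singleton under $\chi$. The three individualized vertices split $C$ into three arcs whose endpoint pairs $\{v_1,v_2\}$, $\{v_2,v_3\}$, $\{v_3,v_1\}$ are colored distinctly, and by tracing the color refinement one can propagate the individualization inward along each arc in successive rounds until each vertex of $C$ receives a unique color. Next, place the vertices of $C$ on a strictly convex polygon in $\mathbb{R}^2$ with $v_1,v_2,v_3$ pinned at three chosen corners, and apply Tutte's theorem to obtain the unique barycentric extension $\rho:V(G)\to\mathbb{R}^2$ satisfying $\rho(v)=\frac{1}{\deg(v)}\sum_{u\sim v}\rho(u)$ for every interior vertex $v$. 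Tutte's theorem guarantees that $\rho$ is a planar straight-line embedding, hence injective on vertices.

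For the averaging step, let $C_1,\ldots,C_r$ be the color classes of $\chi$, let $d_i$ be the (common) degree of any vertex in $C_i$, and let $k_{ij}$ be the (common) number of neighbors in $C_j$ of any vertex in $C_i$. Define $q(C_i):=\frac{1}{|C_i|}\sum_{v\in C_i}\rho(v)$ and set $\tilde\rho(v):=q(c(v))$, where $c(v)$ is the color of $v$ under $\chi$. Since every $C$-vertex is a singleton in $\chi$ by the preliminary step, we have $\tilde\rho|_{V(C)}=\rho|_{V(C)}$. For each color class $C_i$ consisting of interior vertices, summing the barycentric identity $d_i\rho(v)=\sum_{u\sim v}\rho(u)$ over $v\in C_i$ and using the double-counting identity $k_{ij}|C_i|=k_{ji}|C_j|$ yields $q(C_i)=\frac{1}{d_i}\sum_j k_{ij}q(C_j)$; hence $\tilde\rho$ also satisfies the barycentric equations at every interior vertex, with the same boundary values as $\rho$. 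Tutte's uniqueness therefore forces $\tilde\rho=\rho$ on $V(G)$. Whenever $c(v)=c(w)$ this gives $\rho(v)=q(c(v))=q(c(w))=\rho(w)$, and injectivity of $\rho$ yields $v=w$, contradicting non-discreteness of $\chi$.

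The main obstacle is the preliminary step of individualizing the boundary cycle $C$. Intuitively the three distinctly colored individualized vertices break every rotation and reflection symmetry of $C$, but color refinement is a purely local procedure, so one has to argue carefully that symmetries elsewhere in $G$ cannot accidentally conflate two cycle vertices through their interior neighborhoods. Once that step is secured, the remainder of the argument is a clean consequence of Tutte's theorem and the equitability-based averaging.
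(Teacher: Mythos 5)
Your route is genuinely different from the paper's. The paper runs the Tutte spring iteration starting from an initial map $\mu_0$ that collapses all non-individualized vertices to one point, and shows by induction on the round number that $\mu_i(v)\neq\mu_i(v')$ implies the $i$-th refinement round separates $v$ from $v'$; since some $\mu_i$ is injective, discreteness follows. You instead go straight to the fixed point: the stable coloring is an equitable partition, the double-counting identity $k_{ij}|C_i|=k_{ji}|C_j|$ shows that the map $\tilde\rho$ assigning to each vertex the barycenter of its color class again solves the barycentric equations with the same boundary data, and uniqueness of the solution forces $\tilde\rho=\rho$, so injectivity of $\rho$ collapses each color class to a single vertex. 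That averaging calculation is correct and is arguably a cleaner way to extract discreteness from the Tutte embedding than the paper's round-by-round comparison.

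The genuine gap, which you flag yourself, is the ``preliminary step'': you never prove that individualizing $v_1,v_2,v_3$ makes every vertex of the boundary cycle $C$ a singleton, and your informal ``propagate along the arcs'' reasoning treats $C$ as an isolated cycle while color refinement runs on all of $G$, so two boundary vertices could a priori be conflated through symmetric interior neighborhoods. There is no visibly simpler proof of that sub-claim than of the lemma itself, so as written the argument is stuck. Fortunately the step is unnecessary: use the variant of Tutte's theorem the paper invokes, which pins only $v_1,v_2,v_3$ at the corners of a triangle and lets every other vertex, including the remaining boundary vertices, relax to its barycentric position. This still yields an injective plane embedding, and uniqueness of the barycentric solution for this three-point Dirichlet problem follows from the discrete maximum principle on the connected graph $G$. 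Then $\tilde\rho$ agrees with $\rho$ at $v_1,v_2,v_3$ because those are singletons of $\chi$, your averaging identity shows $\tilde\rho$ satisfies the barycentric equation at every other vertex (every color class other than the three singletons consists entirely of non-pinned vertices), and uniqueness gives $\tilde\rho=\rho$. No claim about the rest of $C$ is ever needed.
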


\begin{proof}
We will use the Spring Embedding Theorem of Tutte~\cite{MR0158387} (see \cite[Section~12.3]{DBLP:reference/crc/Kobourov13}) which is as follows. Let~$v_1,v_2,v_3$ be vertices of a common face of~$G$.
Let~$\mu_0 \colon V(G)\setminus\{v_1,v_2,v_3\} \rightarrow \mathbb{R}^2$ be an arbitrary mapping that satisfies~$\mu_0(v_1) = (0,0)$,~$\mu_0(v_2) = (1,0)$ and~$\mu_0(v_3) = (0,1)$. 
For~$i\in  \mathbb{N}$ we define~$\mu_{i+1}$ recursively by setting
\[\mu_{i+1}(v) = \begin{cases} \frac{1}{d(v)} \sum_{w\in N(v)} \mu_i(w) & \text{if~$v\notin \{v_1,v_2,v_3\}$,} \\ \mu_{i}(v) &\text{otherwise.}\end{cases} \] 
Then Tutte's result says that this recursion converges to a barycentric planar embedding of~$G$, that is, an embedding in which every vertex not in~$\{v_1,v_2,v_3\}~$ is contained in the convex hull of its neighbors \cite{MR0158387,DBLP:reference/crc/Kobourov13}. 
This implies that after a finite number of steps the barycentric embedding is injective, i.e., no two vertices are mapped to the same image.

From the theorem, we will only require the fact that for some~$i$ the map~$\mu_i$ is injective.
Choose~$\mu_0$ with the requirements above and so that all vertices in~$V(G)\setminus\{v_1,v_2,v_3\}$ have the same image. For example set~$\mu_0(v) = (1,1)$ for~$v\in V(G)\setminus\{v_1,v_2,v_3\}$.

We argue the following statement by induction on~$i$.
For every two vertices~$v$ and~$v'$, it holds that
\[\mu_i(v)\neq \mu_i(v') \quad \Rightarrow \quad \prescript{}{i}\chi^1_{G_{(v_1,v_2,v_3)}}(v)\neq \prescript{}{i}\chi^1_{G_{(v_1,v_2,v_3)}}(v'),\]
where~$\prescript{}{i}\chi^1_{G_{(v_1,v_2,v_3)}}(x)$ denotes the color of vertex~$x$ after the~$i$-th iteration when the~1-dimensional WL algorithm is applied to~$G_{(v_1,v_2,v_3)}$.  

For~$i=0$ the statement holds by the definition of~$\mu_0$ and the fact that~$v_1$,~$v_2$ and~$v_3$ are singletons in~$G_{(v_1,v_2,v_3)}$.
For~$i>0$, if~$\sum_{w\in N(v)} \mu_i(w)  \neq \sum_{w'\in N(v')} \mu_i(w')$ then the multisets~$\{\!\!\{\mu_i(w) \mid w \in N(v) \}\!\!\}$ and~$\{\!\!\{\mu_i(w') \mid w' \in N(v')\} \!\!\}$ are different and thus, by induction, the multisets~$\{\!\!\{\prescript{}{i}\chi^1_{G_{(v_1,v_2,v_3)}}(w) \mid w \in N(v) \}\!\!\}$ and~$\{\!\!\{\prescript{}{i}\chi^1_{G_{(v_1,v_2,v_3)}}(w') \mid w' \in N(v')\} \!\!\}$ are different.

We conclude with the fact that for some~$i$ the map~$\mu_i$ is injective implying that~$\prescript{}{i}\chi^1_{G_{(v_1,v_2,v_3)}}$ and therefore also~$\chi^1_{G_{(v_1,v_2,v_3)}}$ are discrete colorings.
\end{proof}

From the lemma one can directly conclude that for~$k\geq 4$, the~$k$-dimensional WL-algorithm correctly determines the orbits of every~3-connected planar graph.

\begin{corollary}\label{cor:3:con:planar:have:WL:dim:4}
For~$k\geq 4$, the~$k$-dimensional Weisfeiler-Leman algorithm  correctly determines orbits of arc-colored~3-connected planar graphs.
\end{corollary}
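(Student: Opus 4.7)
The ``only if'' direction is immediate since any isomorphism respects the WL coloring. For the ``if'' direction, suppose $(G,\lambda)$ and $(G',\lambda')$ are arc-colored $3$-connected planar graphs and $\chi^k_G(s) = \chi^k_{G'}(s')$ for vertices $s \in V(G)$, $s' \in V(G')$, and $k \geq 4$. The plan is to use Lemma~\ref{lem:3:sing:on:face:implies:discrete} to pin down an isomorphism $(G,\lambda) \to (G',\lambda')$ mapping $s$ to $s'$ via three anchor vertices on a common face.

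First I would fix a face $f$ of $G$ incident to $s$ and choose two further vertices $v_2, v_3$ on $f$; such a face exists and has at least three vertices since $G$ is simple and $3$-connected. By the standard extension property of the WL-algorithm, which follows directly from the refinement recurrence, the equality $\chi^k_G(s) = \chi^k_{G'}(s')$ propagates to yield $v_2', v_3' \in V(G')$ with $\chi^k_G(s, v_2, v_3) = \chi^k_{G'}(s', v_2', v_3')$, and, for every $w \in V(G)$, a corresponding $w' \in V(G')$ with $\chi^k_G(s, v_2, v_3, w) = \chi^k_{G'}(s', v_2', v_3', w')$. It is a standard monotonicity property that for $k \geq t+1$ the coloring $\chi^k_G(u_1,\ldots,u_t,w)$ is at least as fine as the $1$-dimensional WL color of $w$ in $G_{(u_1,\ldots,u_t)}$; applied with $t=3$ to Lemma~\ref{lem:3:sing:on:face:implies:discrete}, this shows that the map $w \mapsto \chi^k_G(s,v_2,v_3,w)$ is injective on $V(G)$. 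Consequently matching colors defines a bijection $\varphi : V(G) \to V(G')$, and $\varphi(s)=s'$ is forced because the initial coloring $\prescript{}{0}\chi^k$ records which coordinate positions are filled by the same vertex.

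Finally, to verify that $\varphi$ is a color-preserving isomorphism, I would argue as follows. For arbitrary $w_1, w_2 \in V(G)$, letting $w_i' := \varphi(w_i)$, substitute $x = w_2$ in the refinement multiset for the matching pair $\chi^k_G(s, v_2, v_3, w_1) = \chi^k_{G'}(s', v_2', v_3', w_1')$. On the $G$ side this yields a $4$-tuple of colors whose last entry $\chi^k_G(s, v_2, v_3, w_2)$ is injective in $w_2$, so its partner on the $G'$ side must be the entry $\chi^k_{G'}(s', v_2', v_3', w_2')$, which in turn forces the companion entry to satisfy $\chi^k_G(s, v_2, w_2, w_1) = \chi^k_{G'}(s', v_2', w_2', w_1')$. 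The initial-coloring component $\prescript{}{0}\chi^k$ of this equality encodes the marked colored induced subgraph on $\{s, v_2, w_2, w_1\}$ together with all arc colors, so adjacency and the arc color of $\{w_1, w_2\}$ are preserved. The main technical obstacle is the careful tracking of coordinate coincidences through these refinement recurrences; the substantive geometric content has already been supplied by Tutte's theorem via Lemma~\ref{lem:3:sing:on:face:implies:discrete}.
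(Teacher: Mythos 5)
Your proposal is correct and follows essentially the same route as the paper: invoke Lemma~\ref{lem:3:sing:on:face:implies:discrete} to get three vertices on a common face (here with $v_1 = s$) whose individualization makes color refinement discrete, note that $\chi^4$ on $4$-tuples refines this, and then read off an isomorphism mapping $s$ to $s'$ from the resulting discrete color classes. The paper states this last step more tersely as ``then $G$ and $H$ are isomorphic via an isomorphism that maps $v_1$ to $v_1'$'', whereas you spell out the extension of tuple colors, the injectivity of the induced vertex-color map, and the preservation of adjacency and arc colors via the initial coloring $\prescript{}{0}\chi^k$; the substance is the same.
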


\begin{proof}
Let~$G$ be an arc-colored~3-connected planar graph. Then by Lemma~\ref{lem:3:sing:on:face:implies:discrete}, there are vertices~$v_1$,~$v_2$,~$v_3$ such that~$\chi^1_{G_{(v_1,v_2,v_3)}}$ is discrete (the additional arc coloring can only refine the stable coloring of the uncolored graph). This implies that the multiset~$C \coloneqq \{\!\!\{ \chi^4_{G}(v_1,v_2,v_3,x)\mid x\in V(G) \}\!\!\}$ contains~$n$ different colors. Let~$H$ be a second arc-colored~3-connected planar graph. If~$H$ contains vertices~$v_1',v_2',v_3'$ such that we have~$\{\!\!\{\chi^4_{H}(v'_1,v'_2,v'_3,x')\mid x'\in V(H) \}\!\!\} = C$, then~$G$ and~$H$ are isomorphic via an isomorphism that maps~$v_1$ to~$v'_1$. Otherwise the color~$\chi^4_{G}(v_1,v_2,v_3,v_3)$ is for all~$v'_1,v'_2,v'_3\in V(H)$ different from the color~$\chi^4_{H}(v'_1,v'_2,v'_3,v'_3)$ implying that~$G$ and~$H$ are distinguished and thus their sets of vertex colors are disjoint.
\end{proof}

This is also true for~$k=3$. The proof amounts to proving the following theorem. 

\begin{theorem}\label{WL:dim:of:3:con:graphs}
If~$G$ is an exception (i.e.,~$G$ is a~3-connected planar graph without a pair of vertices~$v,w$ such that~$\chi^1_{G_{(v,w)}}$ is discrete), then~$G$ is isomorphic to one of the graphs in Figure~\ref{fig:3:con:plan:gra:fix:3}.
\end{theorem}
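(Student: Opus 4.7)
The strategy is to extract rigid structural consequences from the exception hypothesis and then reduce to a finite symmetry classification. First, by the contrapositive of Lemma~\ref{lem:3:sing:on:face:implies:discrete}, if $G$ is an exception then for every choice of two vertices $v,w$ the stable coloring $\chi^1_{G_{(v,w)}}$ contains no face of $G$ carrying three pairwise distinctly colored vertices. Since $G$ is $3$-connected planar, the combinatorial face set is canonically defined (Whitney), so this is an intrinsic constraint on $G$ rather than on a chosen embedding.

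Second, I would exploit the automorphism-equivariance of the $1$-dimensional WL-algorithm: each color class of $\chi^1_{G_{(v,w)}}$ is a union of orbits of the pointwise stabilizer $\mathrm{Aut}(G)_{v,w}$. Combined with the face constraint from the first step, the vertices on any face through $v$ and $w$ must collapse into few color classes; letting $(v,w)$ range over all ordered pairs propagates this into a strong transitivity property of $\mathrm{Aut}(G)$. Invoking Mani's theorem, $\mathrm{Aut}(G)$ is realized as the isometry group of a convex polytope with $1$-skeleton $G$, hence is one of the finite subgroups of $\mathrm{O}(3)$, namely cyclic $C_k$, dihedral $D_k$, tetrahedral $T$, octahedral $O$, or icosahedral $I$, possibly extended by reflection.

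Third, I would perform a case analysis organized by symmetry type, using Euler's formula together with $3$-connectivity and the face constraint to enumerate all feasible orbit structures on vertices and on faces. The dihedral case produces the infinite family of $n$-bipyramids, while the remaining polyhedral groups yield the tetrahedron, cube, icosahedron, and octahedron together with their dual-based augmentations (triakis tetrahedron, tetrakis hexahedron, rhombic dodecahedron, triakis octahedron); these are exactly the graphs depicted in Figure~\ref{fig:3:con:plan:gra:fix:3}. For each listed graph I would then verify directly that it is an exception, by exhibiting for every ordered pair $(v,w)$ a non-identity automorphism of $G$ fixing both $v$ and $w$ pointwise, so that some color class of $\chi^1_{G_{(v,w)}}$ is forced to contain at least two vertices.

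The main obstacle is the second step: $1$-dimensional color refinement in general produces only a coarsening of the orbit partition of $\mathrm{Aut}(G)_{v,w}$, so the passage from face-local absence of three distinct singletons to global rigidity of $\mathrm{Aut}(G)$ is delicate. One must leverage the Tutte-style planarity rigidity already used in the proof of Lemma~\ref{lem:3:sing:on:face:implies:discrete} to show that color coincidences forced on faces propagate to genuine combinatorial automorphisms, so that the exception property translates into the algebraic statement that the fixing number of $G$ equals $3$. Once this reduction is in place, the enumeration above becomes a finite case analysis, plus a uniform argument handling the infinite bipyramid family.
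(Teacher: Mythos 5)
The central gap is in your second step: you need to pass from ``$\chi^1_{G_{(v,w)}}$ is not discrete for every pair $(v,w)$'' to ``$\mathrm{Aut}(G)_{v,w}$ is nontrivial for every pair $(v,w),$'' i.e., that $G$ has fixing number at least~$3$. The easy direction (a nontrivial pointwise stabilizer forces a non-discrete stable coloring) holds generically; the converse is false in general, since color refinement only refines down to \emph{some} coarsening of the orbit partition. You recognize this and propose to bridge it by ``Tutte-style planarity rigidity,'' but Lemma~\ref{lem:3:sing:on:face:implies:discrete} runs strictly in one direction: three singletons on a common face force discreteness. It gives no machinery for manufacturing an automorphism out of a failure of refinement, and no such argument is supplied. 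Worse, the equivalence ``exception $\iff$ fixing number $\geq 3$'' for $3$-connected planar graphs is established in the paper only \emph{as a consequence} of this very theorem (Corollary~\ref{cor:fix:3:char}), so using it as a stepping stone is circular.

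Even granting that reduction, the proposed classification by symmetry type in step three is not self-contained. Listing the finite subgroups of $\mathrm{O}(3)$ via Mani's theorem does not pin down $G$: many non-isomorphic $3$-connected planar graphs share the same polyhedral automorphism group (the truncated octahedron, for instance, has full octahedral symmetry yet is not among the exceptions), so a substantial further argument combining Euler's formula with the face constraint would be needed, and none is sketched. The paper instead proceeds by a direct combinatorial case analysis on the minimum degree (which is $3$, $4$, or $5$ by Euler's formula), extracting local constraints from the exception hypothesis through Lemma~\ref{lem:consecutive:neighbors:have:common:neighbor} (any two consecutive neighbors of a vertex $v$ share a further common neighbor of degree $d(v)$) and Lemma~\ref{lem:saturation:implies:face:cycle}, then propagating them globally; this bypasses both the fixing-number reformulation and any appeal to realizability of automorphism groups as isometry groups.
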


Before we present the lengthy proof of the theorem, we state its implications.

\begin{corollary}\label{cor:fix:3:char}
Let~$G$ be a~3-connected planar graph. The fixing number of~$G$ is at most~3 with equality attained if and only~$G$ is isomorphic to an exception (i.e., a graph depicted in Figure~\ref{fig:3:con:plan:gra:fix:3}).
\end{corollary}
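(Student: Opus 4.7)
The plan is to derive both halves of the claim by combining Lemma~\ref{lem:3:sing:on:face:implies:discrete} with Theorem~\ref{WL:dim:of:3:con:graphs} and finishing with a brief inspection of the graphs in Figure~\ref{fig:3:con:plan:gra:fix:3}.

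The uniform engine I would extract first is the following observation: whenever $\chi^1_{G_{(v_1,\dots,v_t)}}$ is the discrete coloring, any automorphism of $G$ that fixes $v_1,\dots,v_t$ pointwise preserves the initial coloring of $G_{(v_1,\dots,v_t)}$, hence preserves its stable $1$-dimensional WL refinement, and must therefore fix every vertex. With this in hand, for the upper bound I would pick any face of $G$ (which contains at least three vertices by $3$-connectivity) together with three vertices $v_1,v_2,v_3$ on it, apply Lemma~\ref{lem:3:sing:on:face:implies:discrete} to conclude that $\chi^1_{G_{(v_1,v_2,v_3)}}$ is discrete, and deduce that the pointwise stabilizer of $\{v_1,v_2,v_3\}$ in $\mathrm{Aut}(G)$ is trivial; the fixing number is therefore at most~$3$. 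For the ``only if'' direction of the characterization, if $G$ is not an exception then by definition there already exists a pair $(v,w)$ with $\chi^1_{G_{(v,w)}}$ discrete, and the same observation forces the fixing number down to at most~$2$.

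The ``if'' direction is where the real work lies: I need to show that every exception actually attains fixing number exactly~$3$. My plan is to apply Theorem~\ref{WL:dim:of:3:con:graphs} to reduce to the eight (families of) graphs in Figure~\ref{fig:3:con:plan:gra:fix:3} and then verify case-by-case that in each of them the pointwise stabilizer in $\mathrm{Aut}(G)$ of every pair of vertices is nontrivial. This is the main obstacle, although a modest one: each listed graph has a well-known, highly transitive automorphism group --- the dihedral/prismatic group of order~$4n$ for the $n$-bipyramid, and the full tetrahedral, octahedral, or icosahedral symmetry groups for the Platonic and Catalan-type solids --- leaving only a handful of orbits of ordered vertex pairs. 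For each such orbit I would either exhibit an explicit nontrivial symmetry fixing both vertices (e.g.\ the order-$n$ polar rotation fixing the two apices of the $n$-bipyramid, and suitable edge- or diagonal-reflections in the polyhedral cases), or equivalently observe via orbit-stabilizer that the pointwise stabilizer of a pair has order $|\mathrm{Aut}(G)|$ divided by the size of its orbit under $\mathrm{Aut}(G)$, which in each instance is at least~$2$. Combined with the upper bound from the previous paragraph, this establishes that the fixing number is exactly~$3$ on every exception and concludes the proof.
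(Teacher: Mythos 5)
Your proposal is correct and follows essentially the same route as the paper: observe that a discrete coloring after individualizing $\ell$ vertices bounds the fixing number by $\ell$, combine this with Theorem~\ref{WL:dim:of:3:con:graphs} to conclude that non-exceptions have fixing number at most~2, and verify the exceptional graphs case by case. You are only slightly more explicit than the paper, spelling out the upper bound of~3 (via Lemma~\ref{lem:3:sing:on:face:implies:discrete} applied to three vertices of a common face), which the paper treats as implicit.
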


\begin{proof}
If in a given graph there is a set of~$\ell$ vertices such that individualizing all vertices in the set and then applying the~1-dimensional WL-algorithm yields a discrete coloring, then Theorem~\ref{WL:dim:of:3:con:graphs}, a graph that is not an exception has fixing number at most~2. To conclude the corollary it thus suffices to check that all exceptions have fixing number~3.  
\end{proof}

\begin{corollary}\label{cor:3:con:planar:have:WL:dim:3}
For~$k\geq 3$, the~$k$-dimensional Weisfeiler-Leman algorithm correctly determines orbits of arc-colored~3-connected planar graphs.
\end{corollary}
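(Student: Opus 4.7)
The plan is to mirror the proof of Corollary~\ref{cor:3:con:planar:have:WL:dim:4}, replacing the three-vertex individualization provided by Lemma~\ref{lem:3:sing:on:face:implies:discrete} by a two-vertex individualization afforded by Theorem~\ref{WL:dim:of:3:con:graphs} in the generic case, and handling the (now classified) exception list separately. Let $(G, \lambda)$ be an arc-colored~3-connected planar graph and distinguish two cases according to whether the underlying uncolored graph is an exception.

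In the first case, when $G$ is not an exception, Theorem~\ref{WL:dim:of:3:con:graphs} supplies vertices $v_1, v_2 \in V(G)$ with $\chi^1_{G_{(v_1, v_2)}}$ discrete, and adding the arc coloring $\lambda$ leaves it discrete (since an additional coloring can only refine the stable coloring). A straightforward induction on the WL iteration count, using the definition of $\prescript{}{i+1}\chi^3_G(v_1, v_2, x)$ from the preliminaries, shows that with the first two coordinates fixed to $(v_1, v_2)$ the coloring $\chi^3_G(v_1, v_2, \cdot)$ refines $\chi^1_{(G,\lambda)_{(v_1, v_2)}}$. Consequently the multiset $\{\!\!\{\chi^3_G(v_1, v_2, x) \mid x \in V(G)\}\!\!\}$ consists of $|V(G)|$ pairwise distinct colors, and the rest of the argument runs verbatim as in the proof of Corollary~\ref{cor:3:con:planar:have:WL:dim:4}: for any second arc-colored~3-connected planar graph $(H, \lambda')$, either $H$ contains vertices $v'_1, v'_2$ making the analogous multiset agree (yielding the desired orbit-respecting isomorphism $G \to H$ mapping $v_1$ to $v'_1$) or the triples $(v_1, v_2, v_2)$ and $(v'_1, v'_2, v'_2)$ receive different colors for every choice of $v'_1, v'_2 \in V(H)$, so that all vertex colors of $G$ and $H$ are disjoint.

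In the second case, Theorem~\ref{WL:dim:of:3:con:graphs} pins down the underlying graph of $G$ as one of those displayed in Figure~\ref{fig:3:con:plan:gra:fix:3}: eight sporadic polyhedra (each on at most~$14$ vertices) and the infinite family of $n$-bipyramids. Each of these must be verified individually. For the eight sporadic polyhedra the check is finite, and the graphs are so small and so highly symmetric that already the~2-dimensional WL-algorithm partitions the vertex pairs into the orbits of the automorphism group; any arc coloring $\lambda$ can only refine this partition, so~3-dim WL determines orbits on all arc-colored versions. For the $n$-bipyramid, viewed as the graph join of an $n$-cycle with an edge-less pair, the two poles of degree $n$ are separated from the $n$ equator vertices of degree~$4$ already by~1-dim WL, and~2-dim WL further partitions the pairs into the (pole,pole), (pole,equator), and (equator,equator)-at-cyclic-distance-$d$ classes; this partition coincides with the orbits of the automorphism group and is only refined by $\lambda$.

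The main obstacle is the bipyramid case, which, being infinite, cannot be dispatched by finite enumeration; the solution is the uniform argument above, based on the explicit join decomposition of the bipyramid, which transparently exposes the pole/equator dichotomy and the cyclic symmetry that~2-dim WL detects. A secondary point to verify carefully is the simulation claim used in the first case (that~$\chi^3_G(v_1, v_2, \cdot)$ refines~$\chi^1_{(G,\lambda)_{(v_1, v_2)}}$); this is a standard induction on the iteration count using the WL definition in the preliminaries, and the arc-colored setting introduces no new complications because the initial coloring of 3-tuples already encodes all pairwise arc colors.
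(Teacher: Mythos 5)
Your proposal follows essentially the same route as the paper: for a non-exception, use Theorem~\ref{WL:dim:of:3:con:graphs} to find two vertices whose individualization followed by color refinement yields the discrete coloring, then mirror the argument of Corollary~\ref{cor:3:con:planar:have:WL:dim:4} with one fewer individualized vertex, and treat the exceptions separately. One caveat on the exception case: the paper disposes of it by asserting ``direct computations,'' whereas your sketch for the bipyramid family leans on the claim that the orbit-exactness of 2-dimensional WL on the uncolored graph persists after adding an arbitrary arc coloring $\lambda$ --- but ``$\lambda$ only refines the partition'' gives refinement of the WL coloring, not exactness relative to the (smaller) colored automorphism group, so this step would need its own argument; and for $n=4$ (the octahedron) the degree dichotomy between poles and equator vertices disappears and must be handled as one of the sporadic cases.
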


\begin{proof}
Suppose~$G$ is an arc-colored~3-connected planar graph that is not an exception. Then there are vertices~$v$ and~$w$ such that~$\chi^1_{G_{(v,w)}}$ is discrete. Analogously to the proof of Corollary~\ref{cor:3:con:planar:have:WL:dim:4}, we obtain that for any second arc-colored~3-connected planar graph~$H$, the~$3$-dimensional WL-algorithm only assigns equal colors to a vertex of~$G$ and a vertex of~$H$ if there is an isomorphism mapping the one to the other. 

With direct computations, one can check that each arc-colored exception is distinguished from all arc-colored~3-connected planar graphs and that on each arc-colored exception the stable coloring under the~$3$-dimensional Weisfeiler-Leman algorithm induces the orbit partition on the vertices.
\end{proof}

The task in the rest of this section is to show Theorem~\ref{WL:dim:of:3:con:graphs}. The proof of the theorem is a lot more involved than the proof of Corollary~\ref{cor:3:con:planar:have:WL:dim:3}. Thus, at the expense of increasing~$k$ by~1 from~3 to~4 in the main theorem (Theorem~\ref{thm:main}), the reader may skip the following lengthy exposition. 

We determine the exceptions in a case-by-case analysis with respect to the existence of vertices of certain degrees. 
The following two lemmas serve as general tools to deduce information about the structure of these input graphs. 

For a subgraph~$G'$ of a graph~$G$ we say that~$v\in G'$ is \emph{saturated in~$G'$ with respect to~$G$} if~$d_{G'}(v)= d_{G}(v)$. Thus, if a vertex is saturated, then its neighbors in~$G$ and in~$G'$ are the same.

\begin{lemma}\label{lem:saturation:implies:face:cycle}
Let~$G$ be a~3-connected planar graph.
\begin{enumerate}
\item\label{item:1:of:saturation:lemma} Let~$G'$ be a subgraph of~$G$ and suppose that the sequence~$v_1,\dots, v_t$ forms a face cycle in the planar embedding of~$G'$ induced by a planar embedding of~$G$. If in~$\{v_1,\dots,v_t\}$ there are at most two vertices that are not saturated in~$G'$ with respect to~$G$, then~$V(G) = \{v_1,\dots,v_t\}$ or~$v_1,\dots,v_t$ is a face cycle of~$G$.
\item\label{item:2:of:saturation:lemma}  If~$v_1,\dots,v_t$ is a~3-cycle or an induced~4-cycle of~$G$ that contains two vertices of degree~$3$ in~$G$, then~$V(G) = \{v_1,\dots,v_t\}$ or~$v_1,\dots,v_t$ is a face cycle of~$G$.
\end{enumerate}
\end{lemma}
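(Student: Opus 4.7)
Plan. The main strategy is to analyze, in the induced embedding, the face $F$ of $G'$ bounded by the cycle $C=(v_1,\ldots,v_t)$, and to show that either nothing lies inside $F$ in $G$ (so $F$ is also a face of $G$, making $C$ a face cycle of $G$) or else $V(G)=\{v_1,\ldots,v_t\}$. The crucial structural observation is that a saturated cycle vertex $v_i$ has all of its $G$-incident edges already in $G'$; since $F$ is a face of $G'$, no edge of $G$ incident to a saturated vertex can enter $F$. Letting $U\subseteq V(C)$ denote the set of non-saturated cycle vertices (so $|U|\le 2$), this means that any vertex of $V(G)\setminus V(G')$ lying in the interior of $F$ connects to the exterior of $F$ only through vertices of $U$, and that any chord of $C$ lying inside $F$ must have both endpoints in $U$.

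For Part~1 I will proceed in two steps. First I rule out vertices of $V(G)\setminus V(G')$ inside $F$: if such a vertex $w$ existed, then $U$ would separate $w$ from $V(C)\setminus U$ in $G$. Since $|V(C)\setminus U|\ge t-2\ge 1$ and $|U|\le 2$, this would be a separator of size at most two, contradicting the 3-connectedness of $G$. Next, I argue about chords of $C$ inside $F$: by the observation, such a chord must join two vertices of $U$, so at most one chord is possible (and only if $|U|=2$). If no chord appears, $F$ is a face of $G$ and we are done. In the remaining case there is a single chord $v_iv_j$ inside $F$ with $U=\{v_i,v_j\}$; here I will use 3-connectedness together with the planar structure induced by the two sub-faces the chord creates in $G$, combined with the fact that every other cycle vertex has all of its $G$-neighbours in $G'$, to exclude the existence of any vertex outside $F$ in $V(G)\setminus V(C)$, thereby concluding $V(G)=V(C)$.

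For Part~2 I will reduce to Part~1 by a careful choice of $G'$. Given a $3$-cycle or induced $4$-cycle $C=(v_1,\ldots,v_t)$ of $G$ containing two degree-$3$ vertices $v_i,v_j$, let $G'$ be the subgraph of $G$ consisting of $C$ together with all edges of $G$ incident to $v_i$ or $v_j$. Because $\deg_G(v_i)=\deg_G(v_j)=3$ and two of their incident edges are cycle edges, all of their $G$-edges lie in $G'$, so $v_i$ and $v_j$ are saturated in $G'$. Hence at most $t-2\le 2$ cycle vertices are non-saturated. It remains to verify that $C$ is a face cycle of $G'$ in the induced embedding; this requires that the single non-cycle edge incident to $v_i$ and the one incident to $v_j$ lie on the same side of $C$ in the planar embedding of $G$. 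Here I will use the cyclic edge order at $v_i,v_j$ together with the fact that $C$ is a $3$-cycle or \emph{induced} $4$-cycle of $G$ (ruling out chords in $G$) to force the two external edges to the same side of $C$. Part~1 then yields the conclusion.

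The main obstacles are the chord-inside-$F$ case in Part~1, where one needs to extract a contradiction or the equality $V(G)=V(C)$ beyond the immediate separator argument, and, in Part~2, the verification that the chosen $G'$ actually places $C$ as a face cycle in the induced embedding rather than only as a (non-facial) cycle with dangling edges on both sides.
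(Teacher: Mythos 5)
Your Part~1 is essentially the paper's argument: the at most two non-saturated cycle vertices $U$ are the only vertices of $C$ with edges into the face region $F$ of $G'$, so a vertex of $G$ inside $F$ would be separated from $V(C)\setminus U$ by $U$, contradicting $3$-connectivity. You explicitly isolate the residual case of a chord of $C$ inside $F$, which the paper's sketch elides; note though that your plan to close it by deducing $V(G)=V(C)$ cannot succeed in general, since a chord inside $F$ and additional vertices of $G$ outside $F$ can coexist.

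The genuine gap is in Part~2. Reducing to Part~1 via $G' = C$ plus the $G$-edges at $v_i,v_j$ stands or falls with the verification that $C$ bounds a face of $G'$, i.e.\ that the extra edge at $v_i$ and the extra edge at $v_j$ lie on the \emph{same} side of $C$. You claim this follows from ``the cyclic edge order at $v_i,v_j$ together with the fact that $C$ is a $3$-cycle or induced $4$-cycle (ruling out chords)''. It does not: the rotation system merely records on which side each extra edge falls, and chordlessness of $C$ says nothing about that choice. What actually forces both extra edges to one side is global $3$-connectivity, via a separator argument: if they lay on opposite sides of $C$, then, since the remaining cycle vertices are saturated and hence have no edges crossing $C$, those remaining cycle vertices would form a cut of size at most $2$ separating the two sides. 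But that cut-set argument is precisely the paper's direct proof of Part~2, so the proposed reduction does not save any work; it merely relocates all of Part~2 into the ``verification that the reduction applies''. Moreover, the forcing genuinely fails when $v_i,v_j$ are adjacent on an induced $4$-cycle: one can construct a $3$-connected planar graph containing such a cycle whose two extra edges go to opposite sides, with $C$ neither a face of $G$ nor all of $V(G)$. (The paper only ever invokes Part~2 with $v_i$ and $v_j$ non-adjacent, so this configuration never arises in its applications of the lemma.)
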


\begin{proof}[Proof sketch]
For Part~\ref{item:1:of:saturation:lemma}, suppose~$v_1,\dots, v_t$ forms a face cycle of~$G'$ and there are at most two vertices~$v_i$ and~$v_j$ that are not saturated in~$G'$ with respect to~$G$. 
Assume~$v_1,\dots, v_t$ is not a face cycle in~$G$. Then the vertices~$v_i$ and~$v_j$ are the only ones among~$v_1,\dots,v_t$ that have neighbors in~$G$ inside the region of the plane corresponding to the face cycle of~$G'$ formed by~$v_1,\dots,v_t$. Therefore, if ~$V(G)\neq \{v_1,...,v_t\}$, then~$\{v_i,v_j\}$ is a separator of~$G$, which contradicts the~3-connectivity.

For Part~\ref{item:2:of:saturation:lemma} consider an induced~4-cycle~$v_1,\dots,v_4$ which contains two vertices~$v_i$ and~$v_j$ that have degree~3 in~$G$. If~$v_1,\dots,v_4$ is not a face cycle of~$G$, then~$\{v_1,\dots,v_4\}\setminus \{v_i,v_j\}$ is a separator of size~2.
The argument for a~3-cycle~$v_1,\dots,v_3$ is similar.
\end{proof}

\begin{lemma}\label{lem:consecutive:neighbors:have:common:neighbor}
Let~$G$ be an exception and let~$v$ be a vertex of~$G$. Let~$u_1, \dots, u_{d(v)}$ be the cyclic ordering of the neighbors of~$v$ induced by a planar embedding of~$G$. Then every pair of vertices~$u_i, u_{i+1}$ has a common neighbor of degree~$d(v)$ other than~$v$.
\end{lemma}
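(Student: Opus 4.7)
The plan is to derive the statement from Lemma~\ref{lem:3:sing:on:face:implies:discrete} via a short pigeonhole argument on color classes of the~1-dimensional WL-algorithm. First I would note that, since $u_i$ and $u_{i+1}$ are consecutive in the cyclic ordering of the neighbors of $v$ induced by a planar embedding of $G$, the three vertices $v, u_i, u_{i+1}$ lie on a common face of $G$. Hence Lemma~\ref{lem:3:sing:on:face:implies:discrete} applies and $\chi^1_{G_{(v, u_i, u_{i+1})}}$ is the discrete coloring.

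Next I would set $\chi \coloneqq \chi^1_{G_{(u_i, u_{i+1})}}$ and consider the color class of $v$ in $\chi$. The key claim is that this class has size at least~$2$. Assume for contradiction that $\{v\}$ is a class of $\chi$. Since $\chi$ is a stable (hence equitable) partition that already distinguishes $v$, it refines the initial coloring of $G_{(v, u_i, u_{i+1})}$. Using the standard characterization of the~1-dimensional WL-algorithm as producing the coarsest equitable refinement of its initial coloring, together with monotonicity of this construction in the initial coloring, the partitions induced by $\chi$ and by $\chi^1_{G_{(v, u_i, u_{i+1})}}$ must coincide. But then $\chi$ is discrete, contradicting the assumption that $G$ is an exception.

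Hence there exists a vertex $w \neq v$ with $\chi(w) = \chi(v)$. Because $u_i$ and $u_{i+1}$ carry unique individualized colors and the color refinement procedure records the multiset of neighbor colors, any two vertices sharing a color class of $\chi$ agree on their adjacency to $u_i$ and to $u_{i+1}$ and on their degree. Since $v$ is a common neighbor of $u_i$ and $u_{i+1}$ of degree $d(v)$, the vertex $w$ is a common neighbor of $u_i$ and $u_{i+1}$ different from $v$ with $d(w) = d(v)$, which is exactly the vertex required by the lemma.

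The main obstacle, as I see it, is the passage from ``$v$ is a singleton in $\chi$'' to ``$\chi$ and $\chi^1_{G_{(v, u_i, u_{i+1})}}$ induce the same partition''; one has to invoke both the characterization of the stable~1-dim WL coloring as the coarsest equitable refinement of its initial coloring and the monotonicity of this stable-coloring construction in the initial coloring. Once this point is settled, the remainder of the argument is direct.
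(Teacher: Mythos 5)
Your proof is correct and follows essentially the same approach as the paper: assuming for contradiction that no such common neighbor exists, the vertex~$v$ becomes a singleton in~$\chi^1_{G_{(u_i,u_{i+1})}}$, so that~$u_i$, $v$, $u_{i+1}$ are three singletons on a common face, and Lemma~\ref{lem:3:sing:on:face:implies:discrete} then forces the coloring to be discrete, contradicting that~$G$ is an exception. The only difference is that you spell out explicitly the monotonicity and coarsest-equitable-refinement argument (needed to pass from a singleton~$v$ in the two-individualization coloring to the three-individualization discreteness of Lemma~\ref{lem:3:sing:on:face:implies:discrete}) that the paper leaves implicit.
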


\begin{proof}
If~$u_i$ and~$u_{i+1}$ do not have a common neighbor of degree~$d(v)$ other than~$v$, the coloring~$\chi^1_{G_{(u_i,u_{i+1})}}$ has the three singletons~$u_i, v, u_{i+1}$, which lie on a common face. Thus, by Lemma~\ref{lem:3:sing:on:face:implies:discrete}, the coloring~$\chi^1_{G_{(u_i,u_{i+1})}}$ is discrete, contradicting the assumption that~$G$ is an exception.
\end{proof}

Now we can start determining the structure of the exceptions.

\begin{lemma}\label{tri:con:2:individ:degree:5}
If~$G$ is an exception that has a vertex of degree~5, then it is isomorphic to the icosahedron or the bipyramid on~7 vertices.
\end{lemma}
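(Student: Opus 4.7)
The plan is to pick a degree-$5$ vertex $v$ of $G$, label its neighbors $u_1, \dots, u_5$ in the cyclic order from a fixed planar embedding, and apply Lemma~\ref{lem:consecutive:neighbors:have:common:neighbor} to obtain for each $i$ (mod~$5$) a vertex $w_i \neq v$ of degree~$5$ adjacent to both $u_i$ and $u_{i+1}$. The argument then splits into a local stage that pins down the structure at $v$, and a global stage that closes $G$ into either the $7$-vertex bipyramid or the icosahedron.

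First I would argue that $u_1 u_2 \cdots u_5 u_1$ is a $5$-cycle in $G$, so every face of $G$ at $v$ is a triangle. If some edge $u_i u_{i+1}$ were missing, then $v u_i w_i u_{i+1}$ would form an induced $4$-cycle of $G$, and the aim would be to individualize a suitable pair of vertices (for instance $v$ together with $w_i$, using the degree-$5$ constraint and the cyclic structure at both $v$ and $w_i$) so that after the $1$-dimensional WL-algorithm stabilizes, three singletons lie on a common face of $G$---contradicting the exception hypothesis via Lemma~\ref{lem:3:sing:on:face:implies:discrete}. With the pentagon around $v$ in hand, I would case-split on the multiset $\{w_1, \dots, w_5\}$. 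In the all-equal case, write $w$ for the common value: then $d(w) = 5$ and $N(w) = \{u_1, \dots, u_5\}$, so the subgraph induced by $\{v, w, u_1, \dots, u_5\}$ is the $5$-bipyramid in which $v$ and $w$ are saturated relative to $G$. Applying Lemma~\ref{lem:saturation:implies:face:cycle} to the triangular faces through $v$ or $w$ forces $V(G) = \{v, w, u_1, \dots, u_5\}$, so $G$ is the bipyramid on seven vertices. In the all-distinct case, each $w_i$ also has degree~$5$ and inherits the same local structure by Lemma~\ref{lem:consecutive:neighbors:have:common:neighbor}; propagating around, the $w_i$'s form a second $5$-cycle and a common neighbor $v'$ of all $w_i$ is forced, yielding $12$ vertices with the adjacencies of the icosahedron, and Lemma~\ref{lem:saturation:implies:face:cycle} again rules out further vertices. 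The remaining intermediate cases, in which some but not all $w_i$ coincide, will be excluded by showing that partial coincidence either collapses back to the all-equal case or creates a forbidden $4$-face configuration.

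The hard part will be the local stage: to rule out a missing edge $u_i u_{i+1}$ I need to choose the two individualized vertices so that the subsequent $1$-WL refinement breaks every residual symmetry on the $4$-cycle $v u_i w_i u_{i+1}$---na\"{\i}ve choices leave a reflection swapping $u_i$ with $u_{i+1}$ that $1$-WL cannot break, so the argument will have to exploit the degree-$5$ hypothesis at $v$ and $w_i$ more finely. A secondary difficulty is the closure step in the all-distinct case, where forcing the existence of the south pole $v'$ as a genuinely new degree-$5$ vertex (rather than a re-use of some $u_j$ or $w_j$) and showing that no further vertices are hidden will require a careful combination of planarity, $3$-connectivity, and repeated applications of Lemmas~\ref{lem:consecutive:neighbors:have:common:neighbor} and~\ref{lem:saturation:implies:face:cycle}.
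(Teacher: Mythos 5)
Your overall setup (fix a degree-$5$ vertex $v$, label its cyclic neighborhood $u_1,\dots,u_5$, invoke Lemma~\ref{lem:consecutive:neighbors:have:common:neighbor} to get degree-$5$ common neighbors $w_i$, then propagate) is in the same spirit as the paper's, and your final targets --- identify the bipyramid when the $w_i$ collapse to a single vertex, the icosahedron when they spread out into a second ring --- match the two cases the paper ultimately lands in. But the plan contains a genuine gap at exactly the point you flag as ``the hard part,'' and I do not think it can be patched in the form you describe.

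Your opening move is to prove \emph{first} that $u_1u_2\cdots u_5u_1$ is a $5$-cycle by ruling out a missing edge $u_iu_{i+1}$ through individualizing $v$ and $w_i$. You yourself note that the reflection fixing $\{v,w_i\}$ and swapping $u_i\leftrightarrow u_{i+1}$ is not broken by $1$-WL after that individualization; there is no reason the ``degree-$5$ hypothesis exploited more finely'' closes this, and you supply no mechanism that does. Moreover, restricting to a ``missing edge $u_iu_{i+1}$'' quietly assumes away diagonals $u_iu_{i+2}$ in $G[N]$, the possibility that $w_i$ itself lies in $N$, and the bipyramid case, where $N$ consists entirely of degree-$4$ vertices and no $u_i$ even has the degree of $v$ --- the paper explicitly singles that situation out as its Case~2 and only concludes that $G[N]$ is a $5$-cycle at the very end, from $3$-connectivity, not as a preliminary step. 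The paper avoids the step you are stuck on by proving a different local claim instead: every \emph{non-consecutive} pair $u_i,u_{i+2}$ also has a common neighbor of degree $5$ other than $v$ (its Claim~1, argued by inspecting $\chi^1_{G_{(u_1,u_3)}}$ and splitting on whether the $x_{i,i+1}$ lie inside $N$). That claim, together with a case split on whether some vertex of $N$ has degree $5$, is what drives the rest. Your secondary gap --- dismissing the ``some but not all $w_i$ coincide'' cases with a promise that they ``collapse or create a forbidden $4$-face'' --- is likewise unsubstantiated; in the paper that work is absorbed by Claim~1 plus planarity, which force all those candidate neighbors to coincide in Case~2 and to be distinct and cycle-forming in Case~1. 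As written, then, the proposal is a reasonable sketch of where the answer should land, but the two load-bearing steps (the induced $5$-cycle in $G[N]$ up front, and the elimination of partial coincidence among the $w_i$) are not proved and would need to be replaced by something closer to the paper's Claim~1 route.
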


\begin{proof}
To simplify notation, we let~$\chi_G \coloneqq \chi^1_G$ in the course of this proof. 
Let~$G$ be an exception with a vertex~$v$ of degree~5. Let~$N \coloneqq N(v)$ be the set of neighbors of~$v$ and let~$(u_1,\dots,u_5)$ be their circular ordering. For convenience we will take indices modulo~5. 

By Lemma~\ref{lem:consecutive:neighbors:have:common:neighbor}, every pair~$u_i,u_{i+1}$ has a common neighbor~$x_{i,i+1}$ of degree~$5$ other than~$v$. (We remark that the vertices~$x_{i,i+1}$ are not necessarily distinct or unique.)

\begin{claim}\label{claim:1:of:tricon:lemma}{For all~$i\in \{1,\dots,5\}$, every pair of vertices~$u_i,u_{i+2}$ has a common neighbor~$x_{i,i+2}$ of degree~$5$ other than~$v$.} 

\proof
To show the claim assume without loss of generality that~$u_1$ and~$u_3$ do not have a common neighbor of degree~$5$ other than~$v$. Consider the coloring~$\chi_{G_{(u_1,u_{3})}}$.
In this coloring the vertex~$v$ is a singleton.  It follows that~$N$ is the union of three color classes of the coloring~$\chi_{G_{(u_1,u_3)}}$, one of which is~$\{u_2, u_4, u_5\}$. (Otherwise, there are two consecutive vertices in~$N$ that are singletons and thus~$\chi_{G_{(u_1,u_{3})}}$ is discrete by Lemma~\ref{lem:3:sing:on:face:implies:discrete}.)

\renewcommand{\outerCase}{\Alph}
\begin{proofcases}
\proofcase[$x_{1,2} \in N$ or\,~$x_{2,3} \in N$] \label{case:A:of:claim:1:of:tricon:lemma}
We only consider the case that~$x_{1,2} \in N$, since the case~$x_{2,3} \in N$ is analogous.  We know that~$u_2$,~$u_4$ and~$u_5$ have the same degree in~$G[N]$ and thus the vertices~$u_1$ and~$u_3$ must have the same degree in~$G[N]$ since otherwise one of them would have a unique degree in~$G[N]$ (and then some~$u_i$ would be a singleton in~$\chi_{G_{(v)}}$ and thus~$\chi_{G_{(v,u_{i+1})}}$ would be discrete by Lemma~\ref{lem:3:sing:on:face:implies:discrete}).

Now suppose first~$x_{1,2} = u_4$ or~$x_{1,2} = u_5$. Either way, one vertex and thus all the vertices~$u_2,u_4,u_5$ are adjacent to~$u_1$, since they form a color class of~$\chi_{G_{(u_1,u_{3})}}$. It follows that the degree of~$u_1$ and hence of~$u_3$ in~$G[N]$ is at least~3. Thus,~$u_1$ and~$u_3$ have a common neighbor. It has degree~5 since~$x_{1,2}$ has degree~5. The case~$x_{1,2} = u_3$ can be treated analogously by simply swapping the roles of~$u_1$ and~$u_3$. 

\proofcase[$x_{1,2} \notin N$ and~$x_{2,3} \notin N$]\label{case:B:of:claim:1:of:tricon:lemma}
Since~$u_2$ and~$u_4$ have the same color, the vertices~$u_1$ and~$u_4$ have a common neighbor~$x\notin N$ of degree~$5$ other than~$v$. Similarly~$u_3$ and~$u_5$ have a common neighbor~$x'\notin N$ of degree~$5$ other than~$v$. By the planarity of~$G$, we have~$x= x'$, and thus~$x$ is a vertex of degree~5 adjacent to~$u_1$ and to~$u_3$.
\uend
\end{proofcases}
\renewcommand{\outerCase}{\arabic}
\end{claim}

Having proved the claim we now finish the proof of Lemma~\ref{tri:con:2:individ:degree:5}.
Again we distinguish cases.

\begin{proofcases}
\proofcase[${G[N]}$ is non-empty and some vertex in~$N$ has degree~5 in~$G$]\label{case:1:of:tricon:lemma}

Due to plana\-rity there can be at most one vertex in~$N$ that has degree~4 within~$G[N]$. Indeed, if there were two such vertices~$u_i$ and~$u_j$, then each of~$v$,~$u_i$,~$u_j$ would be adjacent to all vertices in~$N \backslash \{u_i, u_j\}$, yielding a~$K_{3,3}$ minor. However, no vertex in~$N$ can have a unique degree in~$G[N]$ and thus~$G[N]$ has a maximum degree of at most~3. Suppose that~$G[N]$ contains an edge~$\{u_i,u_{i+2}\}$ for some~$i$, say~$i=1$. Due to Claim~\ref{claim:1:of:tricon:lemma}, the vertex~$u_{2}$ must share a common neighbor with each of~$u_4$ and~$u_5$ apart from~$v$. By planarity these common neighbors are in~$\{u_1,u_3\}$. However the two common neighbors must be different since otherwise the respective vertex has degree~$4$ in~$G[N]$. Thus~$u_2$ is adjacent to~$u_1$ and~$u_3$ and both~$u_1$ and~$u_3$ have degree~3 in~$G[N]$. 

By the planarity of~$G$, the vertex~$u_2$ has degree~2 in~$G[N]$. Suppose~$u_4$ has degree~$3$ in~$G[N]$. Then it must be adjacent to~$u_1$,~$u_3$ and~$u_4$. Since~$u_5$ cannot have a unique degree in~$G[N]$, it must be adjacent to~$u_3$, i.e., it must have degree~$2$ in~$G[N]$. (By planarity, it cannot be adjacent to~$u_1$.) However, this would give~$u_3$ a degree of~4 in~$G[N]$, yielding a contradiction. The case that~$u_5$ has degree~$3$ in~$G[N]$ is symmetric. Thus,~$u_1$ and~$u_3$ are the only vertices of degree~3 in~$G[N]$.

Then~$u_2$ is the only vertex of~$N$ that is adjacent to two vertices of degree~3 in~$G[N]$, making~$u_2$ a singleton in~$\chi_{G_{(v)}}$ and yielding a contradiction.

We conclude that there is no edge of the form~$\{u_i,u_{i+2}\}$. This implies that in~$G[N]$ there is no vertex of degree~1. (Otherwise, we could individualize this vertex and~$v$, and together with their unique common neighbor, they would yield a discrete coloring by Lemma~\ref{lem:3:sing:on:face:implies:discrete}.)  Consequently, since~$G[N]$ is non-empty, we conclude that~$u_1,u_2,\dots,u_5$ is an induced cycle in~$G[N]$. Since some vertex in~$N$ has degree~$5$ and within~$G[N]$ the two neighbors of each vertex must have the same degree, we conclude that all vertices in~$N$ have degree~$5$.

Thus, if a vertex fulfills Case~\ref{case:1:of:tricon:lemma}, then all its neighbors also fulfill Case~\ref{case:1:of:tricon:lemma}. Therefore, being connected, the entire graph~$G$ must be a~5-regular triangulated planar graph since we have restricted ourselves to connected graphs.
Being~5-regular, the graph has~$m= 5/2 n$ edges, and being a triangulation the number of edges is~$m=3n-6$. We conclude that~$n=12$.
There is only one~5-regular graph on~12 vertices, the icosahedron (see for example \cite{DBLP:journals/jgaa/HasheminezhadMR11}).
 
\proofcase[${G[N]}$ is empty or no vertex in~$N$ has degree~$5$ in~$G$]\label{case:2:of:tricon:lemma} 

From Claim~$\ref{claim:1:of:tricon:lemma}$ we already know that every pair of vertices~$u_i, u_{i+2}$ has a common neighbor of degree~5 other than~$v$. In Case~\ref{case:2:of:tricon:lemma} this vertex cannot be in~$N$. Due to planarity, all these common neighbors for different~$i$ must be equal to a single vertex~$y$ adjacent to all vertices of~$N$.

Observing that~$y$ has degree~$5$, consider the subgraph~$H \coloneqq G[\{v,y,u_1,\dots,u_5\}]$ of~$G$. With the described circular ordering of the vertices in~$N$ there is only one planar drawing of~$H$ up to equivalence. In this drawing every face is a~4-cycle or a~$3$-cycle containing~$y$ and~$v$. Since~$y$ and~$v$ have degree~$5$ in~$G$, but they already have degree~5 in~$H$, they are saturated. 
Thus, since both~$y$ and~$v$ belong to each face of~$H$, by Lemma~\ref{lem:saturation:implies:face:cycle}, no interior of a face of the drawing of~$H$ contains vertices of~$G$. Therefore~$G = H =  G[\{v,y,u_1,\dots,u_5\}]$. Since~$G$ is~3-connected,~$G[N]$ cannot be empty. Similarly as in Case~\ref{case:1:of:tricon:lemma} we conclude that~$u_1,u_2,\dots,u_5$ is a cycle rendering~$G$ the bipyramid on~$7$ vertices.
\QED
\end{proofcases}
\end{proof}

\begin{lemma}\label{tri:con:2:individ:degree:3}
If~$G$ is an exception that has a vertex of degree~3, then it is isomorphic to a tetrahedron, a cube, a triangular bipyramid, a triakis tetrahedron, a rhombic dodecahedron, or a triakis octahedron.
\end{lemma}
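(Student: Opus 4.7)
The plan is to mirror the structure of the proof of Lemma~\ref{tri:con:2:individ:degree:5}. Let $v$ be a degree-$3$ vertex of $G$ with neighbors $u_1,u_2,u_3$ listed in cyclic order, and let $x_{i,i+1}$ (indices mod~$3$) denote a common degree-$3$ neighbor of $u_i$ and $u_{i+1}$ other than $v$; such a vertex exists by Lemma~\ref{lem:consecutive:neighbors:have:common:neighbor}. (With only three neighbors, every pair is consecutive.) The proof proceeds by a case analysis on the structure of $G[N(v)]$ and on the coincidences among the $x_{i,i+1}$, using Lemma~\ref{lem:saturation:implies:face:cycle} to close off the graph.

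I would first argue that one may assume that $G[N(v)]$ is either a triangle or edgeless and that $d(u_1)=d(u_2)=d(u_3)=d$. In every other configuration one of the $u_i$ is distinguished already in $\chi^1_{G_{(v)}}$, either by having a distinct degree inside $G[N(v)]$ or a distinct degree in $G$; individualizing $v$ together with an appropriately chosen second vertex then produces three singletons on a triangular face $v u_i u_{i+1}$, and Lemma~\ref{lem:3:sing:on:face:implies:discrete} forces the stable coloring to be discrete, contradicting the exception hypothesis. Lemma~\ref{tri:con:2:individ:degree:5} further rules out $d=5$, since the only exceptions with a degree-$5$ vertex contain no degree-$3$ vertex.

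In the triangle case ($G[N(v)]=K_3$), the three faces of $G$ at $v$ are the triangles $v u_i u_{i+1}$. If $d=3$, Lemma~\ref{lem:saturation:implies:face:cycle}(\ref{item:2:of:saturation:lemma}) applied to $u_1u_2u_3$ makes it a fourth face, yielding $G\cong K_4$ (the tetrahedron). If $d\geq 4$, all three $x_{i,i+1}$ lie outside $N(v)$. When they coincide in a single vertex $y$, the five vertices $\{v,y,u_1,u_2,u_3\}$ span a triangular bipyramid in which $v$ and $y$ are saturated; Lemma~\ref{lem:saturation:implies:face:cycle}(\ref{item:1:of:saturation:lemma}) promotes all six of its triangular faces to faces of $G$, and a face-count at each $u_i$ forces $d=4$, so $G$ is the triangular bipyramid. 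When the $x_{i,i+1}$ are pairwise distinct, re-applying Lemma~\ref{lem:consecutive:neighbors:have:common:neighbor} at each degree-$3$ vertex $x_{i,i+1}$ produces an additional common vertex opposite $v$ adjacent to all three $x_{i,i+1}$; iterated saturation then yields the triakis tetrahedron ($d=6$) or triakis octahedron ($d=8$), with the remaining values of $d$ excluded by the face-count.

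In the edgeless case, if all three $x_{i,i+1}$ coincide in a vertex $y$, then $\{v,y,u_1,u_2,u_3\}$ spans $K_{2,3}$ with $v,y$ saturated, Lemma~\ref{lem:saturation:implies:face:cycle}(\ref{item:1:of:saturation:lemma}) forces the three $4$-cycles $v u_i y u_{i+1}$ to be faces of $G$, and hence $G=K_{2,3}$, contradicting $3$-connectivity. Therefore the $x_{i,i+1}$ are pairwise distinct and each $u_i$ is adjacent to $v,x_{i-1,i},x_{i,i+1}$. If $d=3$ these exhaust the neighbors of $u_i$, and Lemma~\ref{lem:saturation:implies:face:cycle}(\ref{item:1:of:saturation:lemma}) applied to the $4$-cycles $v u_i x_{i,i+1} u_{i+1}$ and extended outward closes $G$ off to the cube; if $d=4$, each $u_i$ has one further neighbor, and a second application of Lemma~\ref{lem:consecutive:neighbors:have:common:neighbor} forces a dual apex vertex completing $G$ to the rhombic dodecahedron, while other values of $d$ are excluded by the face-count. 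The main obstacle will be the pairwise-distinct sub-cases, where the local picture has to be extended outward through several shells and the admissible degrees $d$ pinned down by combining the face-count at $u_i$ with the structural data propagated by Lemma~\ref{lem:consecutive:neighbors:have:common:neighbor}, analogously to the final face-counting step of the icosahedron case in the proof of Lemma~\ref{tri:con:2:individ:degree:5}.
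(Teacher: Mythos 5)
Your sketch is genuinely different from the paper's argument, and in its current form it has a real gap that is not merely a matter of filling in details. The paper does not attempt a shell-by-shell extension at all: after establishing (via Lemma~\ref{lem:saturation:implies:face:cycle}) that every face at a degree-$3$ vertex is a triangle (when $G[N]$ is a triangle) or a $4$-cycle (when $G[N]$ is empty), it propagates this to show that $G$ is globally a triangulation in which the degree-$3$ vertices sit inside faces of a $d/2$-regular triangulation, respectively a $(3,d)$-biregular quadrangulation, and then applies Euler's formula and degree-sum counting to deduce a concrete bound on $d$ (namely $d/2 \in \{2,3,4,5\}$ in the triangle case, and $4 = n_d(2-d/3)$ hence $d\le 5$ in the quadrangulation case). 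These bounds are the engine of the proof; once they are in hand, each surviving value of $d$ pins down $n$ and the graph by a short uniqueness argument. Your plan defers precisely this step: you write that ``the remaining values of $d$ [are] excluded by the face-count,'' but there is no actual face-count presented, and without a bound of the Euler type the outward shell-by-shell extension you propose has no a priori termination. This is not a cosmetic omission --- it is the step that makes the lemma true, and it is the step the paper's global counting supplies.

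There is a second, smaller, issue with your case structure. You branch on ``the $x_{i,i+1}$ all coincide'' versus ``the $x_{i,i+1}$ are pairwise distinct,'' but the $x_{i,i+1}$ produced by Lemma~\ref{lem:consecutive:neighbors:have:common:neighbor} are not canonical (there can be several degree-$d(v)$ common neighbors of $u_i,u_{i+1}$ other than $v$), and the mixed case in which exactly two coincide is never addressed. The paper avoids this altogether by not making any choice of $x_{i,i+1}$ load-bearing: it only uses their existence to classify the faces incident to $v$, and then the argument becomes global. If you wish to keep your local approach, you would need both to show the dichotomy is well-posed (or exhaustive with the mixed case handled) and to supply the Euler-formula bound on $d$; at that point you would essentially have reconstructed the paper's proof by a longer route. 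The portions of your sketch that are fully worked out (the $d=3$ sub-cases yielding the tetrahedron and the cube, and the coincident case yielding the triangular bipyramid) are correct and in harmony with the paper.
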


\begin{proof}
Assume~$G$ is a~3-connected planar graph with a vertex of degree~3 and that~$G$ does not have two vertices, individualization of which followed by an application of the~1-dimensional WL-algorithm produces the discrete partition.
 
Let~$v$ be a vertex of degree~3 in~$G$ and
let~$N\coloneqq N(v)= \{u_1,u_2,u_3\}$ be its neighbors. 
By Lemma~\ref{lem:3:sing:on:face:implies:discrete}, no vertex of~$N$ can have a unique degree in~$G$. Thus, the graph~$G[N]$ is either a triangle or empty.

By Lemma~\ref{lem:consecutive:neighbors:have:common:neighbor}, for~$i\in \{1,2,3\}$ (indices always taken modulo~3) the pair~$u_i,u_{i+1}$  has a common neighbor~$x_{i,i+1}$ of degree~$3$ other than~$v$. (As in the previous proof, these~$x_{i,i+1}$ are not necessarily distinct or unique.) If~$x_{i,i+1}\in N$ then~$x_{i,i+1}= u_{i+2}$ and~$N(\{u_{i+2},v\}) = \{u_i,u_{i+1}\}$. Thus, unless~$G$ only has~4 vertices (in which case it is the tetrahedron), the set~$\{u_i,u_{i+1}\}$ forms a~2-separator, which contradicts~$G$ being~3-connected. We can therefore assume for all~$i\in \{1,2,3\}$ that~$x_{i,i+1} \notin N$.

We claim that~$v, u_i,x_{i,i+1}, u_{i+1}$ is a face cycle of~$G$, or~$u_i$ and~$u_{i+1}$ are adjacent and both~$u_i,u_{i+1},x_{i,i+1}$ and~$u_i,u_{i+1},v$ are face cycles. The vertex~$x_{i,i+1}$ has degree~3, so the claim follows directly from Part~\ref{item:2:of:saturation:lemma} of Lemma~\ref{lem:saturation:implies:face:cycle}.

\begin{proofcases}
\proofcase[${G[N]}$ is empty]\label{case:1:of:tricon23:lemma}

In this case every face incident to~$v$ is a~4-cycle consisting of two non-adjacent vertices of degree~3 in~$G$ and two other vertices of degree~$d\geq 3$. By analogous arguments as for~$v$, for every~$i \in \{1,2,3\}$, the graph~$G[N(x_{i,i+1}]$ must be either empty or a triangle. It cannot be a triangle because the edge~$\{u_i,u_{i+1}\}$ is not present. So by successively replacing~$v$ by its opposite vertices in the incident~4-cycles, we conclude that~$G$ is a~$(3,d)$-bi-regular quadrangulation.  

\begin{proofcases}
\proofcase[$G$ is~3-regular]\label{case:1.1:of:case:1:of:tricon23:lemma} 
In this subcase~$G$ must be a~3-regular quadrangulation. Such a graph has~$m= 3n/2$ edges, since it is~3-regular, but also~$m= 2n-4$ edges since it is a quadrangulation.
Thus~$n = 8$. It is easy to verify that the only~3-regular planar quadrangulation on~8 vertices is the cube.

\proofcase[$G$ is not~3-regular]\label{case:1.2:of:case:1:of:tricon23:lemma} 
Then~$G$ is bipartite and bi-regular with degrees~$3$ and~$d$ say. Let~$n_3$ and~$n_d$ be the number of vertices of degree~$3$ and~$d$, respectively. Then~$3 n_3 = d n_d$ by double counting and~$d n_d = m = 2n-4$ since~$G$ is a quadrangulation. It follows that~$d n_d = 2(n_3+n_d)-4 = 2(d n_d/3+n_d)-4$ which gives that~$4 = n_d (2-d/3)$. Thus~$d\leq 5$. The case~$d=3$ is Case~\ref{case:1.1:of:case:1:of:tricon23:lemma} and~$d=5$ cannot occur according to Lemma~\ref{tri:con:2:individ:degree:5}.

We conclude that~$G$ is a~$(3,4)$-biregular quadrangulation. We have~$3n_3 = 4n_4$. Then~$m = 3 n_3 = 3 \cdot 4/7 n$. But also~$m = 2n-4$. Thus~$n= 14$. If we modify~$G$ by adding an edge between every pair of degree~$3$ vertices that share a common face and by removing all vertices that originally had degree~$4$, we obtain a new graph~$G'$ that is a~$3$-regular planar quadrangulation on~$8$ vertices.  The only such graph is the cube. Undoing the modification we obtain that~$G$ is the rhombic dodecahedron.
\end{proofcases}

\proofcase[${G[N]}$ is a triangle]\label{case:2:of:tricon23:lemma}

In this case every face in~$G$ is a~3-cycle consisting of a vertex of degree~3 and two other vertices of equal degree~$d$.

\begin{proofcases}
\proofcase[$G$ is~3-regular, i.e.,~$d=3$]\label{case:2.1:of:case:2:of:tricon23:lemma}
Then~$G$ is a~3-regular triangulation. Thus~$m = 3/2 n$ and~$m = 3n-6$, thus~$n= 4$. We conclude that~$G$ is the tetrahedron.

\proofcase[$G$ is not~3-regular]\label{case:2.2:of:case:2:of:tricon23:lemma}
Consider the graph~$G'$ obtained from~$G$ by removing all vertices of degree~$3$. The resulting graph is a planar~$d/2$-regular triangulation. Indeed, since~$v$ was chosen arbitrarily among all vertices of degree~$3$ and the vertices~$x_{i,i+1}$ have degree~$3$ as well, it is easy to see that the resulting graph is a triangulation. Moreover, one can verify that in~$G$ for every vertex of degree~$d$, in the cyclic ordering of its neighbors, the degrees~$3$ and~$d$ alternate. Thus, a deletion of the vertices of degree~3 halves the degrees of each of the other vertices.

It follows that~$d/2\in \{2,3,4,5\}$. For~$d/2 =2$ we obtain a~3-cycle. This implies that~$G$ is a triangular bipyramid. For~$d/2=3$ we conclude that~$G'$ is a tetrahedron. This implies that~$G$ is a triakis tetrahedron. For~$d/2=4$ we obtain an octahedron. This implies that~$G$ is the triakis octahedron. For~$d/2=5$ we would obtain an icosahedron. This would imply that~$G$ is the triakis icosahedron. However in this solid, there are two vertices~$u,u'$ (namely vertices of degree~3 of distance~2 that only have one common neighbor) such that~$\chi^1_{G_{(u,u')}}$ is discrete.
\QED
\end{proofcases}
\end{proofcases}
\end{proof}

\begin{lemma}\label{tri:con:2:individ:degree:4}
If~$G$ is an exception that has a vertex of degree~4, then it is isomorphic to a bipyramid, a rhombic dodecahedron or a tetrakis hexahedron.
\end{lemma}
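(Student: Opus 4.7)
Following the template of the proofs of Lemmas~\ref{tri:con:2:individ:degree:5} and~\ref{tri:con:2:individ:degree:3}, let $v$ be a vertex of degree~$4$ in the exception $G$ and let $N=\{u_1,u_2,u_3,u_4\}$ be its neighbors in the cyclic order induced by a fixed planar embedding (indices taken modulo~$4$). Lemma~\ref{lem:consecutive:neighbors:have:common:neighbor} supplies, for every consecutive pair $u_i,u_{i+1}$, another common neighbor $x_{i,i+1}\neq v$ of degree~$4$. By Lemma~\ref{lem:3:sing:on:face:implies:discrete} no $u_i$ is a singleton under $\chi^1_{G_{(v)}}$, so the vertices in $N$ split into at most two color classes. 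The argument is then driven by the shapes of the four faces of $G$ incident to $v$: each is either the triangle $v,u_i,u_{i+1}$ (when $u_iu_{i+1}\in E(G)$) or the $4$-cycle $v,u_i,x_{i,i+1},u_{i+1}$ (otherwise).

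The plan is to split into three cases according to how many of the edges $u_iu_{i+1}$ lie in $G[N]$. In the all-triangle case (four such edges), $G[N]$ contains the $4$-cycle $u_1u_2u_3u_4$ and saturation (Lemma~\ref{lem:saturation:implies:face:cycle}) propagates the local picture to a planar triangulation of the whole graph. Euler's formula together with an incidence count over the two admissible face types---either two degree-$4$ and one degree-$d$ vertex per face, or one degree-$4$ and two degree-$d$ vertices per face---restricts the first type to the $d$-bipyramids for $d\geq 3$ (containing the octahedron at $d=4$) and the second type to the tetrakis hexahedron at $d=6$, with $d=5$ eliminated by a non-integer count and $d\geq 7$ ruled out because two degree-$d$ vertices sharing a face would have to be adjacent while a neighbor census shows that every neighbor of a degree-$d$ vertex is a degree-$4$ vertex. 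The ``paired'' orbit structure $\{u_1,u_2\},\{u_3,u_4\}$ is excluded by individualizing $v$ and any $u_i$: the two neighbours of $u_i$ in $N$ then have distinct degrees, become distinct singletons after one refinement, and yield three singletons on the triangular face $v,u_i,u_{i+1}$---a discrete coloring by Lemma~\ref{lem:3:sing:on:face:implies:discrete}. In the all-$4$-cycle case (no such edges), every $x_{i,i+1}$ lies outside $N$ and saturation propagates to a biregular $(4,d)$-planar quadrangulation; Euler and double counting give $n_d=n_4+2$ and $4n_4=dn_d$, so only $d=3$ with $n_4=6$, $n_d=8$, $n=14$ is feasible, and an analysis analogous to Case~1.2 of the proof of Lemma~\ref{tri:con:2:individ:degree:3} identifies the graph as the rhombic dodecahedron.

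In the mixed case, where some but not all of the edges $u_iu_{i+1}$ lie in $G[N]$, I derive a contradiction. Assume without loss of generality that $u_1u_2\in E(G)$ while $u_4u_1\notin E(G)$, and individualize $v$ and $u_1$. If no $u_i$ with $i\neq 2$ is adjacent to $u_1$, then $u_2$ is the unique neighbour of $u_1$ in $N\setminus\{u_1\}$, becomes a singleton after one refinement, and together with $v,u_1$ produces three singletons on the triangular face $v,u_1,u_2$, yielding a discrete coloring via Lemma~\ref{lem:3:sing:on:face:implies:discrete} and contradicting the exception assumption. Hence the chord $u_1u_3$ lies in $E(G)$, and the symmetric argument starting from $u_2$ forces the chord $u_2u_4$ as well. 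But with the cyclic order $u_1,u_2,u_3,u_4$ around $v$, the cycle formed by $v,u_1,u_3$ together with the chord $u_1u_3$ separates $u_2$ from $u_4$ in the plane, so the edge $u_2u_4$ cannot be drawn without a crossing, contradicting planarity. The main obstacle is the triangulation case, where one has to simultaneously control the orbit structures on $N$, carry out the incidence count for both face flavours, and rule out spurious $(4,d)$-biregular triangulations for $d\geq 7$ in order to isolate precisely the bipyramid family together with the tetrakis hexahedron.
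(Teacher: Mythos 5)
Your high-level plan is genuinely different from the paper's. The paper first dispatches the $4$-regular case by Euler's formula (getting the octahedron), then observes (via Lemmas~\ref{tri:con:2:individ:degree:5} and~\ref{tri:con:2:individ:degree:3}) that in the remaining case every non-degree-$4$ vertex has degree at least~$6$, and then analyzes such a \emph{high-degree} vertex $v$ together with $N_4$, its degree-$4$ neighbors, splitting on whether $G[N_4]$ is empty and then on $|N_4|\in\{3,4,\geq 5\}$. You instead work directly at a degree-$4$ vertex $v$ and split on the shapes of the four faces at $v$. Your elimination of the mixed case via planarity is fine (it essentially reproves the paper's Observation~2), and the Euler count in the quadrangulation branch is correct.

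However, there are two genuine gaps. First, both remaining branches rest on an unproved \emph{propagation} step: from ``all four faces at $v$ are triangles (resp.\ quadrilaterals)'' you jump to ``$G$ is a triangulation (resp.\ a $(4,d)$-biregular quadrangulation) with one uniform face profile.'' Lemma~\ref{lem:saturation:implies:face:cycle} only tells you that a face of a \emph{saturated} subgraph is a face of $G$; it does not let you extend the local triangulation at $v$ past the high-degree neighbors, where the faces not incident to $v$ are a priori unconstrained, nor does it give you that the face profiles are uniform (at a degree-$4$ vertex the two pairs $\{u_1,u_3\}$ and $\{u_2,u_4\}$ can have two different degrees, so a priori you could have mixed face profiles around the graph). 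This is exactly the work the paper does with Claims~\ref{claim:faces:or:diagonals} and~\ref{claim:2:of:tricon:lemma} and the sub-cases on $|N_4|$.

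Second, your argument ruling out $d\geq 7$ in the ``one degree-$4$ and two degree-$d$ per face'' branch is internally inconsistent: in that face profile, every face has two degree-$d$ vertices, so every degree-$d$ vertex \emph{is} adjacent to other degree-$d$ vertices, contradicting your claimed ``census'' that all its neighbors have degree $4$. Moreover, the Euler count alone does not kill $d=7$: with $(8-d)n_d=16$ you get $n_d=16$, $n_4=14$, $n=30$, which is arithmetically consistent. So some additional structural argument is needed here (the paper's Case~\ref{case:2.1:of:case:2:of:tricon24:lemma} supplies one, using the nonexistence of a vertex adjacent to all of $u_1,u_2,u_3$ together with Lemma~\ref{lem:saturation:implies:face:cycle}), and that argument is the real content of this branch.
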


\begin{proof}
Assume~$G$ is an exception with a vertex of degree~4. First we make two observations that hold for every vertex~$u$ of degree~4 with neighbors~$v_1,v_2,v_3,v_4$ in cyclic order.

\emph{Observation~1:}\phantomsection\label{obs:1} \ It holds that~$d(v_1)= d(v_3)$ and~$d(v_2)= d(v_4)$. Otherwise we can individualize~$u$ and a neighbor~$v_i$ of~$u$ so that~$v_{i+1}$ or~$v_{i-1}$ refines to a singleton class, which yields a contradiction. 

\emph{Observation~2:}\phantomsection\label{obs:2} \ By a similar argument, the induced graph~$G[v_1,v_2,v_3,v_4]$ either is empty or forms an induced cycle such that~$v_i$ is adjacent to~$v_{i+1}$  for all~$i\in \{1,2,3,4\}$ (indices  taken modulo~4). 

Due to Observation~\hyperref[obs:2]{2}, if~$G$ is~$4$-regular then~$G$ is either a triangulation or every face is of size at least~4. In the first case~$G$ has~$n=6$ vertices (since~$m= 4n/2$ and~$m=3n-6$), and thus~$G$ is the octahedron (a bipyramid).

The second case cannot occur since a planar graph without triangle faces has at most~$2n-4$ edges, but a~4-regular graph has~$2n$ edges. 

We can thus assume that~$G$ has a vertex~$v$ of degree other than~$4$ that is adjacent to a vertex of degree~$4$.
By Lemmas~\ref{tri:con:2:individ:degree:5} and~\ref{tri:con:2:individ:degree:3} we can assume that~$G$ neither has a vertex of degree~$5$ nor a vertex of degree~$3$.
Thus, we can assume that~$v$ has degree at least~$6$. Let~$N\coloneqq  N(v)$ be the set of neighbors of~$v$ and let~$N_4\subseteq N$ be those neighbors of~$v$ that are of degree~4. Suppose~$(u_1,\dots,u_t)$ is the cyclic order of~$N_4$ induced by the cyclic order of~$N$.

\begin{proofcases}
\proofcase[${G[N_4]}$ is non-empty]\label{case:1:of:tricon24:lemma} 
First assume there are distinct~$u,u'\in  N_4$ that are adjacent, i.e.,~$G[N_4]$ is non-empty. According to Observation~\hyperref[obs:1]{1}, each vertex in~$N_4$ must have two neighbors in~$G$ of degree~$d(v)\neq 4$, and thus~$G[N_4]$ has maximum degree~2.
We argue that~$G[N_4]$ cannot have a vertex of degree~$1$. Indeed, if~$u_i$ were a vertex that has degree~1 in~$G[N_4]$ then~$u_i$ would have two neighbors of degree~$d(v)$ and two neighbors of degree~$4$, one of which adjacent to~$v$ and one of which non-adjacent to~$v$. This means that in~$\chi^1_{G_{(v,u_i)}}$ all neighbors of~$u_i$ are singletons, since the two neighbors of~$u_i$ of degree~4 disagree on being adjacent to~$v$ or not. This is impossible by Lemma~\ref{lem:3:sing:on:face:implies:discrete}. We conclude that~$G[N_4]$ has only vertices of degree~2 and~0. Since~$G[N_4]$ is non-empty, this implies that there is some cycle in~$G[N_4]$. Assume this cycle has an edge~$\{u_i,u_j\}$ connecting two vertices that are not consecutive in the cyclic order~$(u_1,\dots,u_t)$. 

Let~$u_i^+$ and~$u_i^-$ be the vertices of~$G[N]$ following and preceeding, respectively, the vertex~$u_i$ in the cyclic ordering of~$N$ (so they may or may not have degree~4). By Lemma~\ref{lem:consecutive:neighbors:have:common:neighbor}, there must be vertices~$x^+$ and~$x^-$ of degree~$d(v)\neq 4$ such that~$x^+$ is adjacent to both~$u_i$ and~$u_i^+$ and~$x^-$ is adjacent to both~$u_i$ and~$u_i^-$. However,~$x^+\neq x^-$ since the cycle~$v,u_i,u_j$ separates~$u_i^+$ from~$u_i^-$. We conclude that~$u_i$ has the following five neighbors: the vertex~$v$, two neighbors in~$N_4$ as well as~$x^+$ and~$x^-$. But~$u_i$ has degree~4, which gives a contradiction.
We conclude that~$u_i$ is adjacent to~$u_{i+1}$ for all~$i\in \{1,\dots,t\}$.

Finally, by Observation~\hyperref[obs:1]{1}, every pair of vertices~$\{u_i, u_{i+1}\}$ must have a common neighbor other than~$v$ of degree~$d(v)$ (otherwise all neighbors of~$u_i$ would be singletons in~$\chi^1_{G_{(u_i,u_{i+1})}}$).

Since~$u_i$ is adjacent to~$v,u_{i+1}$ and~$u_{i-1}$, it can only have one further neighbor. Thus, all these common neighbors for the pairs~$\{u_i, u_{i+1}\}$ for~$i\in \{1,\dots ,t\}$ are indeed the same vertex~$x$. Consider~$G' \coloneqq G[\{u_1,\dots,u_t,v,x\}]$. The graph~$G'$ is~3-connected and every face is a~$3$-cycle with two vertices saturated in~$G'$ with respect to~$G$. Part~\ref{item:1:of:saturation:lemma} of Lemma~\ref{lem:saturation:implies:face:cycle} implies that~$G=G'$. We conclude that~$G$ is isomorphic to a bipyramid.

\proofcase[${G[N_4]}$ is empty]\label{case:2:of:tricon24:lemma}
In the second case we now assume that the degree~4 neighbors of~$v$ form an independent set.

\begin{claim}{For every~$i\in \{1,\dots,t\}$ there is a vertex~$x_{i,i+1}$ such that either the sequence~$v,u_i,x_{i,i+1},u_{i+1}$ forms a face cycle or both~$v,u_i,x_{i,i+1}$  and~$v,u_{i+1},x_{i,i+1}$ are face cycles.}\label{claim:faces:or:diagonals}

\proof
Without loss of generality, we show the claim for~$i=2$.
We first argue that there are vertices~$u'\in N_4$ and~$x'$ such that~$v,u_2,x',u'$ is a~4-cycle. For this
let~$x'$ be the first neighbor following~$v$ in the cyclic ordering among the neighbors of~$u_2$. Then by Lemma \ref{lem:consecutive:neighbors:have:common:neighbor}, there must be a vertex~$u'$ other than~$u_2$ of degree~$4$ that is adjacent to~$x'$ and~$v$. Let~$u_j\in N_4$ be the first neighbor of~$v$ following~$u_2$ in the cyclic ordering of vertices in~$N_4$ that has a common neighbor with~$u_2$ other than~$v$.

 We choose a common neighbor~$x$ of~$u_2$ and~$u_j$ so that it is closest to~$v$: more precisely, for a common neighbor~$x\neq v$ of~$u_2$ and~$u_j$ consider the cycle~$u_2,v,u_j,x$. It bounds two areas, one of which contains the vertices of~$N$ that follow~$u_2$ but precede~$u_{j}$ while the other one contains the vertices that follow~$u_{j}$ but precede~$u_{2}$ in the cyclic order of~$N$. (One of these sets may be empty.) We choose~$x$ so that the first of these areas is minimal with respect to inclusion and we call this area~$A$.

We claim that the~$4$-cycle~$u_j, x,u_2, v$ is a face cycle of~$G$ or a face cycle after removing the diagonal~$\{x,v\}$ (i.e.,  the~3-cycles~$v,u_2,x$ and~$v,u_j,x$ are faces). (Note that the edge~$\{u_2,u_j\}$ cannot be present since~$G[N_4]$ is empty.) Indeed, suppose that~$u_2$ has a neighbor that lies within~$A$. Choose as such a neighbor~$z$ the vertex that precedes~$v$ in the cyclic ordering of neighbors of~$u_2$. Then for some vertex~$\bar u\in N_4$, the sequence~$(v,u_2,z, \bar u)$ forms a~$4$-cycle (Lemma~\ref{lem:3:sing:on:face:implies:discrete} applied to~$\chi^1_{G_{(v,z)}}$). Now, either~$\bar u$ precedes~$u_j$ in the cyclic ordering of~$N_4$ starting from~$u_2$ or~$\bar u= u_j$, but~$\{v,u_2,z, u_j\}$ bounds an area that is a proper subset of~$A$. Either case contradicts the minimal choices of~$u_j$ and~$x$.
 
Finally assume that~$u_j$ has a neighbor~$z$ that lies within~$A$. Choose~$z$ to be the vertex that follows~$v$ in the cyclic ordering of neighbors of~$u_j$. Since~$\{v,x\}$ is not a separator and~$u_2$ does not have a neighbor inside~$A$, there must be a path from~$z$ to~$u_2$ via~$u_j$ that leaves~$A$. Thus, the vertex~$u_j$ must have some neighbor outside of~$A$. Hence, since~$u_j$ has degree 4, the vertex~$z$ is the only neighbor of~$u_j$ inside~$A$.

By Lemma~\ref{lem:consecutive:neighbors:have:common:neighbor}, for some vertex~$\bar u\in N_4$, the sequence~$(v,u_j,z, \bar u)$ must be a~4-cycle. Consider the coloring~$\chi^1_{G_{(u_2,\bar u)}}$. In this coloring~$v$ is a singleton, since by the minimality of~$u_j$ it is the only common neighbor of~$u_2$ and~$\bar u$. Furthermore,~$u_j$ is the only vertex in~$N_4$ that has simultaneously a common neighbor with~$u_2$ other than~$v$ and a common neighbor with~$\bar u$ other than~$v$ and thus~$u_j$ is a singleton. Moreover,~$\bar u$ and~$u_j$ only have one common neighbor other than~$v$, namely~$z$, which is then a singleton as well. The singletons~$v,u_j$ and~$z$ lie on a common face by the choice of~$z$ which yields a contradiction with Lemma~\ref{lem:3:sing:on:face:implies:discrete}. 

Thus, neither~$u_2$ nor~$u_j$ have a neighbor inside the cycle. This implies that the cycle~$u_j, x,u_2, v$ is a face or it becomes a face after removing the possible diagonal~$\{x,v\}$, since otherwise the set~$\{x,v\}$ would be a separator. We conclude that~$u_j=u_3$ and that the vertex~$x_{2,3}\coloneqq x$ justifies the claim. 
\uend
\end{claim}

In the following we call an edge of~$G$ a \emph{diagonal} if neither of its endpoints has degree~4.

Overall the claim implies that at least every second neighbor of~$v$ is of degree~4 (in particular~$|N|\leq 2|N_4|$) and thus, being of degree at least~6, the vertex~$v$ has at least~3 neighbors of degree~4, i.e.,~$|N_4|\geq 3$.

Recall that~$u_1$ and~$u_3$ are the two vertices in~$N_4$ that are closest to~$u_2$ in the cyclic ordering of neighbors of degree~4 of~$v$.

We distinguish several cases according to the size of~$N_4$.

\begin{proofcases}
\proofcase[$|N_4| = 3$]\label{case:2.1:of:case:2:of:tricon24:lemma}  

Since~$v$ must have degree at least~6 and at least every second of its neighbor has degree~4, we conclude that~$v$ also has exactly three neighbors of degree larger than~$4$.
Thus, the degree of~$v$ is~6. Let~$(u_1,t_1,u_2,t_2,u_3,t_3)$ be the neighbors of~$v$ in cyclic order. Then by Claim~\ref{claim:faces:or:diagonals}, the vertices~$u_i,t_i,v$ form a face cycle for every~$i\in \{1,2,3\}$. Likewise~$t_i,u_{i+1},v$ is a face cycle. Thus, the graph induced by~$N \cup \{v\}$ is a wheel with~$7$ vertices. By Observation~\hyperref[obs:2]{2} at the start of the proof, the neighborhood of~$u_i$ forms a cycle. Moreover, by Observation~\hyperref[obs:1]{1}, the vertices~$t_1$,~$t_2$ and~$t_3$ all have the same degree~$d$. We argue that~$d=6$. Since all~$u_i$ have degree~4, if every pair of vertices~$u_i$,~$u_{i+1}$ had a common neighbor other than~$v$ and~$t_i$, this would have to be a single vertex adjacent to~$u_1$,~$u_2$ and~$u_3$. However, such a vertex~$x$ does not exist because otherwise~$G'\coloneqq G[u_1,u_2,u_3,t_1,t_2,t_3,x,v]$ would be a~3-connected graph in which every face is a triangle with a saturated vertex or a~4-cycle with two saturated vertices and Lemma~\ref{lem:saturation:implies:face:cycle} would imply~$G = G'$ which cannot be since~$G'$ has a vertex of degree~3, namely~$x$. 

Thus, some pair~$\{u_i,u_{i+1}\}$ does not have a common neighbor other than~$v$ or~$t_i$, which in turn implies~$d(v) = d(t_i)= 6$ using Lemma~\ref{lem:consecutive:neighbors:have:common:neighbor}. Therefore, all neighbors of~$v$ have degree~$4$ or~$6$. More strongly, we conclude that the vertex degrees that appear among the neighbors of~$t_i$ are the same as the vertex degrees appearing as neighbors of~$v$, including multiplicites. Thus, each~$t_i$ also has~$3$ neighbors of degree~4. 

By Observation~\hyperref[obs:1]{1} all vertices in~$N_4$ only have neighbors of degree~$6$ (since we already know that they have~3 neighbors of degree~6). By the same argument, these degree~6 vertices have themselves~$3$ neighbors of degree~$4$ and~$3$ neighbors of degree~$6$.

We conclude that the entire graph has only vertices of degree~$4$ and~$6$. Since every face incident with~$v$ is a triangle, and~$v$ is arbitrary among the degree~$6$ vertices, we conclude that~$G$ is a triangulation. Moreover, every vertex of degree~$4$ has exactly~4 neighbors of degree~$6$ and every vertex of degree~$6$ has exactly~3~neighbors of degree~$4$. We conclude that~$4n_4= 3n_6$ where~$n_i$ is the number of vertices of~$G$ of degree~$i$. Since~$n_4+n_6 = n =|G|$ we conclude that~$G$ has~$18n/7$ edges. Since~$G$ is a triangulation, it has~$3n-6$ edges. We conclude that~$18n/7 = 3n-6$ and thus~$G$ is a graph on~$14$ vertices. Furthermore, the graph~$G'$ induced by the vertices of degree~$6$ is a~3-regular graph on~8 vertices. All faces in the induced drawing of~$G'$ are~4-cycles.
We conclude that~$G'$ is the cube. (There is only one triangle-free planar~3-regular graph on~8 vertices.) Each face of~$G'$ contains, within~$G$, a vertex of degree~$4$. We conclude that~$G$ is the tetrakis hexahedron.

\proofcase[$|N_4| = 4$]\label{case:2.2:of:case:2:of:tricon24:lemma}  

In this case~$v$ must be incident to some diagonals, since otherwise~$v$ has degree~4. Thus, for every~$i \in \{1,2,3,4\}$ the diagonal~$\{v, x_{i,i+1}\}$ must be present, since by Observation~\hyperref[obs:2]{2} the neighbors of a degree~$4$ vertex form a cycle or an independent set. It follows that~$v$ must have degree~$8$. We obtain a graph that has at most as many vertices of degree~4 as it has vertices of degree at least~8. (Every vertex of degree at least~8 is adjacent to at least~4 vertices of degree~4 and every vertex of degree~4 is adjacent to~4 vertices of degree at least~8.) Double counting implies that the graph has at least~$3n$ edges, which is impossible for a planar graph. 

\proofcase[$|N_4| \geq 5$]\label{case:2.3:of:case:2:of:tricon24:lemma}

We first show the following claim.

\begin{claim}[resume]\label{claim:2:of:tricon:lemma}{For each~$i\in \{1,\dots,t\}$, the vertices~$u_i$ and~$u_{i+2}$ have a common neighbor other than~$v$.}

\proof We show the statement for~$i=1$.
Assuming otherwise implies that~$v$ is a singleton in~$\chi^1_{G_{(u_1,u_3)}}$. We first argue that in this coloring, the vertex~$u_2$ is also a singleton. Again, assume otherwise. Then there must be a vertex~$u\in N_4\setminus \{u_2\}$ that has the same color as~$u_2$. By Claim~\ref{claim:faces:or:diagonals}, for~$i\in \{1,2\}$ the vertices~$u_i$ and~$u_{i+1}$ have a common neighbor~$x_{i,i+1}$ so that~$u_i, v,u_{i+1},x_{i,i+1}$ form a face or a face after removing a diagonal.

Thus, the vertex~$u$ must have a neighbor~$y_{1,2}$ other than~$v$ that is adjacent to~$u_1$ and a neighbor~$y_{2,3}$ other than~$v$ that is adjacent to~$u_3$. Moreover, for~$i\in \{1,2\}$, the vertex~$x_{i,i+1}$ should have the same color as~$y_{i,i+1}$. See Figure~\ref{fig:non:equal:part}. (Note that~$y_{1,2} \neq y_{2,3}$ since we have assumed that~$u_1$ and~$u_3$ do not have a common neighbor other than~$v$.) 

\newcommand{\radiumofgraphs}{2.5}

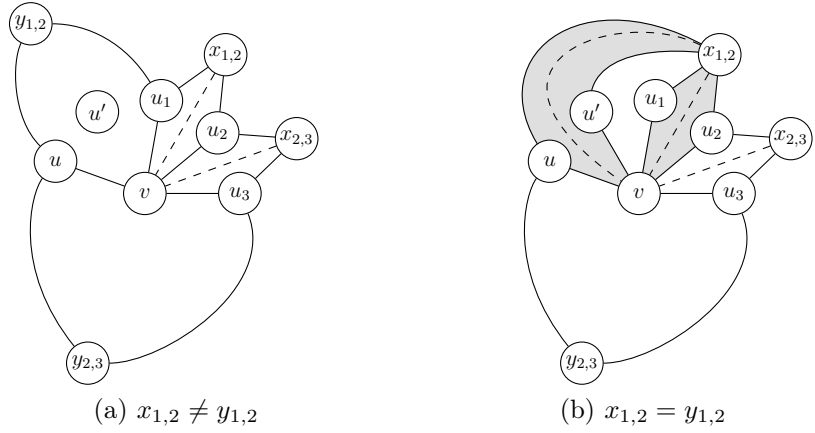
\begin{figure}
\centering
\tikzstyle{normalvertex}=[circle,fill=white,draw=black,minimum size=0.8cm,inner sep=0]
\subfloat[$x_{1,2}\neq y_{1,2}$\label{fig:non:equal:part}]{%
\begin{tikzpicture}[scale = 0.5,every node/.style={scale=0.7}]

 \node[style=normalvertex,label=right:{}] (v) at (0,0) {$v$};
 
 \node[style=normalvertex,label=right:{}] (u3) at (0:\radiumofgraphs cm) {$u_3$};
 \node[style=normalvertex,label=right:{}] (u2) at (40:\radiumofgraphs cm) {$u_2$};
 \node[style=normalvertex,label=right:{}] (u1) at (80:\radiumofgraphs cm) {$u_1$};
  
 \node[style=normalvertex,label=right:{}] (x23) at (20:1.7*\radiumofgraphs cm) {$x_{2,3}$};
 \node[style=normalvertex,label=right:{}] (x12) at (60:1.7*\radiumofgraphs cm) {$x_{1,2}$};
  
 \node[style=normalvertex,label=right:{}] (u) at (160:\radiumofgraphs cm) {$u$};
  \node[style=normalvertex,label=right:{}] (up) at (120:\radiumofgraphs cm) {$u'$};
    \node[style=normalvertex,label=right:{}] (w12) at (-3,4.5) {$y_{1,2}$};
  \node[style=normalvertex,label=right:{}] (w23) at (-1.5,-4.5) {$y_{2,3}$};
  
  \path
   (u1) edge (v) (u2) edge (v) (u3) edge (v)
   (x23) edge[dashed] (v) (x12) edge[dashed] (v)
   (x12) edge (u1) (x12) edge (u2)
   (x23) edge (u2) (x23) edge (u3)
   
   (u) edge (v)
      
  ;
\node at (-3,4.5) {};
\draw (w12) .. controls (-3.5,3.5) and (-3.5,2) .. (u);
\draw (w12) .. controls (-1.5,4.5) and (-0.5,4) .. (u1);
\draw (w23) .. controls (-3.5,-2) and (-3,0) .. (u);
\draw (w23) .. controls (0.5,-4.5) and (3.5,-2.5) .. (u3);
\end{tikzpicture}
}
\quad \quad \quad 
\tikzstyle{normalvertex}=[circle,fill=white,draw=black,minimum size=0.8cm,inner sep=0]
\subfloat[$x_{1,2} = y_{1,2}$\label{equal:part}]{%
\begin{tikzpicture}[scale = 0.5,every node/.style={scale=0.7}]

 \node[style=normalvertex,label=right:{}] (v) at (0,0) {$v$};
 
 \node[style=normalvertex,label=right:{}] (u3) at (0:\radiumofgraphs cm) {$u_3$};
 \node[style=normalvertex,label=right:{}] (u2) at (40:\radiumofgraphs cm) {$u_2$};
 \node[style=normalvertex,label=right:{}] (u1) at (80:\radiumofgraphs cm) {$u_1$};
  
 \node[style=normalvertex,label=right:{}] (x23) at (20:1.7*\radiumofgraphs cm) {$x_{2,3}$};
 \node[style=normalvertex,label=right:{}] (x12) at (60:1.7*\radiumofgraphs cm) {$x_{1,2}$};
  
 \node[style=normalvertex,label=right:{}] (u) at (160:\radiumofgraphs cm) {$u$};
 \node[style=normalvertex,label=right:{}] (up) at (120:\radiumofgraphs cm) {$u'$};
  \node[style=normalvertex,label=right:{}] (w23) at (-1.5,-4.5) {$y_{2,3}$};
  
  \path
   (u1) edge (v) (u2) edge (v) (u3) edge (v)
   (x23) edge[dashed] (v) (x12) edge[dashed] (v)
   (x12) edge (u1) (x12) edge (u2)
   (x23) edge (u2) (x23) edge (u3)
   
   (u) edge (v) (up) edge (v)
  
  ;

\begin{pgfonlayer}{edgelayer}

\filldraw[ lightgray, opacity=.5] (v.center) -- (u.center)   .. controls (-4.5,3.5) and (-1.5,6)  ..  (x12.center) .. controls (0,4) and (-1.5,3.5) .. (up.center) --(v.center)--cycle;
\filldraw[ lightgray, opacity=.5] (v.center) -- (u1.center)   -- (x12.center) -- (u2.center) --(v.center)--cycle;

\draw (x12.center) .. controls (-1.5,6) and (-4.5,3.5) .. (u.center);
\draw (x12.center) .. controls (0,4) and (-1.5,3.5) .. (up.center);
\draw (w23) .. controls (-3.5,-2) and (-3,0) .. (u);
\draw[dashed] (x12) .. controls (-0.5,5) and (-5,3.5) .. (v);
\draw (w23) .. controls (0.5,-4.5) and (3.5,-2.5) .. (u3);

\end{pgfonlayer}

\tikzset{
    vertex/.style = {
        circle,
        fill      = black,
        outer sep = 2pt,
        inner sep = 1pt,
    }
}
\end{tikzpicture}
}
\caption{An illustration of Case~\ref{case:2.3:of:case:2:of:tricon24:lemma} in the proof of Lemma~\ref{tri:con:2:individ:degree:4}. Some possible diagonals are shown as dashed lines. The highlighted regions in the drawing on the right cannot contain other vertices implying that~$\{x_{1,2},v\}$ separates~$u_1$ from~$u$.}
\label{fig:degree:4:proof:illus}
\end{figure}

Since it holds that~$|N_4| \geq 5$ we know that~$u_4\neq u_t$. Therefore~$u\neq u_{4}$ or~$u\neq u_{t}$. By symmetry we can assume the latter. (To see the symmetry recall that~$u_4$ is the successor of~$u_3$ in~$N_4$ and~$u_{t}$ is the predecessor of~$u_1$ in~$N_4$).
Note that the cycle~$v,u,y_{1,2},u_1$ separates~$u_t$ from~$u_2$.

Consider the area~$A'$ bounded by the cycle~$v,u,y_{1,2},u_1$ which contains~$u_t$. Inside the area lies~$u'\in N_4$, the vertex that follows~$u$ in the cyclic ordering of~$N_4$. 

Consider the set~$M \coloneqq N(u_3)\cup \{u_1,v,u_3\}$ and note that~$M$ is a union of color classes since we have assumed that~$u_1$ and~$u_3$ do not have a common neighbor other than~$v$. The cycle~$v,u,y_{1,2},u_1$ contains only two vertices of~$M$ and there are no vertices inside~$A'$ that are in~$M$. Thus, due to~3-connectivity, there must be a path from~$u$ or from~$y_{1,2}$ to~$u'$ such that no inner vertex of the path is in~$M$. Note that~$u'$ does not have the same color as~$u_2$ since it cannot share a neighbor other than~$v$ with~$u_3$.

Unless~$x_{1,2}= y_{1,2}$, every path from~$u_2$ and every path from~$x_{1,2}$ to a vertex in~$N_4\setminus \{u_1,u_2,u_3,u\}$ that does not have an inner vertex in~$M$  must pass through~$u$ or through~$y_{1,2}$. This implies however that~$u$ and~$u_2$ do not have the same color or that~$x_{1,2}$ and~$y_{1,2}$ do not have the same color contradicting our construction. We conclude that~$x_{1,2}=y_{1,2}$.

By Claim~\ref{claim:faces:or:diagonals}, the vertices~$u$ and~$u'$ have a common neighbor other than~$v$. If this neighbor were not~$x_{1,2}$ then~$u$ could not have the same color as~$u_2$, since there would be a path from~$u$ to~$u'$ avoiding inner vertices from~$M\cup \{y_{1,2}\}$ but there would be no such path from~$u$ to any vertex in~$N_4\setminus \{u_1,u_2,u_3,u\}$. Figure~\ref{equal:part} depicts this situation. We conclude that~$u'$ is a neighbor of~$y_{1,2}=x_{1,2}$. However this makes~$ \{y_{1,2},v\}$ a separator that separates~$u_1$ from~$u$, since by Claim~\ref{claim:faces:or:diagonals}, neither~$u'$ nor~$u_2$ have a neighbor both inside and outside of the cycle~$u',v,u_2,x_{1,2}$.

Up to this point, regarding our efforts to prove the claim, we have shown that~$u_2$ is a singleton. If~$x_{1,2}$ were not adjacent to~$v$ or~$x_{2,3}$ were not adjacent to~$v$, then~$\chi^1_{G_{(u_1,u_3)}}$ would have three singletons lying on the same face. So assume otherwise. We can also assume that neither~$x_{1,2}$ nor~$x_{2,3}$ is a singleton. But this cannot be, because the copies rendering~$x_{1,2}$ and~$x_{2,3}$ non-singletons (i.e., the other, necessarily existing vertices that have the same color as~$x_{1,2}$ or~$x_{2,3}$) should also be non-equal and adjacent to~$u_2$ which would force~$u_2$ to have degree at least~5. This proves the claim. 
\uend\end{claim} 

Since~$G[N_4]$ is empty, a common neighbor of~$u_i$ and~$u_{i+2}$ other than~$v$ must be equal to a common neighbor of~$u_{i+1}$ and~$u_{i+3}$ other than~$v$. This means that there is a vertex~$v'$ other than~$v$ adjacent to all vertices of~$N_4$.

Consider now the area bounded by the cycle~$v,u_i,v',u_{i+1}$ which does not contain~$u_{i+2}$. If both~$u_i$ and~$u_{i+1}$ have two neighbors inside this area, then they do not have any neighbors outside the area, making~$\{v,v'\}$ a separator.
If~$u_i$ only has one neighbor inside the area, then this neighbor coincides with~$x_{i,i+1}$ and hence must be adjacent to~$u_{i+1}$. We conclude that~$v,u_i,x_{i,i+1},u_{i+1}$ forms a face or becomes a face after removing the diagonal~$\{x_{i,i+1},v\}$. A symmetric argument can be applied with regard to~$v'$ in place of~$v$. It follows that inside the cycle~$v,u_i,v',u_{i+1}$ there is at most one vertex, namely~$x_{i,i+1}$.
However, we already ruled out vertices of degree~3 at the beginning of the proof, so~$x_{i,i+1}$ must be adjacent to all vertices of the cycle, and thus has degree~4. This cannot be since it would then be in~$N_4$.
We conclude that~$u_i$ does not have a neighbor inside the cycle~$v,u_i,v',u_{i+1}$. A similar observation holds for the cycle~$v,u_{i-1},v',u_i$.
 However,~$u_i$ must have some neighbors within the area bounded by the cycle~$v,u_i,v',u_{i+1}$ or within the area bounded by the cycle~$v,u_{i-1},v',u_i$ yielding the final contradiction.
 \QED
 \end{proofcases}
 \end{proofcases}
\end{proof}

\begin{proof}[Proof of Lemma~\ref{WL:dim:of:3:con:graphs}]
Recalling that every~3-connected planar graph has a vertex of degree~3,~4 or~5, the proof follows immediately by combining Lemma~\ref{lem:3:sing:on:face:implies:discrete} with the Lemmas~\ref{tri:con:2:individ:degree:5},~\ref{tri:con:2:individ:degree:3}, and~\ref{tri:con:2:individ:degree:4}.
\end{proof}

\bibliographystyle{plain}
\bibliography{planar_graphs}

 \end{document}